\documentclass[aoas,rotating,preprint,doublespacing]{imsart}

\RequirePackage{amsthm,amsmath,amsfonts,amssymb}
\RequirePackage[authoryear]{natbib}
\RequirePackage[colorlinks,citecolor=blue,urlcolor=blue]{hyperref}
\RequirePackage{graphicx}

\startlocaldefs

\usepackage{mathtools}
\usepackage{bm}

\usepackage{booktabs}
\usepackage{array}

\usepackage{algorithm}
\usepackage{algpseudocode}

\usepackage{tikz}
\usetikzlibrary{positioning,fit,calc,arrows.meta}

\usepackage{enumitem}

\usepackage{subcaption}
\usepackage{url}

\setpkgattr{abstract}{width}{\textwidth}

\DeclareMathAlphabet{\mathbbold}{U}{bbold}{m}{n}
\let\amsmathbb\mathbb
\renewcommand{\mathbb}[1]{\ifnum9<1\string#1 \mathbbold{#1}\else\amsmathbb{#1}\fi}

\newcommand{\mathbfcal}[1]{\bm{\mathcal{#1}}}
\newcommand{\indep}{\perp \!\!\! \perp}
\newcommand{\RNum}[1]{\uppercase\expandafter{\romannumeral #1\relax}}

\theoremstyle{plain}
\newtheorem{prop}{Proposition}

\theoremstyle{definition}
\newtheorem{assumption}{Assumption}

\makeatletter
\newenvironment{breakablealgorithm}
  {%
   \begin{center}
     \refstepcounter{algorithm}%
     \hrule height.8pt depth0pt \kern2pt%
     \renewcommand{\caption}[2][\relax]{%
       {\raggedright\textbf{\ALG@name~\thealgorithm} ##2\par}%
       \ifx\relax##1\relax
         \addcontentsline{loa}{algorithm}{\protect\numberline{\thealgorithm}##2}%
       \else
         \addcontentsline{loa}{algorithm}{\protect\numberline{\thealgorithm}##1}%
       \fi
       \kern2pt\hrule\kern2pt
     }
  }{%
     \kern2pt\hrule\relax%
   \end{center}
  }
\makeatother

\endlocaldefs

\begin{document}

\begin{frontmatter}

\title{Causal mediation analysis for zero-inflated longitudinal data
in the presence of treatment non-compliance and multiple mediators}
\runtitle{Causal mediation with non-compliance and multiple mediators}

\begin{aug}
\author[A]{\fnms{Saurabh}~\snm{Bhandari}\ead[label=e1]{sbhandari52@uchicago.edu}}
\author[B]{\fnms{Wreetabrata}~\snm{Kar}\ead[label=e2]{wxk5196@psu.edu}}
\author[C]{\fnms{Michael~J.}~\snm{Daniels}\ead[label=e3]{daniels@ufl.edu}}
\author[D]{\fnms{Bikram}~\snm{Karmakar}\ead[label=e4]{bkarmakar@wisc.edu}}

\address[A]{Department of Public Health Sciences,
University of Chicago\printead[presep={,\ }]{e1}}

\address[B]{Smeal College of Business, The Pennsylvania State University\printead[presep={,\ }]{e2}}

\address[C]{Department of Statistics,
University of Florida\printead[presep={,\ }]{e3}}

\address[D]{Department of Statistics,
University of Wisconsin--Madison\printead[presep={,\ }]{e4}}
\end{aug}

\begin{abstract}
Understanding whether a digital marketing campaign is effective is central to
designing effective customer engagement strategies. We analyze a large-scale,
longitudinal promotional email campaign conducted by a U.S.\ retailer to
evaluate how value-added incentives, such as free shipping, compare with
traditional price discounts in influencing customer purchasing behavior. The
analysis is complicated by non-compliance, due to not opening emails, multiple
longitudinal mediators, and zero-inflated mediators and purchase outcomes. To
address these challenges, we develop a Bayesian causal mediation framework
based on enriched Dirichlet process mixture models and estimate the causal
estimands using a scalable G-computation algorithm. We show that analyses
ignoring email-opening behavior substantially attenuate estimated effects.
Value-added incentives consistently outperform price discounts, yielding higher
estimated potential purchase amounts, with benefits accumulating over time. We
design an individualized sequential emailing strategy that optimizes expected
purchase count in the observed data.
\end{abstract}

\begin{keyword}
\kwd{Causal mediation analysis}
\kwd{Digital communication}
\kwd{EDPM}
\kwd{Longitudinal data}
\kwd{Multiple mediators}
\kwd{Treatment non-compliance}
\end{keyword}

\end{frontmatter}

\newpage

\section{Introduction}\label{sec:1}
Digital communications, such as targeted emails or text messages, are convenient mediums for companies to connect with both potential and existing customers in terms of cost, reach, and customization. The success of these communications is evaluated through the recipient's response, such as participating in a survey or making a purchase. Accurate estimation of the efficacy of such digital communications is necessary to inform effective future business strategies. However, these communication strategies may encounter `treatment non-compliance' in the form of recipients ignoring emails and messages. Further, when promotional emails are sent, and vary, over time, the recipient's response to the current email is affected by communications in the past, and may affect future communications and outcomes. These dynamics pose challenges for separately inferring the causal effect of the digital campaign from a non-compliance or time-varying-confounding bias. 

In this context, we evaluate a large observational dataset from the targeted promotional email marketing outreach of a U.S.~retailer who sent a sequence of three emails to their large customer base. Each email offers either a value-added incentive, offering free returns or free shipping, or a price discount incentive, offering a 5\% discount. We aim to address the following causal questions: How do value-added incentive emails compare with price incentive emails in their effects on customer purchasing behavior? How do these effects vary in the presence of individuals who either ignore the emails altogether or choose not to accept the discount even after opening them? What is the optimal emailing strategy for sending these two types of promotional emails over time so that it maximizes the final purchase count? What are the mediated effects of these promotional emails through intermediate customer actions? By answering these questions, our aim is to clarify how the promotion strategy causally affects product purchases and thereby guide more effective future digital communication strategies. %



There is substantial non-compliance in the dataset, with nearly five out of six customers failing to open all three promotional emails (Figure~\ref{fig:OpenerSubPopnAcrossTime} in the Supplementary Material). Unlike the traditional treatment-versus-control causal inference settings, where non-compliance typically corresponds to receiving the control condition instead of the assigned treatment, our study compares two active email strategies. Here, compliance refers to engaging with the assigned promotion by opening the email and following its offer link. Accordingly, customers who do not comply are viewed simply as not opening (and therefore not following) the assigned promotional email. As a result, a direct comparison of the two email strategies may reflect differences in customers' propensity to engage with emails rather than differences attributable to the email content itself. We adopt a principal stratification framework and classify customers into latent subgroups based on their potential email-opening behavior: active customers who would open both email types, value-attentive customers who would open only value-added emails, price-attentive customers who would open only price-discount emails, and non-active customers who would open neither. We then estimate causal effects within these principal strata and characterize the assumptions required for their identification.

In addition, we want to infer whether the longitudinal intermediate variables in the data mediate the relationship between messages sent to the customer and the final number of purchases resulting from these emails.  We hypothesize that receiving promotional emails prompts customers to open the email more promptly, motivating them to make a purchase and ultimately increasing the number of purchased items. Therefore, we consider the two variables, customers' time since the last email was opened and the time since the last purchase, as the potential time-varying mediators.%

A common challenge in analyses of promotions is that effects are generally not substantial, so a large dataset is required to infer a significant effect. Further, a notable challenge in our dataset, likely common in similar applications, is that the number of purchases is highly right-skewed, with most customers either not making any purchases or making only a few purchases. Similar non-normal structures are also seen in our mediator variables. We carefully overcome these empirical challenges using a hurdle model in a Bayesian Enriched Dirichlet Process framework and an efficient implementation of posterior computation and estimation. Finally, we report, in Section \ref{sec:5}, robust, statistically significant direct and mediated indirect effects.


\subsection{Summary of our contributions}\label{sec:11} 
 Our contributions are summarized below: 
\begin{enumerate}
    \item We provide an inferential framework to evaluate the causal effect in a promotional email campaign in the presence of multiple intermediate variables.  To do this, we define principal interventional effects based on static interventions on the email-type and randomized interventions on the mediator variables (formalized in Section \ref{sec:3.2}). We clarify the required ignorability assumptions of the treatment assignment, mediator value, and email opening behavior that are required to identify these causal effects. Some assumptions, e.g., ignorable treatment assignment, are justified in our study by how the data are generated. We conduct sensitivity analyses to assess robustness to potential violations of the other assumptions.

    \item  We propose a semi-parametric modeling framework using enriched Dirichlet process mixture (EDPM) models \citep{wade2011enriched, wade2014improving} for jointly modeling the longitudinal treatments, covariates, mediators, and the outcome. EDPM models form a rich class of models that can flexibly model multivariate dependence while allowing for tractable Bayesian posterior sampling.
    The added challenges to our problem are longitudinal structures, zero-inflated variables that are also right-skewed, and multiple intermediate variables. 
    We customize our EDPM model to incorporate these data structures. 

    \item Through our analysis, we clarify that information on email opening should be incorporated in the inference for the causal effect of the promotional email. We show that the estimated effects are substantially attenuated when email opening information is omitted. Our analysis builds four latent strata of customers based on whether and when they would open the promotional email. Since a promotional email becomes effective only when the email is opened, we also evaluate if customer behavior were modulated so that they were forced to always open the email, the expected purchase count would increase. 

    \item We provide an individualized sequential emailing strategy that optimizes the expected purchase count in the observed data. This strategy comprises simple logistic models at each time point, making them easy to implement. A practical benefit of this logistic assignment model is its interpretability for stakeholders and transparency when presented to regulators. We show empirically  that this individualized email marketing strategy improves outcomes across all latent groups compared to any global strategy. 
    
    \item A key to our analysis of 18{,}571 customers is a new algorithm for estimating the direct and indirect effects of an exposure on the outcome using the G-formula \citep{robins1986new}. Nonparametric identification results for interventional effects in the presence of multiple time-varying mediators have been established by \cite{tai2023causal} and \cite{diaz2023efficient}. Our contribution differs primarily in the estimation approach. Whereas \cite{tai2023causal} rely on parametric models that require correct model specification and \cite{diaz2023efficient} develop estimators based on semiparametric efficiency theory, we adopt a flexible Bayesian framework that accommodates complex data structures while providing a unified approach for both point estimation and uncertainty quantification of interventional effects. 
\end{enumerate}

\subsection{Organization of the paper}\label{sec:12}
 This article is structured as follows. In Section \ref{sec:2}, we introduce our case study and dataset. Section \ref{sec:3} describes our causal framework. In Section \ref{sec:4}, we present our proposed semi-parametric model specification for the observed data and provide details on parameter estimation. Sections \ref{sec:5} and \ref{sec:6} provide results from the real-data analysis and findings from simulation studies, respectively. We conclude with a brief discussion in Section \ref{sec:7}. 

\section{Description of the data and literature review}\label{sec:2}

\subsection{Data}\label{sec:21}

Our case study involves targeted e-mail promotions employed by a U.S. retailer for digital communication with customers. There are $18{,}571$ unique customers, with data recorded for each subject at three time periods.  These communications were targeted based on various demographic and compliance-related factors. Customer demographics in our analysis include gender and age. 
In the data, there are $16{,}239$ female, $482$ male, and $1{,}850$ gender unlabeled customers. 
For our case study, we redefine gender as a binary variable that equals one for subjects identifying as female and zero otherwise.  Age is recorded in five categories: under 18 years, 18--35 years, 36--49 years, 50--64 years, and 65 years or older. We recode age as a binary variable, classifying customers aged 50 years or older as older customers ($=1$) and those younger than 50 years as younger customers ($=0$). The resulting sample comprises $15{,}477$ older customers and $3{,}094$ younger customers.

For each recipient, we consider the following time-varying covariates at times $t>1$: promotional email (value-added incentive versus price discount), email-open status in the earlier time period, and the number of days since the last email was opened and the number of days since the last purchase. The final outcome of interest is the subsequent number of items purchased, that is, the order count, after the third email is sent. The longitudinal variables, "days since the last email was opened" and "days since the last purchase," serve as intermediate variables that may mediate the causal relationship between receiving promotional emails and resulting purchases.

A challenge in analyzing the dataset is the zero inflation and right-skewed distributions of the longitudinal mediators and the final outcome. A significant number of subjects do not make a purchase, while many purchase again without waiting. Thus, their observed mediators and outcome values are zero. Figure~\ref{Fig1} in the Supplementary Material illustrates this pattern in the data, showing a spike at zero for the variables "Days since email opened" and "Order count." "Days since last purchase" also exhibits a significant spike at zero at each time $t = 1,2,3$. The semi-parametric hurdle models proposed in this work are designed to accommodate the excessive number of zeros. In our context, hurdle models can be viewed as piecewise regression models (formalized in Section \ref{sec:4}).

The goal is to compare the two types of email promotions that involve sending either a value incentive or a price incentive email to make a purchase. Unlike in a treatment-to-control comparison or a drug-to-placebo comparison, here we do not have a control or an untreated population. This distinction is relevant in our context, as we also know that the email has an impact only when it is opened. Not opening the email is akin to not complying with the treatment, but it does not put the individual in the complementary group. Thus, the results should be interpreted accordingly. 
Figure~\ref{fig:OpenerSubPopnAcrossTime} in the Supplementary Material shows the empirical distribution of customers who open the email when received in each individual period or over multiple periods. At best, roughly 41\% of customers ($7{,}650$ out of $18{,}571$) open the email sent in period 2, while at the worst, 19\% ($3{,}514$ out of $18{,}571$) open all three emails. In a separate figure in the results section, we contrast the email opening patterns across the two treatment groups. 


\subsection{Review of causal mediation analysis under treatment non-compliance}\label{sec:22} 

Causal mediation analysis focuses on developing statistical methods to evaluate the role of intermediate covariates, i.e., mediators, on the causal pathway between the treatment and the outcome. In longitudinal studies, mediation analysis is particularly challenging because treatments, mediators, and confounders evolve over time and may influence one another. Our case study further involves multiple mediators and treatment non-compliance. 

Statistical methods for mediation analysis with multiple mediators are relatively limited, especially in longitudinal settings. When multiple mediators are measured concurrently, their causal ordering is often unclear. A common approach is therefore to assess their joint mediating effect without imposing an ordering. \cite{vanderweele2014mediation} introduced joint mediator effects and \cite{park2018causal} incorporated treatment non-compliance into the setting. These developments focus on cross-sectional data, whereas we work with longitudinal data.

Treatment non-compliance introduces additional complications because treatment receipt is no longer independent of unobserved characteristics. Although instrumental-variable approaches based on intent-to-treat (ITT) effects and local average treatment effects (LATE) \citep{angrist1995identification,angrist1996identification} have been proposed for mediation analysis in treatment-to-control comparison studies \citep{yamamoto2013identification,cheng2023identification}, they rely on monotonicity and exclusion restriction assumptions that are often difficult to justify.

The intent-to-treat estimates are less relevant to our study, as the goal is to design an effective marketing strategy that customizes the email to increase sales and also improve customer engagement. We observe noncompliance, e.g., not opening the promotional email for both types of emails. Thus, a LATE estimate is also not appropriate for the study. Instead, we consider the causal estimands within the different principal strata of treatment compliance, i.e., principal causal effects (PCEs) \citep{frangakis1999addressing}. At each time, there are four possible compliance classes defining the principal strata: i) opening neither type of email, ii) opening only value-added incentive or iii) price incentive email, and iv) opening either type of email. The principal stratification of the customer is latent as it depends on a customer's counterfactual email opening behavior.

In a cross-sectional data setting, PCE quantifies the total effect of treatment assignment on the outcome within each principal stratum at the time of study entry. 
In extending these definitions of principal causal estimands to a longitudinal setting for our study, a difficulty is that the number of principal strata grows exponentially over time, with $4^{t}$ principal strata at time $t$. In Section \ref{sec:3.4} we modify the standard definition of PCE in our study by only considering the principal stratum at time 1. This is a reasonable compromise as principal strata encode the customer behavior in whether they open the different emails, which typically would not change drastically over time because of behavioral stickiness \citep{wood2009habitual}.

In longitudinal settings with treatment non-compliance, treatment receipt status (different from treatment assignment) acts as a post-treatment confounder of future treatments, mediators, and outcomes, creating a recanting-witness problem \citep{avin2005identifiability}. As a result, natural direct and indirect effects generally require untestable cross-world assumptions for identification. To avoid this issue, we focus on interventional effects \citep{vanderweele2017mediation,zheng2017longitudinal,wang2023targeted,bhandari2025bayesian,bhandari2025causal,vo2026recanting,domingo2026path}, which define mediation effects through randomized interventions on the mediator distribution. These interventional effects are defined by fixing the mediator for each subject not to the level it would have been for that subject under a particular value of the exposure, but instead to a level that is randomly chosen from the conditional distribution of the mediator given a specific value of exposure and other past covariates. We adopt these interventional effects in our study.

\section{Notation and Causal Framework} \label{sec:3}
\subsection{Notation} \label{sec:3.1}
We have $n$ subjects, where, for each subject $i$, $i\in 1,\ldots, n$, the collection of pre-treatment covariates $\textbf{L}_{i,0} = \big[L_{i,0}^{(1)}, L_{i,0}^{(2)}, \ldots, L_{i,0}^{(K)}\big]$ is recorded at baseline and the longitudinal data are recorded at $T$ time periods.  For subject $i$ at time $t$, for $t = 1,\ldots,T$, $Z_{it} \in \big\{0,1\big\}$ denotes a binary treatment assignment status and $D_{it} \in \big\{0,1\big\}$ denotes the actual treatment receipt. Similarly, $M_{it} = \big[M_{it}^{(1)}, M_{it}^{(2)}, \ldots, M_{it}^{(J)}\big]$ denotes the longitudinal mediator vector where each element $M_{it}^{(j)}$, for $j = {1, \ldots, J},$ represents a mediator measured after the treatment receipt. Finally, $Y_{i}$ denotes the outcome recorded at the end of the study. Figure \ref{Fig:DAG} shows a  directed acyclic graph representing the temporal relationships among these variables.

\begin{figure}[ht]
\centering
\resizebox{0.8\textwidth}{!}{%
\begin{tikzpicture}[
    every node/.style={circle, draw=black, thick, minimum size=1.4cm, font=\footnotesize, align=center},
    arrow/.style={->, thick, >=Stealth}
]

\node (L0) at (-4.5,0) {$L\textsubscript{i,0}$};
\node (Zt1) at (-1.5,0) {$Z_{i,t}$};
\node (Mt1) at (1.5,0) {$\big[M_{i,t}^{(1)}, \ldots, M_{i,t}^{(J)}\big]$};
\node (Zt2) at (4.5,0) {$Z_{i,t+1}$};
\node (Mt2) at (8,0) {$\big[M_{i,t+1}^{(1)}, \ldots, M_{i,t+1}^{(J)}\big]$};
\node (Yi) at (12.5,0) {$Y_i$};

\node (Dt1) at (2,3) {$D_{i,t}$};
\node (Dt2) at (7,3) {$D_{i,t+1}$};

\draw[arrow, bend left=30] (L0) to +(1.5,1);  
\draw[arrow, bend right=30] (L0) to +(+1.5,-1); 
\draw[arrow] (L0) -- +(1.5,0);                 

\draw[arrow] (Mt2) -- +(2.5,0);
\draw[arrow, bend left=-35] (Mt2) to (Yi);
\node[draw=none, fill=none, inner sep=0pt] at (10.75, 0) {$\cdots$};
\draw[arrow] (11, 0) -- +(Yi);
\node[draw=none, fill=none, inner sep=0pt] at (-2.5, 0) {$\cdots$};

\draw[arrow] (Mt1) -- (Zt2);
\draw[arrow, bend left=25] (Mt1) to (Dt2);
\draw[arrow, bend left=-35] (Mt1) to (Mt2);
\draw[arrow, bend left=-35] (Mt1) to (Yi);

\draw[arrow] (Dt1) -- (Mt1);
\draw[arrow] (Dt1) -- (Zt2);
\draw[arrow] (Dt1) -- (Mt2);
\draw[arrow, bend left=45] (Dt1) to (Yi);
\draw[arrow] (Dt1) -- (Dt2);

\draw[arrow] (Dt2) -- (Mt2);
\draw[arrow] (Dt2) -- (Yi);

\draw[arrow] (Zt1) -- (Mt1);
\draw[arrow, color = blue, bend left=25] (Zt1) to (Dt1);
\draw[arrow, color = blue, bend left=60] (Zt1) to (Dt2);
\draw[arrow, bend left=-50] (Zt1) to (Zt2);
\draw[arrow, bend left=-45] (Zt1) to (Mt2);
\draw[arrow, bend left=-45] (Zt1) to (Yi);

\draw[arrow] (Zt2) -- (Mt2);
\draw[arrow, color = blue, bend left=25] (Zt2) to (Dt2);
\draw[arrow, bend left=-45] (Zt2) to (Yi);
\end{tikzpicture}}
\caption{A directed acyclic graph representing the temporal relationships among baseline, longitudinal, and outcome variables. The blue curved arrows illustrate the recanting witness issue, where longitudinal treatment receipt indicators, acting as confounders of the relationships between future variables, are influenced by past treatment assignment variables.}
\label{Fig:DAG}
\end{figure}

 The observed data is denoted by the set $\mathcal{O} = \big\{\mathcal{O}_{i}\big\}_{i=1}^{n} \equiv \big\{\textbf{L}_{i,0}, Z_{it}, D_{it},  M_{it}, Y_{i}  \big\}_{i=1}^{n}$. We assume that  $\mathcal{O}_{i}$'s are IID realizations of the underlying population distribution. For longitudinal variables, we use the bold letters to denote the vector of random variables representing both the history and the current observation, e.g., $\textbf{Z}_{it}\equiv \big\{Z_{i1}, \ldots, Z_{it}\big\}$. For notational clarity and consistency with our
empirical study, we set $T =3$ throughout the article. However, the proposed methods are valid for any finite time length $T$, and the extension is straightforward.

 In our case study, $Z_{it} = 1$ (or 0) indicates that individual $i$ received an email offering a value-added incentive (or a price discount incentive) at time $t$. This treatment definition marks an important departure from causal inference in a treatment-to-control study. In contrast, our analysis compares alternative intervention strategies, each representing a distinct treatment regime. Consequently, the causal contrasts of interest are \emph{intervention-to-intervention} comparisons. This distinction is crucial for interpreting the resulting causal estimates and their policy implications. Finally, at each time $t$, we define $D_{it} = 1$ if individual $i$ opens the promotional email and $D_{it} = 0$ otherwise, and $Y_i$ denotes customer $i$'s total number of purchases at the study period $T$.

\subsection{Potential outcomes}\label{sec:3.2}

Our potential outcomes framework is established based on joint intervention on the treatment assignment history $\textbf{Z}_{i,T}$, treatment receipt history $\textbf{D}_{i,T}$, and the mediator history $\textbf{M}_{i,T} = \big(\textbf{M}_{i,T}^{(1)},\ldots,\textbf{M}_{i,T}^{(J)}\big)$ with $M_{i,t}^{(j)}$ denoting the $j$th mediator at time $t$. Throughout, we use the shorthand $\textbf{m}_{i,T} \equiv \big(\textbf{m}_{i,T}^{(1)},\ldots,\textbf{m}_{i,T}^{(J)}\big)$ for a realized mediator-history vector.  Let $Y_{i}\Bigl(\textbf{z}_{i,T},\textbf{d}_{i,T},\textbf{m}_{i,T}\Bigr) \equiv Y_{i}\Bigl(\textbf{z}_{i,T},\textbf{d}_{i,T},\textbf{m}_{i,T}^{(1)}, \ldots, \textbf{m}_{i,T}^{(J)}\Bigr)$ denote the value of the cross-sectional outcome $Y_{i}$ had the treatment assignment history $\textbf{Z}_{i,T}$, the treatment receipt history $\textbf{D}_{i,T}$, and the mediator history $\textbf{M}_{i,T}$ been set---possibly, contrary to the fact---to $\textbf{z}_{i,T}$, $\textbf{d}_{i,T}$ and $\textbf{m}_{i,T}$, respectively.

 Suppose that $\textbf{z}_{t}$ and $\textbf{z}_{*,t}$ are two possible exposure regimes. At each time $t$, these exposure regimes define potential values of treatment receipt statuses, $D_{i,t}\big(\textbf{z}_{t}\big)$ and $D_{i,t}\big(\textbf{z}_{*,t}\big)$, respectively. Now, for the $J$ mediators at time $t \in \{1,\ldots,T \}$, let: 
        \begin{equation}\label{Eq::joint_mediator_distn}
            \resizebox{\textwidth}{!}{$\displaystyle
            \begin{aligned}
               \mathcal{G}^{\textbf{z}_{*,t},\textbf{d}_{t}(z_{*,t})}_{(m_{t}^{(1)}, \ldots, m_{t}^{(J)} |\textbf{d}_t, \textbf{m}_{t-1}^{(1)},\ldots, \textbf{m}_{t-1}^{(J)}, \boldsymbol{\ell}_{0} )}  
                & =  f\biggl(\Bigl(M_{t}^{(1)}\bigl(\textbf{z}_{*,t}, \textbf{d}_t(z_{*,t})\bigr),\ldots,M_{t}^{(J)}\bigl(\textbf{z}_{*,t}, \textbf{d}_t(z_{*,t}) \bigr)\Bigr) = \bigl(m_{t}^{(1)},\ldots, m_{t}^{(J)}\bigr)
                \big| \textbf{D}_{t}(z_{*,t}) =  \textbf{d}_t(z_{*,t}),
                \\&
                \Bigl(\textbf{M}_{t-1}^{(1)}\bigl(\textbf{z}_{*,t-1}, \textbf{d}_{t-1}(z_{*,t-1})\bigr), \ldots, \textbf{M}_{t-1}^{(J)}\bigl(\textbf{z}_{*,t-1}, \textbf{d}_{t-1}(z_{*,t-1})\bigr)\Bigr) = \bigl( \textbf{m}_{t-1}^{(1)}, \ldots,\textbf{m}_{t-1}^{(J)}\bigr),L_0 = \ell_0\biggr) 
            \end{aligned}
            $} 
        \end{equation}
        denote the joint conditional density of the counterfactual mediators, $M_{t}^{(1)}\bigl(\textbf{z}_{*,t}, \textbf{d}_t(z_{*,t})\bigr), \ldots,$ $ M_{t}^{(J)}\bigl(\textbf{z}_{*,t}, \textbf{d}_t(z_{*,t}) \bigr)$, in the world where the treatment assignment is set to $\textbf{Z}_{t} = \textbf{z}_{*,t}$ and the treatment receipt is set to  $\textbf{D}_{t}(z_{*,t}) = \textbf{d}_{t}(z_{*,t}) \coloneqq \big\{ d_{s}(z_{*,s})\big\}_{s = 1}^{t}$. This conditional density provides a random draw $\big(\text{say, } \bigl(m_{t}^{(1)}, \ldots, m_{t}^{(J)}\bigr)\big)$ of $\bigl(M_{t}^{(1)}, \ldots, M_{t}^{(J)}\bigr) $ within each stratum $(\textbf{d}_t,\textbf{m}_{t-1}, \boldsymbol{\ell}_{0})$ at each time $t$. We compress the notation to denote $\mathcal{G}^{\textbf{z}_{*,t},\textbf{d}_{t}(z_{*,t})}_{m_{t}} = \mathcal{G}^{\textbf{z}_{*,t},\textbf{d}_{t}(z_{*,t})}_{(m_{t}^{(1)}, \ldots, m_{t}^{(J)} |\textbf{d}_t, \textbf{m}_{t-1}^{(1)},\ldots, \textbf{m}_{t-1}^{(J)}, \boldsymbol{\ell}_{0} )}$ and $\mathbfcal{G}^{\textbf{z}_{*,t},\textbf{d}_{t}(z_{*,t})}_{m_{t}}= \Bigl\{ \mathcal{G}^{z_{*,1},d_{1}(z_{*,1})}_{m_{1}}, \ldots , \mathcal{G}^{\textbf{z}_{*,t},\textbf{d}_{t}(z_{*,t})}_{m_{t}}\Bigr\}$ for convenience. Finally, consider a joint intervention to statically set $\textbf{Z}_{T} = \textbf{z}_{T}$ and $\textbf{D}_{t} = \textbf{d}_{T}(z_T)$ in the population and randomly draw an observation $ \big(\textbf{m}_{T}^{(1)}, \ldots,\textbf{m}_{T}^{(J)}\big)$ of  $\big(\textbf{M}_{T}^{(1)}, \ldots,\textbf{M}_{T}^{(J)}\big) \sim \mathbfcal{G}^{\textbf{z}_{*,T},\textbf{d}_{T}(z_{*,T})}_{m_{T}}$. We define the potential outcome resulting from this intervention as $Y\Bigl(\textbf{z}_{T}, \textbf{d}_{T}(z_T),\mathbfcal{G}^{\textbf{z}_{*,T},\textbf{d}_{T}(z_{*,T})}_{m_{T}}\Bigr).$
        
    Note that we use  $Y\Bigl(\textbf{z}_{T}, \textbf{d}_{T}(z_T),\mathbfcal{G}^{\textbf{z}_{*,T},\textbf{d}_{T}(z_{*,T})}_{m_{T}}\Bigr)$ instead of $Y\Bigl(\textbf{z}_{T}, \textbf{d}_{T}(z_T),\textbf{m}_{T}^{(1)}, \ldots,\textbf{m}_{T}^{(J)}\Bigr)$ to denote the potential outcome resulting from the joint intervention. This is done, with a slight abuse of notation, to highlight that the random interventions in our setting define causal parameters by enforcing mediator distributions rather than deterministically enforcing particular mediator values. Although $\mathbfcal{G}^{\textbf{z}_{*,t},\textbf{d}_{t}(z_{*,t})}_{m_{t}}$ represents a conditional density, we sometimes use it interchangeably with the potential mediator $\textbf{M}_{t}\bigl(\textbf{z}_{*,t}, \textbf{d}_t(z_{*,t})\bigr)$ in this article. The specific usage should be discerned based on the context. For instance, in the equality  $\mathbfcal{G}^{\textbf{z}_{*,t},\textbf{d}_{t}(z_{*,t})}_{m_{t}} = \textbf{m}_t$, the observation $\textbf{m}_t$ should be understood as a realized value of the potential mediator $\textbf{M}_{t}\bigl(\textbf{z}_{*,t}, \textbf{d}_t(z_{*,t})\bigr)$, drawn from $\mathbfcal{G}^{\textbf{z}_{*,t},\textbf{d}_{t}(z_{*,t})}_{m_{t}}$.  

In our case study, the randomized intervention framework is particularly appealing because the two mediators---time since the last email was opened and time since the last purchase---are driven by complex customer behaviors that cannot be deterministically controlled by the retailer. 
Randomized interventions draw mediator values from their conditional distributions under a specified email regime, preserving the natural heterogeneity in customer engagement patterns. 
The interventional direct effect measures the expected change in purchases when switching from price-discount to value-added emails while holding the mediator distribution fixed at that observed under price-discount emails, thereby isolating the effect of email content itself. In parallel, the interventional indirect effect quantifies the portion of the effect that operates through changes in customer engagement and purchasing timing induced by the email type. Together, these quantities inform marketing strategy: the direct effect addresses the impact of email content, while the indirect effect reveals whether the email type influences purchases by altering customer engagement dynamics.

\subsection{Principal stratification framework}\label{sec:3.3}    

Each subject has two potential treatment receipt values at $t =1$, $D_1(0)$ and $D_1(1)$, corresponding to the receipt status under each possible treatment assignment. The observed treatment receipt status, therefore, satisfies $D_1=\mathbb{I}\big\{Z_1=0\big\}D_1(0)+\mathbb{I}\big\{Z_1=1\big\}D_1(1)$.
We define the  principal stratum at $t =1$ as $U_1=\bigl(D_1(1),D_1(0)\bigr)$, yielding four latent compliance classes: $(1,1)$, $(1,0)$, $(0,1)$, and $(0,0)$. We refer to these strata as active, non-price value-attentive, price-attentive, and non-active customers, respectively. Active and non-active customers always open or always ignore the first promotional email regardless of promotion type, whereas price-attentive customers open the email only when it contains a price discount, and non-price value-attentive customers open it only when it contains a value-added offer (such as free delivery on their next purchase).

\subsection{Causal estimands}\label{sec:3.4}    
Suppose $\mathbf{z}_{T}$ and $\mathbf{z}_{*,T}$ represent the generic treatment regime vectors across $T$ periods. For example, when $T=3$, $\mathbf{z}_{T} = \{1,1,1\}$ and $\mathbf{z}_{*,T} = \{0,0,0\}$ denote two contrasting treatment regimes. We define the principal causal effect (PCE) as the conditional expected difference given principal strata only at the beginning of the study, $U_1$, that is: 
        \begin{equation}
        \resizebox{0.9\textwidth}{!}{$\displaystyle
            \begin{aligned}
                 \text{PCE}_{(d_{1}(1), d_{1}(0) )} 
                &=  \mathbb{E}\Bigl[Y\Bigl(\textbf{z}_{T}, \textbf{d}_{T}\big(\textbf{z}_{T}\big),\mathbfcal{G}^{\textbf{z}_{T},\textbf{d}_{T}(\textbf{z}_{T})}_{m_{T}}\Bigr) 
                - Y\Bigl(\textbf{z}_{*,T}, \textbf{d}_{T}\big(\textbf{z}_{*,T}\big),\mathbfcal{G}^{\textbf{z}_{*,T},\textbf{d}_{T}(\textbf{z}_{*,T})}_{m_{T}}\Bigr) 
                \,\big|\, U_1 = \bigl(d_{1}(1), d_{1}(0)\bigr) \Bigr].
            \end{aligned}
            $} 
        \end{equation}
Here, $(d_{1}(1), d_{1}(0))$, with a slight abuse of notation, denotes the realization of $(D_{1}(1), D_{1}(0))$ at time $t=1$. In other words, the pair $(d_{1}(1), d_{1}(0))$ represents a realization of the joint potential treatment receipt statuses under contrasting treatment assignments at the beginning of the study. The PCE can be decomposed in terms of principal interventional direct effect $\bigl(\text{PIDE}_{(d_{1}(1), d_{1}(0) )}\bigr)$ and principal joint interventional indirect effect $\bigl(\text{PJIIE}_{(d_{1}(1), d_{1}(0) )}\bigr)$. The former quantifies the direct effect of the intervention on the outcome, while the latter quantifies the effect traversing through the mediators. This decomposition can be expressed as: 
          \begin{align}
            \small
             \text{PCE}_{(d_{1}(1), d_{1}(0) )}
             &= \mathbb{E}\Bigl[Y\Bigl(\textbf{z}_{T}, \textbf{d}_{T}\big(\textbf{z}_{T}\big),\mathbfcal{G}^{\textbf{z}_{T},\textbf{d}_{T}(\textbf{z}_{T})}_{m_{T}}\Bigr)
            - Y\Bigl(\textbf{z}_{*,T}, \textbf{d}_{T}\big(\textbf{z}_{*,T}\big),\mathbfcal{G}^{\textbf{z}_{*,T},\textbf{d}_{T}(\textbf{z}_{*,T})}_{m_{T}}\Bigr)\,\big|\, U_1 = \bigl(d_{1}(1), d_{1}(0)\bigr) \Bigr]
            \nonumber\\
            & = 
            \mathbb{E}\Bigl[Y\Bigl(\textbf{z}_{T}, \textbf{d}_{T}\big(\textbf{z}_{T}\big),\mathbfcal{G}^{\textbf{z}_{*,T},\textbf{d}_{T}(\textbf{z}_{*,T})}_{m_{T}}\Bigr)
            - Y\Bigl(\textbf{z}_{*,T}, \textbf{d}_{T}\big(\textbf{z}_{*,T}\big),\mathbfcal{G}^{\textbf{z}_{*,T},\textbf{d}_{T}(\textbf{z}_{*,T})}_{m_{T}}\Bigr)\,\big|\, U_1 = \bigl(d_{1}(1), d_{1}(0)\bigr) \Bigr]
            \nonumber\\  & +
            \mathbb{E}\Bigl[Y\Bigl(\textbf{z}_{T}, \textbf{d}_{T}\big(\textbf{z}_{T}\big),\mathbfcal{G}^{\textbf{z}_{T},\textbf{d}_{T}(\textbf{z}_{T})}_{m_{T}}\Bigr)
            - Y\Bigl(\textbf{z}_{T}, \textbf{d}_{T}\big(\textbf{z}_{T}\big),\mathbfcal{G}^{\textbf{z}_{*,T},\textbf{d}_{T}(\textbf{z}_{*,T})}_{m_{T}}\Bigr)\,\big|\, U_1 = \bigl(d_{1}(1), d_{1}(0)\bigr) \Bigr]
            \nonumber\\
            &=  \text{PIDE}_{(d_{1}(1), d_{1}(0) )} + \text{PJIIE}_{(d_{1}(1), d_{1}(0) )} .\label{Eq::PCEdecompose}
            \end{align}
            
Define $\theta(\textbf{z}_{T},\textbf{z}_{*,T}) \coloneqq \mathbb{E}\Bigl[Y\Bigl(\textbf{z}_{T}, \textbf{d}_{T}(\textbf{z}_T),\mathbfcal{G}^{\textbf{z}_{*,T},\textbf{d}_{T}(\textbf{z}_{*,T})}_{m_{T}}\Bigr)\,\big|\, U_1 = \bigl(d_{1}(1), d_{1}(0)\bigr) \Bigr]$ for the exposure regimes $\textbf{z}_{T}$ and $\textbf{z}_{*,T}$.  In our case study, $\theta(\textbf{z}_{T},\textbf{z}_{*,T})$ represents the expected purchase count, within the stratum characterized by $(d_1(1), d_1(0))$, when customers receive promotional emails according to regime $\textbf{z}_{T}$, but their intermediate behaviors, i.e., the time since the last email was opened and the time since the last purchase, follow the distribution that would arise under the alternative regime $\textbf{z}_{*,T}$. Analogously, $\theta(\mathbf{z}_{T}, \mathbf{z}_{T})$ and $\theta(\mathbf{z}_{*,T}, \mathbf{z}_{*,T})$ denote the expected purchase counts generated under $\mathbf{z}_{T}$ and $\mathbf{z}_{*,T}$, respectively.


We can write the PCE and its decomposition effects in (\ref{Eq::PCEdecompose}) as functions of $\theta(\textbf{z}_{T},\textbf{z}_{*,T})$ as $\text{PCE}_{\bigl(d_1(1),d_1(0)\bigr)} = \theta(\textbf{z}_{T},\textbf{z}_{T}) - \theta(\textbf{z}_{*,T},\textbf{z}_{*,T})$, $\text{PIDE}_{\bigl(d_1(1),d_1(0)\bigr)} = \theta(\textbf{z}_{T},\textbf{z}_{*,T}) - \theta(\textbf{z}_{*,T},\textbf{z}_{*,T})$, and $\text{PJIIE}_{\bigl(d_1(1),d_1(0)\bigr)} =  \theta(\textbf{z}_{T},\textbf{z}_{T}) - \theta(\textbf{z}_{T},\textbf{z}_{*,T})$. For illustration, consider the treatment regimes $\textbf{z}_{T} = \{1,1,1\}$ (indicating value-added incentive emails throughout the study) and $\textbf{z}_{*,T} = \{0,0,0\}$ (indicating price discount incentive emails throughout the study). Then, PCE measures the overall effect of sending value-added incentive emails on purchase count relative to sending price discount incentive emails. PIDE captures the direct effect of switching from price discount emails to value-added incentive emails, while PJIIE reflects the effect of this switch operating through changes in the time since the last purchase and the time since the last email was opened. 

\subsection{Identification assumptions}\label{sec:3.5} 
We assume consistency of observed data with the corresponding potential variables to rule out the possibility of multiple versions of the interventional variables \citep{imai2010general,10.1214/10-STS321}. This assumption generally holds when the intervention variables are correctly defined. Thus, our first assumption is as follows.

\begin{assumption}\label{assump::1}
    \textbf{Consistency}: (i) $D_{t}\bigl(\textbf{z}_{t}^{'}\bigr) = D_{t}$ given $\textbf{Z}_{t} = \textbf{z}_{t}^{'}$ for $\textbf{z}_{t}^{'} \in \{\textbf{z}_{t},\textbf{z}_{*,t} \}$, (ii) $M_{t}\bigl(\textbf{z}_{t}^{'},  \textbf{d}_{t}\bigr) = M_{t}$ given $\textbf{Z}_{t} = \textbf{z}_{t}^{'}$, and $\textbf{D}_{t} = \textbf{d}_{t}$ for $\textbf{z}_{t}^{'} \in \{\textbf{z}_{t},\textbf{z}_{*,t} \}$, and (iii) $Y(\textbf{z}_{T},\textbf{d}_{T},  \textbf{m}_{T}) = Y$ given $\textbf{Z}_{T} = \textbf{z}_{T}$, $\textbf{D}_{T} = \textbf{d}_{T}$, and $\textbf{M}_{T} = \textbf{m}_{T}$. 
\end{assumption}
\begin{assumption}\label{assump::2}
    \textbf{Ignorability of treatment assignment}:     There are no unmeasured confounders of the relationship between the current treatment assignment status and all current and future counterfactuals, conditional on the observed covariate history. In other words, for any $\textbf{z}^{'}_{t} \in \{\textbf{z}_{t}, \textbf{z}_{*,t} \}$
    \begin{enumerate}
        \item $ D_{w}(z^{'}_{w})  \indep Z_{s} \;\big|\; \textbf{Z}_{s-1},\textbf{D}_{s-1}, \textbf{M}_{s-1}, \textbf{L}_0 $ for all $w \geq s, s \in \{2, \ldots, T\}$, 
        \item $ M_{w}(\textbf{z}^{'}_{w}, D_{w}(z^{'}_{w}))  \indep Z_{t} \;\big|\; \textbf{Z}_{t-1},\textbf{D}_{t-1}, \textbf{M}_{t-1}, \textbf{L}_0 $ for all $w \geq t$, and  
        \item $ Y\bigl(\textbf{z}_{T},\textbf{d}_{T},  \textbf{m}_{T}\bigr)  \indep Z_{t} \;\big|\; \textbf{Z}_{t-1},\textbf{D}_{t-1}, \textbf{M}_{t-1}, \textbf{L}_0 $ for all $t \leq T$.
    \end{enumerate}   
\end{assumption}

Assumption \ref{assump::2} requires that, given a subject's observed covariate history, whether the email they receive at a given time contains a promotional price discount is {\it conditionally independent} of (1) their current and future potential email open status, (2) current and future potential time since the email was opened and since the last purchase, and (3) their future potential purchases resulting from the promotional email. In our case study, the marketing manager makes the decision on the promotional email type to be sent based on the customer's recorded history. Thus, the ignorability of the treatment assignment assumption is fairly likely to hold in our case study.

\begin{assumption}\label{assump::3}
\textbf{Ignorability of the cross-world treatment receipt:} 
The counterfactual treatment receipt status at $t=1$ is independent of the future counterfactual treatment receipt status, mediator, and outcome values in the worlds where the exposure assignments for the former and the latter differ. I.e., for all $z_{1},z_{*,1} \in \{0,1\}$  and for all $w \geq 1, s \geq 2$

    \begin{enumerate}
        \item $D_{s}(z_{s}) \indep  D_1(1-z_{1}) \;\big|\;  D_1(z_{1}),  \textbf{D}_{s-1}(\textbf{z}_{s-1}), \textbf{L}_0$, 
        \item $\mathcal{G}^{\textbf{z}_{*,w},\textbf{d}_{w}(z_{*,w})}_{m_{w}}   \indep   D_1(1-z_{*,1})  \;\big|\;  D_1(z_{*,1}),  \textbf{D}_{2:w}(\textbf{z}_{*,2:w}), \mathbfcal{G}^{\textbf{z}_{*,w-1},\textbf{d}_{w-1}(z_{*,w-1})}_{m_{w-1}},  \textbf{L}_0$, and 
        \item $Y(\textbf{z}_{T},\textbf{d}_{T}, \textbf{m}_{T})  \indep  D_1(1-z_{1})  \;\big|\;  D_1(z_{1}),\textbf{D}_{2:T}(\textbf{z}_{2:T}),\mathbfcal{G}^{\textbf{z}_{*,T},\textbf{d}_{T}(z_{*,T})}_{m_{T}}, \textbf{L}_0$.
    \end{enumerate}
\end{assumption}
 
 In our case study, Assumption \ref{assump::3} requires that a subject's (1) future email open status, (2) future time (in days) since the last email was opened, time since the last purchase, and (3) future order count resulting from receiving an email with a promotional price discount are {\it conditionally independent} of the subject's initial email open status due to receiving an email with a non-price offer. This assumption cannot be verified empirically. However, it is plausible in our case study because the incentive is revealed only after an email is opened. Thus, the initial decision to open an email is driven primarily by habitual engagement behaviors, such as sender familiarity and subject-line salience, rather than the incentive itself \citep{wood2009habitual}. Conditioning on $D_1(z_1)$ and baseline covariates therefore captures much of the stable propensity to engage, leaving only residual dependence due to unmeasured persistent factors.
 
Assumption~\ref{assump::3} is related to the more familiar \textit{principal ignorability assumption} \citep{jo2009use}, which is commonly invoked for nonparametric identification of principal causal effects under treatment noncompliance, and here we adapt to the mediation analysis setting with longitudinal data. In the cross-sectional case without mediators, the principal ignorability assumption states that principal stratum membership is conditionally independent of the potential outcomes given observed covariates. Section~\ref{sec:S3} of the Supplementary Material provides a sensitivity analysis framework to assess the robustness of the results to potential violations of Assumption \ref{assump::3}.3.

Section~\ref{sec:22} above notes that defining random interventional analogues of natural direct and indirect effects avoids reliance on untestable cross-world independence assumptions for identification. However, Assumption~\ref{assump::3} requires conditional independence across two parallel worlds of potential treatment receipt statuses. This apparent discrepancy warrants clarification. While replacing natural effects with interventional effects removes the need for cross-world independence assumptions in identifying mediation parameters under perfect treatment compliance, our setting is more complex. Specifically, our causal estimands are conditioned on principal stratum membership at $t=1$ to account for treatment non-compliance. This conditioning leads to a cross-world independence specific to non-compliance estimands such that Assumption~\ref{assump::3} becomes key for identifying principal causal estimands in the longitudinal non-compliance framework for mediation analysis. 

\begin{assumption}\label{assump::4}
  \textbf{Ignorability of the mediator}:
  There are no unmeasured confounders of the relationship between the current mediators and all current and future potential treatment receipts and the potential outcomes, conditional on the observed covariate history.
  \begin{enumerate}
        \item $D_{s}(\textbf{z}^{'}_s) \indep M_{t} \;\big|\;  \textbf{Z}_{s}, \textbf{D}_{s-1}, \textbf{M}_{s-1}, \textbf{L}_0  $ for all $ s \geq 2, \textbf{z}^{'}_s \in \{\textbf{z}_s,\textbf{z}_{*,s} \}$. 
        \item $Y\bigl(\textbf{z}_{T},\textbf{d}_{T},  \textbf{m}_{T}\bigr) \indep M_{t} \;\big|\; \textbf{Z}_{t},\textbf{D}_{t}, \textbf{M}_{t-1}, \textbf{L}_0$ for all $ t \leq T$.
    \end{enumerate}
\end{assumption}
Assumption \ref{assump::4} requires that, given a subject's observed covariate history, their current time since the last email was opened and the last purchase are {\it conditionally independent} of  1) their future potential email open status and 2) subsequent potential number of purchases.

We also assume the positivity assumptions in the email sent, email opened, and mediator values on the support of the observed history. 
Let $f_{U\mid V}(u\mid v)$ be the conditional probability (resp., density function) of discrete (resp., continuous) $U=u$ given $V=v$. Formally:

\begin{assumption}\label{assump::5}
    {\bf Positivity:}
    For all $d_t, z_t$ and $m_t$ and history $\textbf{z}_{t-1}, \textbf{d}_{t-1},\textbf{m}_{t-1},\ell_0$ we have
    \begin{align*}
        f_{Z_t|\textbf{Z}_{t-1}, \textbf{D}_{t-1},\textbf{M}_{t-1}, L_0}(z_t|\textbf{z}_{t-1}, \textbf{d}_{t-1},\textbf{m}_{t-1},\ell_0) &> 0, \\
        f_{D_t|\textbf{Z}_{t}, \textbf{D}_{t-1},\textbf{M}_{t-1}, L_0}(d_t|\textbf{z}_{t}, \textbf{d}_{t-1},\textbf{m}_{t-1},\ell_0) &> 0, \quad\text{and} \\
        f_{M_t|\textbf{Z}_{t}, \textbf{D}_{t},\textbf{M}_{t-1}, \ell_0}(m_t|\textbf{z}_{t}, \textbf{d}_{t},\textbf{m}_{t-1},\ell_0) &> 0 .
    \end{align*}
\end{assumption}

\subsection{Non-parametric identification of $\theta(\textbf{z},\textbf{z}_{*})$}\label{sec:3.6}
\begin{prop}\label{prop1}
    Suppose that Assumptions 1--5 hold. For any two treatment regimes $\textbf{z}$, $\textbf{z}_{*}$, $\theta(\textbf{z},\textbf{z}_{*}) $ is identified using the following G-formula: 
\begin{equation}\label{Eq:PropGformula_NoRE}
\resizebox{\textwidth}{!}{$\displaystyle
        \begin{aligned}
            \theta(\textbf{z},&\textbf{z}_{*})
            =    \int_{\ell_{i0}} \int_{\textbf{d}_{i,2:T}} \int_{\textbf{m}^{(1)}_{i,T},\ldots, \textbf{m}^{(J)}_{i,T}}  \mathbb{E}\Bigl[ Y\big|\textbf{Z}_{i,T} = \textbf{z}_{i,T}, \textbf{D}_{i,T} = \textbf{d}_{i,T}, \textbf{M}^{(1)}_{i,T} = \textbf{m}^{(1)}_{i,T}, \ldots, \textbf{M}^{(J)}_{i,T} =  \textbf{m}^{(J)}_{i,T}, L_{i0} = \ell_{i0}\Bigr] \times
            \\&
            \Bigl\{ \prod_{t=1}^{T}  f\bigl( M_{i,t}^{(1)} = m_{i,t}^{(1)},\ldots, M_{i,t}^{(J)} = m_{i,t}^{(J)}\big|\textbf{Z}_{i,t} = \textbf{z}_{*,i,t}, \textbf{D}_{i,t} = \textbf{d}_{i,t}, \textbf{M}^{(1)}_{i,t-1} =  \textbf{m}^{(1)}_{i,t-1}, \ldots, \textbf{M}^{(J)}_{i,t-1} =\textbf{m}^{(J)}_{i,t-1}, L_{i0} =\ell_{i0}
             \bigr)
            \Bigr\} \times \\&
            \Bigl\{ \prod_{s=2}^{T} f\Bigl(d_{i,s}\big|  \textbf{Z}_{i,s} = \textbf{z}_{i,s}, \textbf{D}_{i,s-1} = \textbf{d}_{i,s-1},\textbf{M}^{(1)}_{i,s-1} =  \textbf{m}^{(1)}_{i,s-1}, \ldots, \textbf{M}^{(J)}_{i,s-1} =\textbf{m}^{(J)}_{i,s-1},  L_{i0} = \ell_{i0}\Bigr)\Bigr\} \times 
            f_{L_{i0}}(\ell_{i0})
             d(\textbf{m}^{(1)}_{i,T},\ldots, \textbf{m}^{(J)}_{i,T}) d(\textbf{d}_{i,2:T}) d\ell_{i0}.
        \end{aligned}
$}
\end{equation}
\end{prop}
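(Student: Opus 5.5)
The plan is the standard sequential G-computation argument: condition step by step on the counterfactual history and use the ignorability assumptions to replace counterfactual conditional laws by observed ones, working backwards from $Y$ to $\textbf{L}_0$. I will first treat $\theta(\textbf{z},\textbf{z}_*)$ as an expectation conditional on $U_1$. By Assumption~\ref{assump::3}, every counterfactual factor depends on the cross-world receipt $D_1(1-z_1)$ only through $D_1(z_1)$ (or $D_1(z_{*,1})$) and the baseline covariates. So the stratum conditioning can be absorbed, and the integrand can be rewritten in single-world terms. The formula as displayed integrates only over $\textbf{d}_{2:T}$, with $d_1$ held at the stratum value $d_1(z_1)$. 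Hence the first factor is $f(\textbf{L}_0)$ restricted to the stratum, and I will read the stated formula accordingly, with $d_1$ fixed.

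Next I will expand the counterfactual mean by iterated expectations over $\textbf{L}_0$, the future receipts $D_s(\textbf{z}_s)$ for $s=2,\dots,T$, and the randomly drawn mediators from $\mathbfcal{G}^{\textbf{z}_{*,T},\textbf{d}_T(\textbf{z}_{*,T})}_{m_T}$. Because the mediators are drawn from a known conditional law given the history, the integral over $\textbf{m}_T$ has exactly the product form $\prod_t \mathcal{G}_{m_t}$ by construction. Three replacement steps then remain.

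\textbf{(a) Outcome regression.} $\mathbb{E}[Y(\textbf{z},\textbf{d},\textbf{m})\mid \textbf{D}(\textbf{z})=\textbf{d},\textbf{M}=\textbf{m},\textbf{L}_0]$ becomes the observed regression. At time $t=T,T-1,\dots,1$ I will alternately use Assumption~\ref{assump::2}.3 to add $Z_t=z_t$ to the conditioning set and Assumption~\ref{assump::4}.2 to add $M_t=m_t$. Consistency (Assumption~\ref{assump::1}) then turns $Y(\cdot)$ into $Y$, and positivity (Assumption~\ref{assump::5}) guarantees that every conditioning event has positive mass.

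\textbf{(b) Receipt factors.} Each $f(D_s(\textbf{z}_s)=d_s\mid \cdot)$ is handled by the same induction, now with Assumptions~\ref{assump::2}.1 and~\ref{assump::4}.1. This yields $f(d_s\mid \textbf{Z}_s=\textbf{z}_s,\textbf{D}_{s-1},\textbf{M}_{s-1},\textbf{L}_0)$.

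\textbf{(c) Mediator factors.} Each $\mathcal{G}^{\textbf{z}_{*,t},\textbf{d}_t}_{m_t}$ is identified as $f(m_t\mid \textbf{Z}_t=\textbf{z}_{*,t},\textbf{D}_t=\textbf{d}_t,\textbf{M}_{t-1},\textbf{L}_0)$. This uses Assumption~\ref{assump::2}.2 together with Assumption~\ref{assump::1}(ii), inducting on $t$ so that the earlier counterfactual mediators can be replaced by their observed counterparts under $\textbf{z}_*$.

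Multiplying the identified pieces and integrating over $\textbf{d}_{2:T}$, $\textbf{m}_T$ and $\ell_0$ gives \eqref{Eq:PropGformula_NoRE}.

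The main obstacle is step (a), and specifically the mismatch between the two worlds. The receipts $\textbf{d}$ live in the $\textbf{z}$-world, while the mediators are drawn from the $\textbf{z}_*$-world. Once $\textbf{d}$ and $\textbf{m}$ are fixed, however, they are just deterministic arguments of $Y(\textbf{z},\textbf{d},\textbf{m})$, so I would first condition on these fixed values and argue that the mediator draw is independent of the counterfactual outcome given the history. That holds because the draw is an exogenous randomization, which removes the cross-world dependence. The remaining delicate point is the ordering of the induction: at each $t$ one must remove $Z_t$ before $M_t$, since Assumption~\ref{assump::4}.2 conditions on $\textbf{Z}_t$. I will also check that conditioning on $D_1(z_1)$ instead of $U_1$ is justified by Assumption~\ref{assump::3}.3 at each stage.
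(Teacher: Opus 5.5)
Your overall route is the paper's: a total-probability expansion of $\theta(\textbf{z},\textbf{z}_*)$ into counterfactual factors, Assumption 3 to replace $U_1$ by $D_1(z_1)$ (resp.\ $D_1(z_{*,1})$), the randomized-draw argument to drop the mediator draws from the outcome and receipt factors, and then Assumption 2, consistency and Assumption 4 applied factor by factor. Getting the product of $\mathcal{G}$ factors ``by construction'' is fine under your exogenous-draw reading; the paper instead conditions on $U_1$ and removes $D_1(1-z_{*,1})$ with Assumption 3.2.

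However, the induction in step (a), which you single out as the delicate point, is set up backwards and fails as written. Assumption 2.3 lets you add $Z_t=z_t$ only once the conditioning set is already the observed history $(\textbf{Z}_{t-1},\textbf{D}_{t-1},\textbf{M}_{t-1},\textbf{L}_0)$, so you cannot start at $t=T$. Suppose you instead read the chain as peeling variables off the observed regression. Then within each $t$ you must drop $M_t$ before $Z_t$, because Assumption 4.2 needs $\textbf{Z}_t$ (and the observed $\textbf{D}_t$) in the conditioning set. That is the opposite of what you state. The argument has to run forward, $t=1,\dots,T$:
\begin{enumerate}
\item add $Z_t$ by Assumption 2;
\item use consistency (i) to turn $D_t(\textbf{z}_t)$ into $D_t$, and (ii) in the $\mathcal{G}$ factors;
\item add $M_t$ by Assumption 4;
\end{enumerate}
and finally finish with consistency (iii). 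This is exactly the sequence of equalities in the paper's $T=2$ computation.

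Second, you read the baseline factor as ``$f(\textbf{L}_0)$ restricted to the stratum'', i.e.\ $f(\ell_0\mid U_1)$. That reading does not yield the displayed formula, and this factor is not identified under Assumptions 1--5. It involves $P\{D_1(1)=a,D_1(0)=b\mid \ell_0\}$, a cross-world joint law that the assumptions do not constrain. The paper's proof writes the marginal $f_{L_0}(\ell_0)$ already in its first total-probability step. That step is exact only if $L_0$ is independent of $U_1$. Otherwise the right-hand side is the within-stratum conditional mean standardized to the marginal baseline law, which is what Algorithm S1 computes: it draws $\ell_0$ from the marginal and then fixes $U_1$. You need that convention, or an extra assumption, to land on the stated identity.

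Relatedly, $d_1$ cannot be held at $d_1(z_1)$ everywhere. After Assumption 3.2 and consistency under $Z_1=z_{*,1}$, the mediator factors condition on $D_1=d_1(z_{*,1})$. This differs from $d_1(z_1)$ in the two attentive strata whenever $z_1\neq z_{*,1}$. The paper keeps $D_1(z_1)$ in the outcome and receipt factors and $D_1(z_{*,1})$ in the mediator factors; these are the $d_1$ versus $d_{*,1}$ of Algorithm S1.
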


The proof is given in Section~\ref{sec:S2} of the Supplementary Material.

\section{Model specification and estimation}\label{sec:4}
\subsection{Joint modeling with enriched Dirichlet process mixture (EDPM) models}\label{sec:4.1}  
Throughout the rest of the article, we fix the number of mediators at $J = 2$ as in our case study. However, the specification below is valid and can be easily extended for any finite $J > 2$. To estimate the causal parameters, we estimate the joint distribution $F\big(Y, \textbf{M}^{(2)}_{T}, \textbf{M}^{(1)}_{T}, \textbf{D}_{T}, \textbf{Z}_{T}, \textbf{L}_{0}  \big)$. We propose modeling the joint distribution 
using an EDPM \citep{wade2011enriched, wade2014improving} as below:
\begin{equation}\label{eq:EDPMmodel}
\resizebox{\textwidth}{!}{$\displaystyle
    \begin{aligned}
       Y_i \big| \textbf{M}^{(2)}_{i,T}, \textbf{M}^{(1)}_{i,T}, \textbf{D}_{i,T}, \textbf{Z}_{i,T}, \textbf{L}_{i,0}; \boldsymbol{\beta}_i & \sim F_y\big(. \big| \textbf{m}^{(2)}_{i,T}, \textbf{m}^{(1)}_{i,T}, \textbf{d}_{i,T}, \textbf{z}_{i,T}, \boldsymbol{\ell}_{i,0}; \boldsymbol{\beta}_i \big)
       \\
        \big(M^{(2)}_{i,t}, M^{(1)}_{i,t}\big)\big|\textbf{M}^{(2)}_{i,t-1}, \textbf{M}^{(1)}_{i,t-1}, \textbf{D}_{i,t}, \textbf{Z}_{i,t}, \textbf{L}_{i,0},b_i^{M^{(2)}}, b_i^{M^{(1)}}; \boldsymbol{\theta}^{M^{(2)}}_{i,t}, \boldsymbol{\theta}^{M^{(1)}}_{i,t}
        & \sim F_{(m^{(2)}_{t}, m^{(1)}_{t})} \big(.\big|\textbf{m}^{(2)}_{i,t-1}, \textbf{m}^{(1)}_{i,t-1}, \textbf{d}_{i,t}, \textbf{z}_{i,t}, \boldsymbol{\ell}_{i,0}, b_i^{M^{(2)}}, b_i^{M^{(1)}}; \boldsymbol{\theta}^{M^{(2)}}_{i,t}, \boldsymbol{\theta}^{M^{(1)}}_{i,t} \big)
        \\
        D_{i,t}\big|\textbf{M}^{(2)}_{i,t-1}, \textbf{M}^{(1)}_{i,t-1}, \textbf{D}_{i,t-1}, \textbf{Z}_{i,t}, \textbf{L}_{i,0}, b_i^{D} ; \boldsymbol{\theta}^{D}_{i,t} &\sim F_{d_{t}}\big(. \big|\textbf{m}^{(2)}_{i,t-1}, \textbf{m}^{(1)}_{i,t-1}, \textbf{d}_{i,t-1}, \textbf{z}_{i,t}, \boldsymbol{\ell}_{i,0}, b_i^{D}; \boldsymbol{\theta}^{D}_{i,t} \big)
        \\
        Z_{i,t}\big|\textbf{M}^{(2)}_{i,t-1}, \textbf{M}^{(1)}_{i,t-1}, \textbf{D}_{i,t-1}, \textbf{Z}_{i,t-1}, \textbf{L}_{i,0}, b_i^{Z}; \boldsymbol{\theta}^{Z}_{i,t} &\sim F_{z_{t}}\big(. \big|\textbf{m}^{(2)}_{i,t-1}, \textbf{m}^{(1)}_{i,t-1}, \textbf{d}_{i,t-1}, \textbf{z}_{i,t-1}, \boldsymbol{\ell}_{i,0}, b_i^{Z}; \boldsymbol{\theta}^{Z}_{i,t} \big)
        \\
        \textbf{L}_{i,0}^{(k)};\boldsymbol{\theta}^{L_0^{(k)}}_{i} &\sim F_{\ell_{0}^{(k)}}\big(.\big| \boldsymbol{\theta}^{L_0^{(k)}}_{i} \big), \quad k = 1, \ldots, K \\
    \end{aligned}
$}
\end{equation}

Here, $t = 1, \ldots, T$, and $\boldsymbol{\theta}_{i} = \big(\boldsymbol{\theta}^{M^{(1)}}_{i,t}, \boldsymbol{\theta}^{M^{(2)}}_{i,t}, \boldsymbol{\theta}^D_{i,t}, \boldsymbol{\theta}^Z_{i,t},  \big\{\boldsymbol{\theta}^{L_0^{(k)}}_{i}\big\}_{k=1}^{K}\big)$. We draw $(\boldsymbol{\beta}_i, \boldsymbol{\theta}_{i})$ from $H$ where $
         H$ is an EDP with parameters $\big(\alpha^{\beta}, \boldsymbol{\alpha}^{\theta\mid\beta}, H_{0\beta}, H_{0\theta|\beta}\big)$.

 The EDPM model utilizes a nested two-level clustering structure that is well-suited for our application involving a large heterogeneous customer base. Its hierarchical structure allows the model to capture heterogeneity in customer behavior at two distinct levels: the outer level clusters differentiate groups of customers by how their purchase counts respond to promotional emails and intermediate behaviors, while the inner level subclusters capture finer-grained heterogeneity in the distributions of time-varying covariates and baseline characteristics. 


 At the outer level ($\beta$-level), customers are grouped into clusters that share common outcome model parameters denoted by $\boldsymbol{\beta}$, which appear in the models for the outcome in \eqref{eq:EDPMmodel}.  Thus, customers with the same $\beta$-values have the same distributional relationships between their covariate histories and purchase counts. Within the cluster of customers with the same $\beta$-values, the inner level ($\theta$-level) further partitions customers into subclusters that share parameters denoted by $\boldsymbol{\theta}$ governing the distributions of mediators, treatment receipt, treatment assignment, and baseline covariates. This allows that customers who respond similarly to promotional emails in terms of purchases (i.e., are in the same $\beta$-cluster) may nonetheless differ in their email-opening behaviors or in how quickly they make purchases after receiving emails (i.e., maybe in different $\theta$-subclusters).

 The EDPM contains two sets of concentration parameters. The scalar parameter $\alpha^{\beta}$ governs the number of outer ($\beta$-level) clusters. Conditional on the outer clustering, the vector $\boldsymbol{\alpha}^{\theta\mid\beta}$ contains one concentration parameter for each outer cluster. Correspondingly, $\alpha_r^{\theta\mid\beta}$ controls the number of inner ($\theta$-level) subclusters nested within outer cluster $r$. Lower values of either $\alpha^{\beta}$ or the entries of $\boldsymbol{\alpha}^{\theta\mid\beta}$ correspond to fewer clusters at the respective level. By the square-breaking construction of the EDP, we have $H = \sum_{r=1}^{\infty} \sum_{s=1}^{\infty}\gamma_{r} \gamma_{s|r} \delta_{(\beta_r^{*},\,\theta_{s|r}^{*})},$ where
$ \gamma_r = \gamma_r^{'}\prod_{t <r} (1- \gamma_t^{'}), \gamma_t^{'} \sim \operatorname{Beta}(1, \alpha^{\beta}), \beta^{*}_r \stackrel{\text{iid}}{\sim} H_{0\beta},$ for the outer cluster level weights and
$\gamma_{s|r} = \gamma_{s|r}^{'}\prod_{t <s} (1- \gamma_{t|r}^{'}), \gamma_{t|r}^{'} \sim \operatorname{Beta}(1, \alpha_r^{\theta\mid\beta})$, and $\theta^{*}_{s|r} \stackrel{\text{iid}}{\sim} H_{0\theta|\beta}$ for inner cluster level weights. Following \citet{burns2023truncation} and \citet{daniels2023bayesian}, we implement the EDPM in this article via a truncation approximation. Sections~\ref{sec:S4} and~\ref{sec:S5} of the Supplementary Material give details of the approximation and provide the associated posterior computation.

Within each $\beta-$level cluster, we assume the following generalized linear model (GLM) for \\ $F_{y}\big(. \big| \textbf{m}^{(2)}_{i,T}, \textbf{m}^{(1)}_{i,T}, \textbf{d}_{i,T}, \textbf{z}_{i,T}, \boldsymbol{\ell}_{i,0}; \boldsymbol{\beta}_i \big)$ that accounts for the zero-inflation present in $Y_i$:  
\begin{equation}\label{Eq::LocalYRegression}
    Y_i \big| \mathbb{X}^{Y}_i; \boldsymbol{\beta}^{Y}_i, \sigma^{2,Y}_i,\pi^{Y}_{i}  = \pi^{Y}_{i}\mathbb{I}\{Y_i =0\} + \big(1-\pi^{Y}_{i}\big)N\bigl( \mathbb{X}^{Y}_i\boldsymbol{\beta}^{Y}_i, \sigma^{2,Y}_i\bigr) \mathbb{I}\{Y_i >0\}. 
\end{equation}
This Gaussian hurdle specification originates from \citet{cragg1971statistical}, who paired a probit hurdle with a normal component for the positive values in modeling the demand for durable goods. Although $Y_i$ is a count, \citet{yen1999gaussian} shows that Gaussian and count-data hurdle models yield nearly indistinguishable inferences for zero-inflated consumption outcomes,  while the Gaussian hurdle model also allows faster and scalable posterior computation.
In (\ref{Eq::LocalYRegression}), $\mathbb{X}^{Y}_i = \big(1,\textbf{L}_{i,0}, Z_{i,1}, D_{i,1}, M^{(1)}_{i,1}, M^{(2)}_{i,1}, \ldots,Z_{i,T}, D_{i,T}, M^{(1)}_{i,T}, M^{(2)}_{i,T}\big)$ is the $i^{th}$ row of the design matrix $\mathbb{X}^{Y}$ involving $\textbf{L}_{i,0}$, $\textbf{Z}_T, \textbf{D}_T, \textbf{M}^{(1)}_{T}$ and $\textbf{M}^{(2)}_{T}$ for the local outcome regression. For all $t = 1, \ldots, T$, we assume the following GLMs for 
\begin{align*}
&F_{(m^{(2)}_{t}, m^{(1)}_{t})}\big(.\big|\textbf{m}^{(2)}_{i,t-1}, \textbf{m}^{(1)}_{i,t-1}, \textbf{d}_{i,t}, \textbf{z}_{i,t}, \boldsymbol{\ell}_{i,0}, b_i^{M^{(2)}}, b_i^{M^{(1)}}; \boldsymbol{\theta}^{M^{(2)}}_{i,t}, \boldsymbol{\theta}^{M^{(1)}}_{i,t} \big), \\
&F_{d_{t}}\big(. \big|\textbf{m}^{(2)}_{i,t-1}, \textbf{m}^{(1)}_{i,t-1}, \textbf{d}_{i,t-1}, \textbf{z}_{i,t}, \boldsymbol{\ell}_{i,0}, b_i^{D}; \boldsymbol{\theta}^{D}_{i,t} \big), \quad\text{and} \\
&F_{z_{t}}\big(. \big|\textbf{m}^{(2)}_{i,t-1}, \textbf{m}^{(1)}_{i,t-1}, \textbf{d}_{i,t-1}, \textbf{z}_{i,t-1}, \boldsymbol{\ell}_{i,0}, b_i^{Z}; \boldsymbol{\theta}^{Z}_{i,t} \big):
\end{align*}
\begin{equation}\label{Eq::LocalMDZRegression}
\resizebox{\textwidth}{!}{$\displaystyle
\begin{aligned}
         M^{(1)}_{i,t}\big|\mathbb{X}^{M}_{i,t}, b_i^{M^{(1)}};   \boldsymbol{\theta}^{M^{(1)}}_{i,t},\sigma^{2,M^{(1)}}_{i,t}, \pi^{M^{(1)}}_{i,t}  &=  \pi^{M^{(1)}}_{i,t}\mathbb{I}\{M^{(1)}_{i,t} =0\} + \big(1- \pi^{M^{(1)}}_{i,t}\big)N\bigl(\mathbb{X}^{M}_{i,t}\boldsymbol{\theta}^{M^{(1)}}_{i,t} + b_i^{M^{(1)}}, \sigma^{2,M^{(1)}}_{i,t}\bigr) \mathbb{I}\{M^{(1)}_{i,t} >0\},  \\
         M^{(2)}_{i,t}\big|\mathbb{X}^{M}_{i,t}, b_i^{M^{(2)}};   \boldsymbol{\theta}^{M^{(2)}}_{i,t},\sigma^{2,M^{(2)}}_{i,t}, \pi^{M^{(2)}}_{i,t}  &=  \pi^{M^{(2)}}_{i,t}\mathbb{I}\{M^{(2)}_{i,t} =0\} + \big(1- \pi^{M^{(2)}}_{i,t}\big)N\bigl(\mathbb{X}^{M}_{i,t}\boldsymbol{\theta}^{M^{(2)}}_{i,t} + b_i^{M^{(2)}}, \sigma^{2,M^{(2)}}_{i,t}\bigr) \mathbb{I}\{M^{(2)}_{i,t} >0\},  \\
      D_{i,t}\big|\mathbb{X}^{D}_{i,t}, b_i^{D};   \boldsymbol{\theta}^{D}_{i,t}  &\sim \operatorname{Bernoulli}\bigl(p_{i,t}^D\bigr), \quad \operatorname{probit}\big(p_{i,t}^D\big) = \mathbb{X}^{D}_{i,t} \boldsymbol{\theta}^{D}_{i,t} +b_i^{D}, \text{ and} \\
      Z_{i,t}\big|\mathbb{X}^{Z}_{i,t}, b_i^{Z};   \boldsymbol{\theta}^{Z}_{i,t}  &\sim \operatorname{Bernoulli}\bigl(p_{i,t}^Z\bigr), \quad \operatorname{probit}\big(p_{i,t}^Z\big) = \mathbb{X}^{Z}_{i,t} \boldsymbol{\theta}^{Z}_{i,t} +b_i^{Z}.
\end{aligned}
$}
\end{equation}

 In (\ref{Eq::LocalMDZRegression}), $\mathbb{X}^{M}_{i,t} = \big(1,\textbf{L}_{i,0}, Z_{i,1}, D_{i,1}, \ldots, Z_{i,t}, D_{i,t}\big)$, $\mathbb{X}^{D}_{i,t} = \big(1,\textbf{L}_{i,0}, Z_{i,1}, \ldots, Z_{i,t}\big)$, and $\mathbb{X}^{Z}_{i,t} = \big(1, \textbf{L}_{i,0}\big)$ denote the $i^{th}$ rows of the design matrices governing the mediator, treatment receipt, and treatment assignment models at time $t$, respectively. Here, $\mathbf{L}_0$ denotes the baseline covariates, $\mathbf{Z}_t$ the treatment assignment history, and $\mathbf{D}_t$ the treatment receipt history. The model has the following specifics:
 (i) conditional on $\mathbf{L}_0$ and subject-specific random intercepts, $b_i^D, b_i^{M^{(1)}}, b_i^{M^{(2)}}$, the mediator and treatment receipt models depend on lag-one mediator and treatment values
 rather than the complete covariate history in strict temporal order, (ii) the local treatment assignment model for $Z_{i,t}$ depends on baseline covariates and a subject-specific random intercept, $b_i^Z$, implying that treatment assignment is conditionally independent of the past  within clusters given $\boldsymbol{\ell}_{i,0}$.  Although marginally, the treatment assignment and the past may be dependent because cluster membership is informed by the longitudinal history through the $\theta$-level likelihood. Marginalizing over the latent cluster therefore induces dependence of $Z_{i,t}$ on $\textbf{Z}_{i,t-1}$, $\textbf{D}_{i,t-1}$, and $\textbf{M}_{i,t-1}$, consistent with Assumption~\ref{assump::2}, (iii) the local density of the outcome model
 (\ref{Eq::LocalYRegression}) is specified conditional on baseline covariates $\boldsymbol{\ell}_{i,0}$, treatment assignment history $\mathbf{z}_{i,T}$, treatment receipt history $\mathbf{d}_{i,T}$, and mediator history $\big(\mathbf{m}^{(1)}_{i,T},\mathbf{m}^{(2)}_{i,T}\big)$, and (iv) the baseline covariates $\boldsymbol{\ell}_{i,0} = [\ell_{i,0}^{(1)}, \ldots, \ell_{i,0}^{(K)}]$ are modeled as locally independent within clusters.

 The inclusion of all longitudinal covariates in the local outcome regression allows us to better capture the effects of treatment and mediators on the outcome. This modeling choice highlights a key strength of our framework: the flexibility to specify local regression models with different sets of covariates within a unified Bayesian framework.  In general, the framework may be tailored according to the case study.  The random intercepts $b_i^{M^{(2)}}$, $b_i^{M^{(1)}}$, $b_i^{D}$, and $b_i^{Z}$  are modeled as independent and centered normal distributions. These random effects are not included in the EDP prior so they do
not depend on clusters. The baseline covariates are modeled to be locally independent, with $L_{i,0}^{(k)}; \mu^{L_{0}^{(k)}}_{i}, \sigma^{2,L_{0}^{(k)}}_{i} \sim N\big(\mu^{L_{0}^{(k)}}_{i}, \sigma^{2,L_{0}^{(k)}}_{i}\big)$  for continuous and $L_{i,0}^{(k)}; \pi^{L_{0}^{(k)}}_{i} \sim Bernoulli\big(\pi^{L_{0}^{(k)}}_{i}\big)$ for binary covariates.

 Note that all variables are globally dependent and may exhibit complex non-linear relationships, even when parametric models are assumed within each cluster. This flexible modeling of conditional distributions is a key advantage of the EDPM model. It accommodates non-normality and multi-modality of errors at the global level while simultaneously supporting simple parametric GLMs locally within a cluster. The conditional densities that appear in the nonparametric identification of the causal parameter $\theta(\mathbf{z}, \mathbf{z}_*)$ in Equation~(\ref{Eq:PropGformula_NoRE}) can be expressed as infinite mixture models under the EDPM specification in Equation~(\ref{eq:EDPMmodel}). Section~\ref{sec:S4} of the Supplementary Material provides details on EDP base measures, prior specifications, and on the derivation of densities in (\ref{Eq:PropGformula_NoRE}).

\subsection{Posterior sampling and g-computation}\label{sec:4.2}
We fit the observed-data models described in Section~\ref{sec:4.1} using MCMC implemented in \texttt{Nimble} \citep{de2017programming,de2020nimble}. After discarding the first $B$ iterations as burn-in, we retain $Q$ posterior samples (after thinning by a factor of 10), indexed by $q=1,\ldots,Q$. At each retained iteration, we obtain posterior draws of the parameters governing $L_{i,0}$, $Z_{i,t}$, $D_{i,t}$, $M^{(1)}_{i,t}$, $M^{(2)}_{i,t}$, and $Y_i$.

To facilitate posterior computation under the enriched Dirichlet process mixture model, we adopt the finite truncation approximation of \cite{burns2023truncation}, replacing the infinite outer and inner mixtures by truncation at finite levels. This approximation yields closed-form posterior updates for the quantities in identification equation  (\ref{Eq:PropGformula_NoRE}) and enables efficient blocked Gibbs sampling.  Without this approximation, posterior inference would require drawing samples from posterior predictive distributions within the G-computation framework. See details in Section~\ref{sec:S5} of the Supplementary Material.

Given the $Q$ posterior draws, we estimate $\theta(\mathbf{z},\mathbf{z}_{*})$ using G-computation with Monte Carlo integration. Algorithm~\ref{Alg1:Gcomp} in the Supplementary Material provides the complete procedure. Briefly, at each retained iteration $q$, we generate $C^*$ Monte Carlo samples of the baseline covariates and random effects from their posterior predictive distributions. For each sample, we fix the principal stratum  and recursively simulate the counterfactual treatment and mediator trajectories under the intervention regimes of interest using the conditional distributions implied by the truncated EDPM. We then evaluate the conditional expectation of the outcome at time $T$ and approximate $\theta(\mathbf{z},\mathbf{z}_{*})^{(q)}$ by averaging over the $C^*$ Monte Carlo samples.

\section{Case study results} \label{sec:5}
Recall that there are $18,571$ customers who received emails over three time periods. In each time period, they either received a value-added incentive email (V) or a price-discount incentive email (P). Thus, we have treatment regimes of the form `\{P,P,P\}' indicating price-discount incentive email in each of the three periods, etc. The number of days since the last email was opened
and the number of days since the last purchase are the two intermediate variables of interest for the effect of email incentives on purchases. 
Throughout this section, we report posterior means and $95\%$ Bayesian credible intervals to summarize our direct, indirect, and total interventional effect estimates. Our inference is based on 1,000 posterior samples from four MCMC chains. After discarding 17,500 iterations as burn-in and thinning by a factor of 10, each chain retains 250 iterations. 
Section~\ref{sec:S3} of the Supplementary Material reports on sensitivity analysis to potential violations of Assumption~\ref{assump::3}.3.

\subsection{Impact of adjustment for compliance with opening email}\label{sec:51} 

A naive analysis could ignore the longitudinal treatment assignment, mediators, compliance with opening the email, or all of the above. Ignoring the fact that the treatment only becomes effective when individuals open the email would lead to a biased understanding of the treatment effect. We compare our inference with a naive analysis that accounts for all features of the data except the email-opening action to quantify the influence of adjustment for compliance with opening.

The naive analysis uses Gaussian hurdle models for the zero-inflated mediators and the final outcome, similar to our inference model. The hurdle models consist of (i) a zero component governed by a Bernoulli hurdle probability and (ii) a continuous Gaussian component whose mean depends additively on all temporally preceding variables. All model parameters---including regression coefficients, variance components, and hurdle probabilities---are assigned weakly informative priors. Section~\ref{sec:S1} of the Supplementary Material gives the full details on the naive causal estimands and the corresponding parametric Bayesian model. Using the proposed inference and the naive inference, we compare the treatment regimes \{V,V,V\} and \{P,P,P\}, i.e., we compare all value-added incentive emails vs all price-discount incentive emails. The results are reported in Table \ref{tab:naive_vs_proposed}. 

\begin{table}[h!]
\caption{Effect of different promotional emails on purchase; comparing three subsequent value-added offer emails vs three subsequent price-discount offer emails. The principal strata are based on customers' compliance with opening the emails.}
\centering
\resizebox{\textwidth}{!}{%
\begin{tabular}{l|ccccc}
        \hline
                & \multicolumn{1}{c}{Ignoring email-} & \multicolumn{4}{c}{Principal   Strata of Customer Type}                                                                                                  \\ \cline{3-6} 
                & \multicolumn{1}{c}{opening information}  & \multicolumn{1}{c}{Active} & \multicolumn{1}{c}{Value-attentive} & \multicolumn{1}{c}{Price-attentive} & \multicolumn{1}{c}{Non-active} \\ \hline
Direct effect   & 2.76  &   3.91  &   3.98   &  3.87  &   3.94   \\
                & ($-$7.41, 5.06) & (2.45, 5.23)    & (2.46, 5.23)   & (2.54, 5.34)    & (2.57, 5.25)  \\
Indirect effect & $-$0.06   & -1.15   & -1.42   & -1.22    & -1.48    \\
                & ($-$2.39, 2.53)  & (-2.22, -0.33)  & (-2.45, -0.68)  & (-2.31, -0.45) & (-2.53, -0.80)   \\ \hline
\end{tabular}%
}
\label{tab:naive_vs_proposed}
\end{table} 

  Throughout the table, the direct effect estimates are positive and the mediated indirect effect estimates are negative. A positive posterior mean for the direct effect indicates that, given the observed data, if all individuals were emailed \{V,V,V\} versus if all were emailed \{P,P,P\}, there is an expected positive direct impact on the number of purchases. Table \ref{tab:naive_vs_proposed} shows that the direct effect estimate from the proposed inference is significant and approximately $1.5$ times the magnitude of the non-significant naive estimate when adjusting for email-opening information. 

  All indirect effect estimates in Table \ref{tab:naive_vs_proposed} are negative. A negative estimate indicates that a longer receipt time for a value incentive email, compared to that for the price incentive email, acts as a deterrent to purchasing items. In Section \ref{sec:54}, we investigate this indirect effect further. The naive method infers a negligible indirect effect, whereas adjusting for email-opening information yields statistically significant effect estimates.

 The clear differences in email opening rates between value- vs.~price-incentive emails seen in Figure \ref{fig:recipients_vs_openers_sampleSize} provide an explanation for these differences between naive and proposed inferences. In Figure \ref{fig:recipients_vs_openers_sampleSize}, the email-opening rates for value-incentive emails are lower than those for price-incentive emails. The largest difference is at time $t= 3$, where these rates differ by more than $5\%$.

\begin{figure}
    \centering
    \includegraphics[width=.75\linewidth]{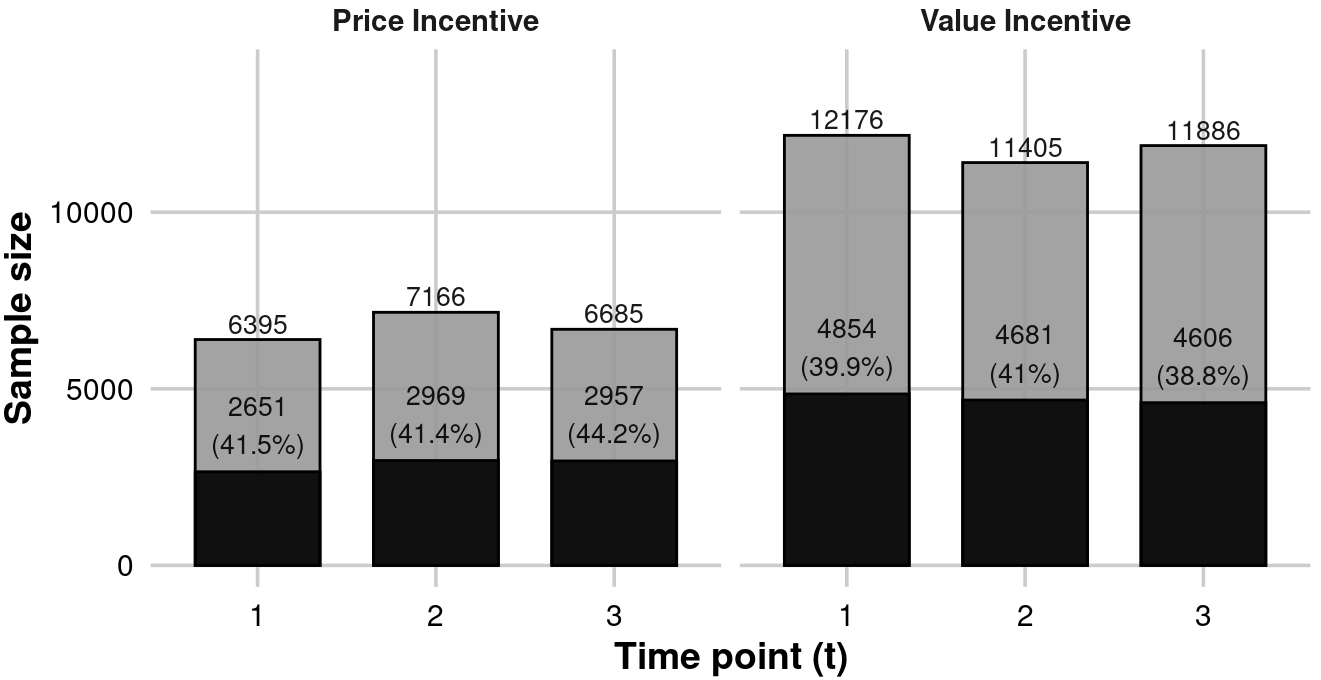}
    \caption{The number of recipients who received the promotional emails and who opened emails over time, stratified by incentive type. The number above the gray bar is the number of email recipients, and the number above the black bar is the number of email openers. 
     }
    \label{fig:recipients_vs_openers_sampleSize}
\end{figure}

  Table \ref{tab:naive_vs_proposed} also shows inference separately within our four principal strata. Recall that we stratify individuals into four strata based on their latent behavior of whether they would open either type of email (active customer), only value-added offer email (value-attentive customer), only price-discount offer email (price-attentive customer), or neither type of email (non-active customer) at time $t= 1$. The estimates are very similar across these four principal strata.

\subsection{Cumulative effects of value-incentive emails across the campaign}\label{sec:52} 

The results in Table~\ref{tab:naive_vs_proposed} demonstrate that sending value-added incentive emails throughout the campaign yields a net positive impact on purchases relative to price-discount emails. From a practical standpoint, the company may wish to understand how the cumulative benefit of value-added emails evolves as they are introduced sequentially into the campaign. Such information is critical for resource allocation: if most of the benefit accrues from the first value-added email, subsequent emails may be allocated to price discounts, and conversely, if benefits accumulate across all time points, a consistent value-added strategy is warranted. To address this question, we estimate principal interventional effects by comparing three treatment regimes---$\textbf{Z}_{T} = \text{\{V,P,P\}}$, $\textbf{Z}_{T} = \text{\{V,V,P\}}$, and $\textbf{Z}_{T} = \text{\{V,V,V\}}$---against the baseline regime $\textbf{Z}_{*,T} = \text{\{P,P,P\}}$.

  Table~\ref{Table:EffectEstimatesvsBaselineRegime} presents the estimated direct, indirect, and total effects for each regime contrast across the four principal strata. The direct effect estimates are uniformly positive and statistically significant, increasing monotonically as more value-added emails are introduced: approximately two additional purchases for \{V,P,P\}, three for \{V,V,P\}, and about four for \{V,V,V\} relative to \{P,P,P\}. 
  In contrast, the indirect effect estimates are negative, and they are statistically significantly negative only for the \{V,V,V\} vs. \{P,P,P\} comparison. The magnitudes of the indirect effect estimates' posterior means increase in absolute value as more value-added emails are sent. 
  Finally, the total effect estimates, which combine direct and indirect effects, remain positive and significant across all comparisons, ranging from approximately $1.46$ to $1.83$ additional purchases on average for \{V,P,P\} to more than 2.45 additional purchases on average for \{V,V,V\}. Notably, all effect estimates are stable across principal strata, indicating that the benefits of value-added emails are not confined to a particular compliance subgroup.

\begin{table}[!h]
\caption{Causal effect estimates (posterior mean and $95\%$ Bayesian credible interval) across treatment regime contrasts}
\label{Table:EffectEstimatesvsBaselineRegime}
\centering
\resizebox{\textwidth}{!}{%
\begin{tabular}{@{}lcrcrrr@{}}
\hline
Principal strata & Effects &  \{V,P,P\} vs. \{P,P,P\} &  \{V,V,P\} vs. \{P,P,P\}   &  \{V,V,V\} vs. \{P,P,P\} \\
 at $t=1$ &   &  &  & \\
\hline
{Active}  & {Direct} & 1.86 (0.82, 3.02) & 3.22 (2.19, 4.49) & 3.91 (2.45, 5.23)\\
{customers}  & {Indirect}  & $-$0.03 ($-$0.58, 0.80) & $-$0.55 ($-$1.25, 0.24) & $-$1.15 ($-$2.22, $-$0.33)\\
          & {Total}   & 1.83 (0.69, 3.39) & 2.67 (1.39, 4.19) & 2.76 (1.15, 3.83)\\[6pt]
{Non-price}  & {Direct} & 1.83 (0.92, 2.76) & 3.20 (2.08, 4.28) & 3.98 (2.46, 5.23)\\
 {value-attentive}  & {Indirect}  & $-$0.27 ($-$0.85, 0.67) & $-$0.76 ($-$1.45, 0.20) & $-$1.42 ($-$2.45, $-$0.68)\\
  {customers}  & {Total}   & 1.57 (0.50, 2.97) & 2.44 (1.18, 3.86) & 2.56 (0.99, 3.59)\\[6pt]
{Price-attentive} & {Direct} & 1.87 (0.77, 3.40) & 3.24 (2.05, 4.82) & 3.87 (2.54, 5.34)\\
{customers}  & {Indirect}  & $-$0.16 ($-$0.77, 0.50) & $-$0.67 ($-$1.40, 0.03) & $-$1.22 ($-$2.31, $-$0.45)\\
          & {Total}   & 1.71 (0.59, 3.29) & 2.57 (1.29, 4.23) & 2.65 (1.18, 3.68)\\[6pt]
{Non-active}  & {Direct} & 1.85 (0.83, 2.98) & 3.22 (2.08, 4.51) & 3.94 (2.57, 5.25)\\
{customers}   & {Indirect}  & $-$0.39 ($-$0.93, 0.36) & $-$0.88 ($-$1.58, $-$0.09) & $-$1.48 ($-$2.53, $-$0.80)\\
          & {Total}   & 1.46 (0.46, 2.73) & 2.33 (1.13, 3.79) & 2.45 (0.98, 3.42)\\[6pt]
\hline
\end{tabular}%
}
\end{table}

\subsection{Impact of incentive types sent over time}\label{sec:53} 

There is no apparent cost to simply emailing a value-added or a price-discount offer. However, there may be differences in the costs required to implement these offers. Therefore, identifying which email regimes increase the expected purchase count can help the company make cost-benefit-informed decisions about which promotional emails to send in the future. To investigate this, we compare the expected number of purchases across the eight possible email sequences that the company could send. Note that these are still counterfactual quantities. 
Figure~\ref{fig:effect_across_regimes} presents the estimates and corresponding credible intervals for the expected number of purchases at the end of the email campaign. 
\begin{figure}[htbp]
    \centering
    \includegraphics[width=.7\linewidth]{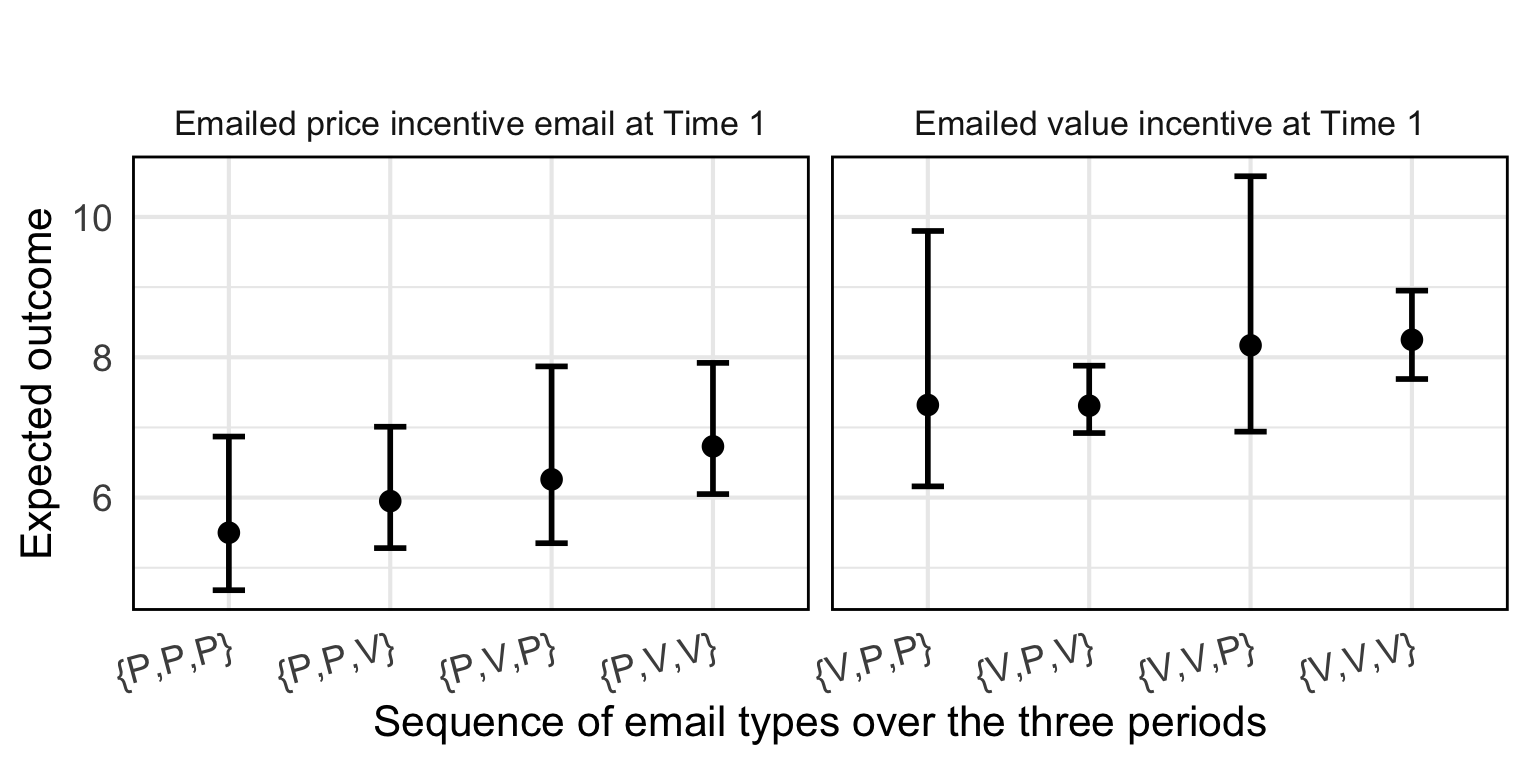}
    \caption{The expected number of purchases at time $t=3$, i.e., after the end of the email campaign, following different types of emails.} 
    \label{fig:effect_across_regimes}
\end{figure}
 Figure~\ref{fig:effect_across_regimes} shows that, overall, value-added offers are more appealing to customers than price-discount offers. Receiving a value-added email, rather than a price-discount email, at any of the three time points increases the expected number of purchases. For example, the email sequence \{V,V,V\} yields approximately $1.5$ times as many expected purchases as the sequence \{P,P,P\}, increasing from $5.50$ (95\% CI: $(4.69, 6.91)$) to $8.25$ (95\% CI: $(7.69, 8.95)$). All numerical values corresponding to Figure~\ref{fig:effect_across_regimes} are provided in Table~\ref{Table:AppendixThetaZZstar} in the Supplementary Material. 

 More broadly, within each panel of Figure~\ref{fig:effect_across_regimes}, the expected number of purchases exhibits a general upward trend as the number of value-added emails in the sequence increases, regardless of the temporal ordering. This pattern suggests that the benefit of value-added incentives is approximately cumulative: substituting an additional price-discount email with a value-added email tends to increase the expected purchase count, regardless of when the substitution occurs in the sequence. In fact, a value-added offer email sent at time $t=1$ has the largest estimated impact. 


\subsection{Examining the dynamics of indirect effects across mediator levels} \label{sec:54}

Next, we investigate how the two intermediate variables impact the indirect effects. Earlier, Table \ref{tab:naive_vs_proposed} showed a negative estimate of mediated indirect effect when comparing \{V,V,V\} to \{P,P,P\}. Under our potential outcomes framework, mediator values are drawn from their counterfactual conditional distributions, see Equation (\ref{Eq::joint_mediator_distn}). Our mediators encode how quickly customers open the promotional email and how quickly they make a purchase. In our problem, the company may have opportunities to influence customers by adjusting the mediator values. 

  To gain deeper insight into how the indirect effects evolve as the mediators increase, i.e., as the wait times in opening an email and making a purchase extend, we re-estimate the causal effects for the two target treatment regime contrasts (i) $\mathbf{Z} = \{\text{V,V,V}\}$ versus $\mathbf{Z}_{*} = \{\text{V,V,V}\}$ and (ii) $\mathbf{Z} = \{\text{P,P,P}\}$ versus $\mathbf{Z}_{*} = \{\text{P,P,P}\}$ by fixing the mediators at prespecified values rather than drawing them randomly from their mediator distributions. Specifically, for both target regime contrasts, we jointly fix the two mediator values in the $\mathbf{Z}$ arm at increasing delay times of 1, 8, 16, and 24 days, while fixing both mediators in the $\mathbf{Z}_{*}$ arm at 0 days. In other words, instead of evaluating the effects under the as-is value of the mediator (random draw from $\mathbfcal{G}^{\textbf{z}_{*,t},\textbf{d}_{t}(z_{*,t})}_{m_{t}}$), we impose a series of static interventions that fix the mediators at prespecified values to examine how causal estimates vary as the mediators increase. Contrasting between $\mathbf{Z}$ and $\mathbf{Z}_*$ isolates the interventional effect of the mediator and its interaction with the treatment.

\begin{figure}[h!]
    \centering
    \includegraphics[width=.7\linewidth]{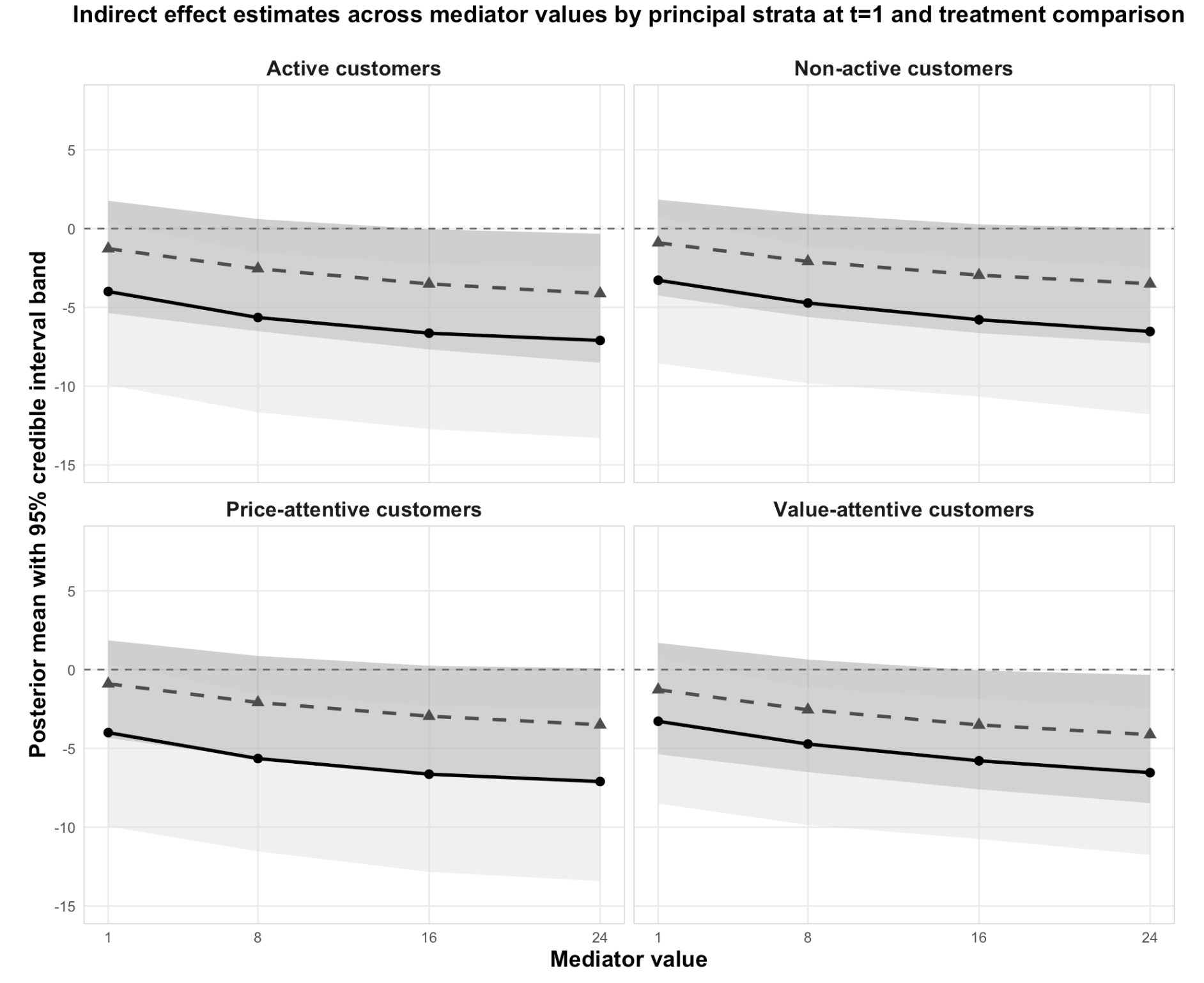}
    \caption{Indirect effects comparing different mediator values in the $\mathbf{Z}$ arm, with both mediators in the $\mathbf{Z}_{*}$ arm fixed at zero. The darker band (dashed line, triangles) corresponds to the \{P,P,P\} vs \{P,P,P\} regime; lighter band (solid line, circles) corresponds to the \{V,V,V\} vs \{V,V,V\} regime. }
    \label{fig:IndirectEffectEstimatesAcrossM181624VVVvsPPP.png}
\end{figure}

Figure~\ref{fig:IndirectEffectEstimatesAcrossM181624VVVvsPPP.png} plots the trends in indirect effects under static interventions on the mediators as functions of the mediator values. We observe that increasing the mediator values leads to a decline in the estimated indirect effects. These negative estimates likely reflect the fact that 
customers tend to become less engaged with the brand with longer waits between their purchases.

Consider the two comparisons (i) $\mathbf{Z} = \{\text{V,V,V}\}$ versus $\mathbf{Z}_{*} = \{\text{V,V,V}\}$ and (ii) $\mathbf{Z} = \{\text{P,P,P}\}$ versus $\mathbf{Z}_{*} = \{\text{P,P,P}\}$. The indirect effect is more negative for the former comparison. The results indicate that the waning interest or reduced purchase intent is more pronounced with value-added promotional emails. Section~\ref{sec:S7} of the Supplementary Material expands on these results and shows, among all regimes, that the smallest and the largest negative indirect effects are seen for \{P,P,P\} and \{V,V,V\} respectively. A comparison of the indirect effects across all eight regimes at $T=3$ is provided in Figure~\ref{fig:IndirectEffectEstimatesAcrossM181624.png} in the Supplementary Material.


\subsection{Optimal emailing sequence customized to individual behavior}\label{sec:55}

The findings so far established the direct and indirect effects of different incentive emails sent over time. They further quantified the benefits of sending emails early.

 A general goal here is to automate which sequence of emails should be sent to the customers. Figure \ref{fig:effect_across_regimes} shows that the globally optimal email sequence type is \{V,V,V\}. If this email is sent to the customers, the value-attentive, price-attentive, non-active, and active customers make an estimated 8.05, 8.25, 8.05, and 8.25 purchases at the end of the email campaign, respectively.  %

\begin{figure}[!h]
    \centering
    \includegraphics[width = .92\linewidth]{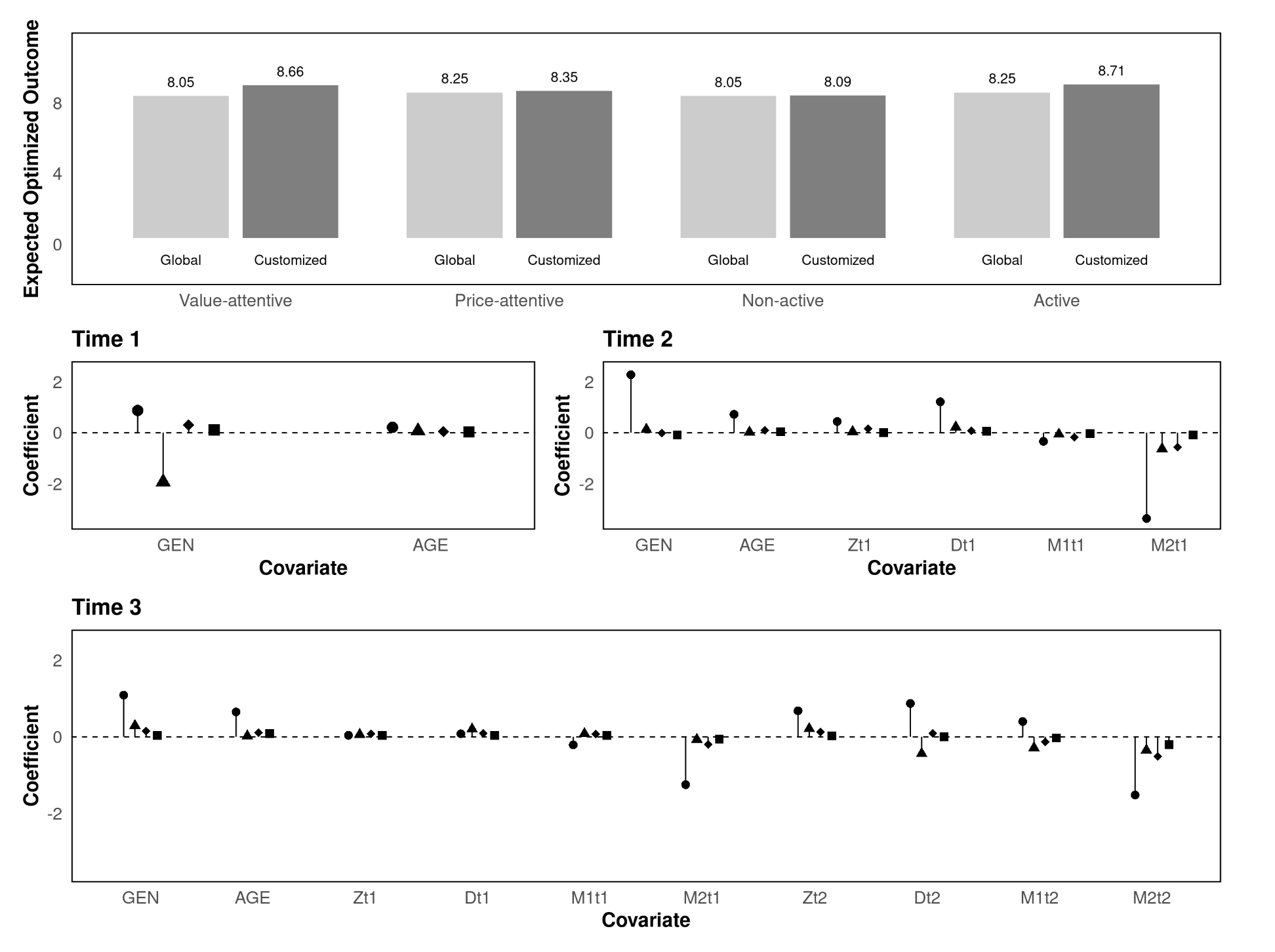}
    \caption{Optimizing email sequence for the expected number of purchases. The top panel compares the expected number of purchases under the global optimal \{V,V,V\} with those under the customized regime. The bottom panels show the coefficients for the logistic assignment model to value incentive email vs price incentive email at times 1, 2 and 3 of the covariates available till then. The symbols dot, triangle, rhombus, and square are for value-attentive, price-attentive, non-active, and active customers, respectively. 
    }
    \label{Fig:OptimEmails}
\end{figure}

 The globally optimal email sequence is deterministic and does not account for heterogeneity across customers. Thus, there is likely an opportunity to further increase purchases by designing a more individualized email campaign strategy. Such an individualized strategy could still be deterministic, assigning exactly one email sequence based on the observed characteristics of the customer. However, if the customer base changes---consequently the effects change---a deterministic strategy prevents us from evaluating these causal effects because of the violation of the treatment-overlap assumption required for causal inference.

 Thus, we design a stochastic customized email campaign that assigns probabilities to which incentive is sent based on the observed characteristics of the individuals. We fit a logistic model at each time point. The assignment model takes as input the baseline covariates gender and age for assignment at time $t=1$, 
the baseline covariates along with the email sent, the email opened, and the mediator values observed at time $t = 1$ for assignment at time $t=2$, and so on.

  The results reported in Figure \ref{Fig:OptimEmails} show a consistent increase in the estimated purchase count under the customized strategy over the globally optimal fixed strategy across the principal strata. The increase is most pronounced for value-attentive customers whose estimated number of purchases increases from 8.05 under the global strategy \{V,V,V\} to 8.66 under the designed customized regime, and is least pronounced for non-active customers where it  increases from 8.05 to 8.09.

  Figure \ref{Fig:OptimEmails} also shows the coefficients for logistic assignment models at each time point. It is clear from Figure \ref{Fig:OptimEmails} that the coefficients differ across the principal strata. Notably, the coefficients differ more for value-attentive customers compared to the other three customer types, indicating that value-attentive customers are targeted differently. For example, when a value-attentive customer opens their emails at either time $t=1$ or $t=2$ (i.e., $D_{i1}=1$ or $D_{i2}=1$), they are more likely to be assigned a value-incentive email in the following time period. Additionally, for all customers, waiting longer to purchase, i.e., larger mediator values, typically results in a higher chance of being assigned a price incentive email. This chance is also much higher for value-attentive customers than for others. 
  By calculating the coefficients of the optimal assignment model, we provide the company with a readily implementable email marketing strategy.

\section{Simulation studies}\label{sec:6} 

We conduct simulation studies to evaluate the performance of the proposed EDPM model 
by (i) assessing how the EDPM model performs compared to a correctly specified parametric model, and (ii) examining the robustness of the EDPM model under parametric misspecification. Mimicking the case study, the two different data-generating models used in our analysis are based on parameter estimates obtained from fitting parametric models to the empirical dataset described in Section~\ref{sec:21}. For each scenario, we generate $R = 500$ replicated datasets and fit the proposed EDPM model and compare it against a single-component Bayesian parametric model. Section~\ref{sec:S6} of the Supplementary Material details the data-generating mechanism, simulation scenarios, and estimation procedures.

\subsection{Results under correct model specification}\label{sec:61} 

The top half of Table~\ref{Table:SimTrueParamModel} summarizes the results at $n = 15{,}000$ for Scenario~1, where the data-generating mechanism matches the parametric model used for estimation. The corresponding results at $n = 12{,}000$, reported in Table~\ref{Table:SimTrueParamModel_n12000} in the Supplementary Material,  yield the same substantive conclusions. The results show that both approaches recover the principal interventional direct, indirect, and total effects well across all principal strata. The correctly specified parametric model is efficient, with negligible bias and consistently small MSE. The proposed flexible EDPM model remains competitive as its estimation biases are generally small relative to the magnitude of the underlying causal effects 
across direct, indirect, and total effects.

\begin{table}[ht]
\caption{Simulation results at sample size $n = 15{,}000$ for comparing the bias, MSE, and coverage of 95\% credible intervals under a Bayesian parametric model versus EDPM model. The Bayesian parametric model is \textit{correctly specified} for the single-component parametric data-generating model, but \textit{misspecified} for the ten-component parametric data-generating model}
\label{Table:SimTrueParamModel}
\resizebox{\textwidth}{!}{
\begin{tabular}{@{}lccccccccccc@{}}
\hline
Principal &  &  &  & \multicolumn{4}{c}{Bayesian parametric model} & \multicolumn{4}{c}{EDPM model} \\
\cline{5-8} \cline{9-12}
strata & n & Effects & Ground truth & Bias & $95\%$ & Average CI & MSE & Bias & $95\%$ & Average CI & MSE \\
at $t=1$ &  &  &  &  & coverage & width &  &  & coverage & width &  \\
\hline
\multicolumn{12}{c}{Scenario 1: Single component parametric data generating model} \\
\hline
Active & 15000 & Direct & 1.55 & -0.0040 & 0.99 & 0.53 & 0.0092 &  0.0168   &    0.95  &   0.74  &          0.0377  \\
customers &  & Indirect & -0.08 & 0.0437 & 1.00 & 1.04 & 0.0021 &  0.0211    &   0.94   &    0.13 &  0.0013  \\
 &  & Total & 1.47 & 0.0397 & 1.00 & 1.04 & 0.0108 &  0.0379   &    0.95  &   0.74 & 0.0386  \\
\cline{1-12}
Non-price & 15000 & Direct & 1.57 & -0.0054 & 0.99 & 0.54 & 0.0095 &  -0.0108  &     0.94  &    0.87      &      0.0512  \\
value-attentive &  & Indirect & -0.08 & 0.0400 & 1.00 & 1.16 & 0.0019 &  0.0245  &     0.91    &   0.17  &     0.0025  \\
customers &  & Total & 1.49 & 0.0345 & 1.00 & 1.14 & 0.0107 &  0.0137  &     0.94   &   0.87 &  0.0509  \\
\cline{1-12}
Price-attentive & 15000 & Direct & 1.53 & -0.0033 & 1.00 & 0.53 & 0.0092 &  0.0422  &     0.94  &   0.88 & 0.0597  \\
customers &  & Indirect & -0.08 & 0.0471 & 1.00 & 1.05 & 0.0023 &  0.0205   &    0.94  &   0.16  &  0.0017  \\
 &  & Total & 1.45 & 0.0439 & 1.00 & 1.04 & 0.0111 & 0.0627 &      0.94  &   0.87 &  0.0611  \\
\cline{1-12}
Non-active & 15000 & Direct & 1.55 & -0.0049 & 0.99 & 0.53 & 0.0093 &  0.0138   &    0.96   &  0.74 &           0.0374  \\
customers &  & Indirect & -0.08 & 0.0440 & 1.00 & 1.02 & 0.0021 &  0.0248   &    0.92  &     0.13 &  0.0014  \\
 &  & Total & 1.47 & 0.0391 & 1.00 & 1.03 & 0.0108 &  0.0386   &    0.94   &   0.73 &  0.0383 \\
\hline
\multicolumn{12}{c}{Scenario 2: Ten component parametric data generating model} \\
\hline
Active & 15000 & Direct & 1.73 & $-0.3416$ & 0.02 & 0.35 & 0.1244 & $0.2679$ & 0.65 & 0.73 & 0.1254 \\
customers &  & Indirect & 0.03 & $-0.1809$ & 0.00 & 0.10 & 0.0331 & $-0.1713$ & 0.58 & 0.45 & 0.0656 \\
 &  & Total & 1.76 & $-0.5225$ & 0.00 & 0.36 & 0.2809 & $0.0967$ & 0.84 & 0.80 & 0.0861 \\
\cline{1-12}
Non-price & 15000 & Direct & 1.74 & $-0.3410$ & 0.02 & 0.36 & 0.1242 & $0.2235$ & 0.77 & 0.82 & 0.1151 \\
value-attentive &  & Indirect & 0.04 & $-0.1932$ & 0.00 & 0.11 & 0.0379 & $-0.1154$ & 0.71 & 0.58 & 0.0726 \\
customers &  & Total & 1.78 & $-0.5341$ & 0.00 & 0.36 & 0.2936 & $0.1081$ & 0.85 & 0.90 & 0.1173 \\
\cline{1-12}
Price-attentive & 15000 & Direct & 1.71 & $-0.3407$ & 0.03 & 0.35 & 0.1238 & $0.3107$ & 0.65 & 0.83 & 0.1624 \\
customers &  & Indirect & 0.02 & $-0.1710$ & 0.00 & 0.09 & 0.0295 & $-0.1772$ & 0.56 & 0.46 & 0.0726 \\
 &  & Total & 1.74 & $-0.5117$ & 0.00 & 0.36 & 0.2697 & $0.1335$ & 0.82 & 0.90 & 0.1079 \\
\cline{1-12}
Non-active & 15000 & Direct & 1.73 & $-0.3420$ & 0.03 & 0.35 & 0.1247 & $0.2638$ & 0.64 & 0.72 & 0.1225 \\
customers &  & Indirect & 0.03 & $-0.1816$ & 0.00 & 0.10 & 0.0334 & $-0.1194$ & 0.71 & 0.53 & 0.0618 \\
 &  & Total & 1.76 & $-0.5236$ & 0.00 & 0.36 & 0.2821 & $0.1444$ & 0.82 & 0.82 & 0.1042 \\
\hline
\end{tabular}}
\end{table}

 The main difference between the two approaches lies in the trade-off between efficiency and flexibility. Under this correct specification, the parametric model achieves a smaller MSE for direct and total effects, reflecting the efficiency advantage of a low-dimensional correctly specified model. The EDPM incurs a modest variance penalty from modeling the unknown joint distribution through a mixture representation, but this does not materially affect inference. For indirect effects, EDPM often shows slightly smaller bias and narrower intervals, suggesting that its flexible mixture structure can adapt to localized features of the mediator distribution.

 The parametric model shows empirical coverage at or near one across nearly all effects and strata, indicating conservative inference with mild over-coverage. In contrast, the EDPM achieves coverage closer to the nominal $95\%$ level, while producing somewhat wider intervals for direct and total effects. Overall, the parametric model offers greater efficiency under correct specification and the EDPM provides better-calibrated uncertainty quantification.

 Thus, the flexible EDPM closely tracks the correctly specified parametric model. 
 Figure~\ref{Fig:Sim_TRUE_EDPM_num_outer_clusters} in the Supplementary Material further shows this adaptive behavior: across replicated datasets, the model favors a richer latent structure with multiple effective outer clusters for parametric data-generating models. 

\subsection{Results under model misspecification}\label{sec:62}

Next, simulation Scenario~2 specifies an $A=10$ component finite mixture, so that the single-component Bayesian parametric model is misspecified; details in Section~\ref{sec:S6} of the Supplementary Material.
This setting reflects substantial latent heterogeneity in the distributions of the outcome, mediators, and treatment-related processes, and therefore provides a solid ground for evaluating the EDPM model.

The bottom half of Table~\ref{Table:SimTrueParamModel}
shows a clear contrast between the two approaches at $n = 15{,}000$. The corresponding results at $n = 12{,}000$ are reported in Table~\ref{Table:SimMisspecifiedParamModel_10comp_n12000} in the Supplementary Material.
 The misspecified parametric model performs poorly across all principal strata and all causal effects. Direct and total effects are consistently biased downward by roughly 20\% of their true magnitude, while indirect effects are also biased relative to their much smaller scale. These biases persist across both sample sizes ($n=12{,}000$ and $n=15{,}000$), indicating that larger samples do not mitigate structural misspecification. At the same time, the model produces narrow credible intervals with empirical coverage close to zero for nearly all estimands.

 In contrast, the EDPM results in substantially smaller biases and better calibrated intervals across all strata compared to the parametric model. For direct and total effects, absolute bias is consistently smaller, with corresponding reductions in MSE for total effects and comparable MSE for direct effects despite somewhat wider intervals. 
 However, it is notable that the EDPM model provides frequentist coverage below the nominal $95\%$ level, even though it is uniformly better calibrated than the near-zero coverage of the parametric alternative across all causal parameters and sample sizes. This is expected because while Bayesian parametric models are known to provide approximately nominal coverages of finite dimensional parameters in large samples (by the Bernstein--von Mises theorem), Bayesian nonparametric models do not conform to frequentist inferential properties without curated prior choices or posterior adjustments (see \cite{castillo2015bernstein, yiu2025semiparametric}). We do not pursue such frequentist goals as our modeling and inference for our case study rely on Bayesian point estimates and credible intervals.

 Overall, these results show that the EDPM provides meaningful protection against model misspecification by adaptively capturing latent heterogeneity that the single-component parametric model cannot represent. 
 Figure~\ref{Fig:Sim_MISSPECIFIED_EDPM_num_outer_clusters} in the Supplementary Material further supports this interpretation by showing that the EDPM adaptively recruits multiple effective outer clusters to represent the underlying heterogeneity, rather than imposing a fixed parametric form. Table~\ref{Table:SimMisspecifiedParamModel} in the Supplementary Material further reports analysis under a three-component mixture mechanism and shows that the EDPM achieves near-nominal coverage for direct and total effects, whereas the parametric model continues to exhibit substantial bias and undercoverage. 

\section{Discussion}\label{sec:7}

Our goal was to estimate the efficacy of a longitudinal digital communication campaign run on $18{,}571$ customers and design a campaign strategy that could maximize sales. 
For this purpose, we build a Bayesian semi-parametric framework for estimating principal interventional direct and indirect effects in longitudinal settings with treatment noncompliance and multiple time-varying mediators. The proposed approach combines enriched Dirichlet process mixture (EDPM) models with a G-computation step to flexibly estimate causal mediation effects within principal strata of treatment compliance. 
The post-estimation G-computation step, which constitutes the primary computational burden for causal effect estimation, is highly parallelizable across posterior draws and principal strata, making the inference step scalable to large posterior samples.

 
 Our analysis yields several findings with direct implications for the retailer's email marketing strategy. First, sending value-added incentive emails leads to higher expected purchase counts compared to price-discount emails, and this advantage is cumulative as the direct effect on purchases grows monotonically as additional time points receive value-added emails. Second, the causal effect estimates are stable across the four principal strata at $t=1$, indicating that the relative benefit of value-added emails is robust to heterogeneity in initial email-opening behavior. Finally, the individualized sequential emailing strategy developed in Section~\ref{sec:55} demonstrates that a logistic model for determining promotion assignment at each time point can improve expected outcomes relative to global emailing strategies across all latent compliance groups.

We highlight a few practical future directions here. Modern marketing campaigns are often multi-channel, involving email, SMS, mobile notifications, and on-site advertising, and extending our framework to accommodate channel-specific treatment vectors would allow evaluation of dynamic cross-channel intervention strategies and customer fatigue effects. Incorporating multimodal learning frameworks that jointly model behavioral, transactional, and channel interaction data could further improve personalization by capturing complex dependencies across heterogeneous customer information sources. In addition, the analysis of longer campaign horizons with open
populations can expand the scope of our framework. While our dataset has no dropout, real campaigns often have customers who may enter or leave the system over time, requiring new formulations
of principal strata and longitudinal mediation effects under staggered enrollment. Finally, although the EDPM substantially reduces bias arising from nuisance-model misspecification relative to Bayesian parametric models, its credible intervals are not guaranteed to achieve nominal frequentist coverage under misspecification. Therefore, incorporating a semiparametric correction \citep{yiu2025semiparametric} within our framework may help address this limitation.

\section*{Code availability}

All code for the data analysis and simulation studies is available at
\begingroup
\def\UrlBreaks{\do\/\do-\do.\do:}\sloppy
\url{https://github.com/SBstats/Bayesian-Nonparametric-Causal-Mediation-DiscreteTime-Noncompliance}.
\endgroup

\begin{funding}
Michael J. Daniels was supported by NIH grant R01 HL166324.
\end{funding}

\bibliographystyle{imsart-nameyear}
\bibliography{bibliography}

\clearpage

\section*{Supplementary Material}

\setcounter{section}{0}
\setcounter{table}{0}
\setcounter{figure}{0}
\setcounter{equation}{0}
\renewcommand{\thesection}{S\arabic{section}}
\renewcommand{\thesubsection}{S\arabic{section}.\arabic{subsection}}
\renewcommand{\thesubsubsection}{S\arabic{section}.\arabic{subsection}.\arabic{subsubsection}}
\renewcommand{\thetable}{S\arabic{table}}
\renewcommand{\thefigure}{S\arabic{figure}}
\renewcommand{\theequation}{S\arabic{equation}}
\renewcommand{\thealgorithm}{S\arabic{algorithm}}
\renewcommand{\theHsection}{S\arabic{section}}
\renewcommand{\theHtable}{S\arabic{table}}
\renewcommand{\theHfigure}{S\arabic{figure}}
\renewcommand{\theHequation}{S\arabic{equation}}

\section{Details on the Bayesian parametric model used for the naive analysis of causal effects}\label{sec:S1}

For the naive analysis of the causal effects in Section~\ref{sec:51} of the main text, we assume that the actual treatment receipt status of participants is unknown. Consequently, the observed data are given by
\[
\big\{\mathcal{O}^{\text{naive}}_{i}\big\}_{i=1}^{n} \equiv \big\{L_{i0}, Z_{it}, M^{(1)}_{it}, M^{(2)}_{it}, Y_{i}\big\}_{i=1}^{n}, \quad t \in \{1, \ldots, T\}.
\]

\subsection{Naive causal estimands}
Suppose $\mathbf{z}_{T}$ and $\mathbf{z}_{*,T}$ represent the treatment regime vectors across $T$ periods. For this analysis, we set $\mathbf{z}_{T} = \{1,1,1\}$ and $\mathbf{z}_{*,T} = \{0,0,0\}$ with $T=3$. We define the naive direct, indirect, and total effects as: 
        \begin{equation}\label{Eq:naive_effects}
            \begin{aligned}
                 \text{Direct Effect: } DE^{\text{naive}} 
                &=  \mathbb{E}\Bigl[Y\Bigl(\textbf{z}_{T}, \mathbfcal{G}^{\textbf{z}_{*,T}}_{m_{T}}\Bigr) 
                - Y\Bigl(\textbf{z}_{*,T}, \mathbfcal{G}^{\textbf{z}_{*,T}}_{m_{T}}\Bigr) \Bigr]
                \\
                \text{Indirect Effect: } IE^{\text{naive}} 
                &=  \mathbb{E}\Bigl[Y\Bigl(\textbf{z}_{T}, \mathbfcal{G}^{\textbf{z}_{T}}_{m_{T}}\Bigr) 
                - Y\Bigl(\textbf{z}_{T}, \mathbfcal{G}^{\textbf{z}_{*,T}}_{m_{T}}\Bigr) 
                \Bigr], \text{ and }
                \\
                \text{Total Effect: } TE^{\text{naive}} 
                &=  \mathbb{E}\Bigl[Y\Bigl(\textbf{z}_{T}, \mathbfcal{G}^{\textbf{z}_{T}}_{m_{T}}\Bigr) 
                - Y\Bigl(\textbf{z}_{*,T}, \mathbfcal{G}^{\textbf{z}_{*,T}}_{m_{T}}\Bigr) \Bigr].
            \end{aligned}
        \end{equation}
In this representation, 
\begin{equation}\label{Eq::naive_joint_mediator_distn}
            \thinmuskip=0mu\relax\medmuskip=0mu\relax\thickmuskip=0mu\relax
            \begin{aligned}
                & \mathcal{G}^{\textbf{z}_{*,t}}_{(m_{t}^{(1)},  m_{t}^{(2)} | \textbf{m}_{t-1}^{(1)},\textbf{m}_{t-1}^{(2)}, \boldsymbol{\ell}_{0} )}  
                \\& 
                =  f\biggl(\Bigl(M_{t}^{(1)}\bigl(\textbf{z}_{*,t}\bigr), M_{t}^{(2)}\bigl(\textbf{z}_{*,t} \bigr)\Bigr) = \bigl(m_{t}^{(1)}, m_{t}^{(2)}\bigr)
                \big| 
                \Bigl(\textbf{M}_{t-1}^{(1)}\bigl(\textbf{z}_{*,t-1}\bigr),  \textbf{M}_{t-1}^{(J)}\bigl(\textbf{z}_{*,t-1}\bigr)\Bigr) = \bigl( \textbf{m}_{t-1}^{(1)}, \textbf{m}_{t-1}^{(2)}\bigr),L_0 = \ell_0\biggr) 
            \end{aligned} 
        \end{equation}
        denotes the joint conditional density of the counterfactual mediators, $\Big(M_{t}^{(1)}\bigl(\textbf{z}_{*,t}\bigr), $ $ M_{t}^{(J)}\bigl(\textbf{z}_{*,t}\bigr)\Big)$, in the world where the treatment assignment is set to $\textbf{Z}_{t} = \textbf{z}_{*,t}$ for $t =1,2,3$. This conditional density is defined analogously to the joint conditional density of the counterfactual mediators $\mathcal{G}^{\mathbf{z}_{*,t},\mathbf{d}_{t}(z_{*,t})}_{(m_{t}^{(1)}, \ldots, m_{t}^{(J)} \mid \mathbf{d}_t, \mathbf{m}_{t-1}^{(1)}, \ldots, \mathbf{m}_{t-1}^{(J)}, \boldsymbol{\ell}_{0} )}$  introduced in Section~\ref{sec:3.2} of the main text, except that the naive formulation does not condition on treatment receipt statuses $\mathbf{d}_t$. Now, let 
        \begin{equation}
            \begin{aligned}
                \theta^{\text{naive}}(\textbf{z}_{T},\textbf{z}_{*,T}) = \mathbb{E}\Bigl[Y\Bigl(\textbf{z}_{T}, \mathbfcal{G}^{\textbf{z}_{*,T}}_{m_{T}}\Bigr) \Bigr]
            \end{aligned}
        \end{equation}
        for the exposure regimes $\textbf{z}_{T}$ and $\textbf{z}_{*,T}$. We can write the naive causal effects in (\ref{Eq:naive_effects}) as functions of $\theta(\textbf{z}_{T},\textbf{z}_{*,T})$ as follows:
        
\begin{equation}\label{Eq::naive_effects_defn}
    \begin{aligned}
        \text{TE}^{\text{naive}} &= \theta^{\text{naive}}(\textbf{z}_{T},\textbf{z}_{T}) - \theta^{\text{naive}}(\textbf{z}_{*,T},\textbf{z}_{*,T}), \\
        \text{DE}^{\text{naive}} &= \theta^{\text{naive}}(\textbf{z}_{T},\textbf{z}_{*,T}) - \theta^{\text{naive}}(\textbf{z}_{*,T},\textbf{z}_{*,T}), \text{ and } \\
        \text{IE}^{\text{naive}} &=  \theta^{\text{naive}}(\textbf{z}_{T},\textbf{z}_{T}) - \theta^{\text{naive}}(\textbf{z}_{T},\textbf{z}_{*,T}).
    \end{aligned}
\end{equation}      

The estimation of $\theta^{\text{naive}}(\textbf{z}_{T},\textbf{z}_{*,T})$--- and consequently of the naive direct, indirect, and total effects---reduces to the standard mediation analysis problem with multiple mediators. We follow the same estimation procedure prescribed in the main text, but we do not require the additional identification assumptions (like Assumption 3, Assumption 4.1, and so on, in the main text) or the Monte Carlo steps (like Step 3 and Step 4(b) of Algorithm~\ref{Alg1:Gcomp}) that arise from modeling potential treatment receipt statuses and principal strata. In other words, the naive analysis avoids the complications introduced by counterfactual treatment receipt and principal stratification.

\subsection{Bayesian parametric model}

We fit the following parametric Bayesian model to $\big\{\mathcal{O}^{\text{naive}}_{i}\big\}_{i=1}^{n}$:
\begin{equation}\label{Eq:NaiveParamModel}
\resizebox{\textwidth}{!}{$\displaystyle
    \begin{aligned}
          & Y_{i}\big| \textbf{M}_{i,T}^{(1)}, \textbf{M}_{i,T}^{(2)}, \textbf{Z}_{i,T}, L_{i0} \sim
         \begin{cases}
             0, & \text{with probability } \pi_Y, \\[4pt]
             N\big(\boldsymbol{\gamma}_{M^{(1)}}^\top \textbf{m}^{(1)}_{i,T}
             + \boldsymbol{\gamma}_{M^{(2)}}^\top \textbf{m}^{(2)}_{i,T}
             + \boldsymbol{\gamma}_{Z}^\top \textbf{z}_{i,T}
             + \boldsymbol{\gamma}_{L_{0}}\ell_{i0}, \, \sigma^{2}_{Y}\big), & \text{with probability } 1 - \pi_Y,
         \end{cases}
         \\[6pt]
        & M_{i,t}^{(2)} \big| \textbf{M}_{i,t}^{(1)}, \textbf{M}_{i,t-1}^{(2)}, \textbf{Z}_{i,t}, L_{i0} \sim
         \begin{cases}
             0, & \text{with probability } \pi_{M_{t}^{(2)}}, \\[4pt]
             N\big(\boldsymbol{\eta}_{t,M^{(1)}}^\top \textbf{m}^{(1)}_{i,t}
             + \boldsymbol{\eta}_{t-1,M^{(2)}}^\top \textbf{m}^{(2)}_{i,t-1}
             + \boldsymbol{\eta}_{t,Z}^\top \textbf{z}_{i,t}
             + \boldsymbol{\eta}_{t,L_{0}}\ell_{i0}, \, \sigma^{2}_{M_{t}^{(2)}}\big), & \text{with probability } 1 - \pi_{M_{t}^{(2)}},
         \end{cases}
         \\[6pt]
        & M_{i,t}^{(1)} \big| \textbf{M}_{i,t-1}^{(1)}, \textbf{M}_{i,t-1}^{(2)}, \textbf{Z}_{i,t}, L_{i0} \sim
         \begin{cases}
             0, & \text{with probability } \pi_{M_{t}^{(1)}}, \\[4pt]
             N\big(\boldsymbol{\xi}_{t,M^{(1)}}^\top \textbf{m}^{(1)}_{i,t-1}
             + \boldsymbol{\xi}_{t,M^{(2)}}^\top \textbf{m}^{(2)}_{i,t-1}
             + \boldsymbol{\xi}_{t,Z}^\top \textbf{z}_{i,t}
             + \boldsymbol{\xi}_{t,L_{0}}\ell_{i0}, \, \sigma^{2}_{M_{t}^{(1)}}\big), & \text{with probability } 1 - \pi_{M_{t}^{(1)}}.
         \end{cases}
    \end{aligned}
$}
\end{equation}
for $t = 1, 2, 3$. 

 The model in~\eqref{Eq:NaiveParamModel} is fitted to the empirical dataset using \texttt{Nimble}. Parameter estimation is performed via MCMC, running 20{,}000 iterations and discarding the first 10{,}000 as burn-in. The remaining 10{,}000 iterations are thinned by a factor of 10, yielding 1{,}000 posterior samples for inference. At each retained iteration, we apply a G-computation algorithm---analogous to Algorithm~\ref{Alg1:Gcomp} but adapted to the parametric setting in the absence of treatment receipt status---using 10{,}000 Monte Carlo samples (at each MCMC iteration) to compute the principal interventional direct, indirect, and total effects in (\ref{Eq::naive_effects_defn}). This procedure yields 1{,}000 posterior draws (one per retained MCMC iteration) of the principal interventional effects, from which posterior means and 95\% Bayesian credible intervals are obtained. These are referred to as the ''naive'' effect estimates.

\section{Non-parametric identification}\label{sec:S2}

\subsection{Proof of Proposition~\ref{prop1} in the main text}
Proposition~\ref{prop1} in the main text states that for any two treatment regimes $\textbf{z}$, $\textbf{z}_{*}$, $\theta(\textbf{z},\textbf{z}_{*}) $ is identified using Assumptions 1-5 as:
\begin{equation}\label{Eq:PropGformula_NoRE}
\resizebox{0.975\textwidth}{!}{$\displaystyle
        \begin{aligned}
            \theta(\textbf{z},\textbf{z}_{*})
            &=    \int_{\ell_{i0}} \int_{\textbf{d}_{i,2:T}} \int_{\textbf{m}^{(1)}_{i,T},\ldots, \textbf{m}^{(J)}_{i,T}}  \mathbb{E}\Bigl[ Y\big|\textbf{Z}_{i,T} = \textbf{z}_{i,T}, \textbf{D}_{i,T} = \textbf{d}_{i,T}, \textbf{M}^{(1)}_{i,T} = \textbf{m}^{(1)}_{i,T}, \ldots, \textbf{M}^{(J)}_{i,T} =  \textbf{m}^{(J)}_{i,T}, L_{i0} = \ell_{i0}\Bigr] \times
            \\&
            \Bigl\{ \prod_{t=1}^{T}  f\bigl( M_{i,t}^{(1)} = m_{i,t}^{(1)},\ldots, M_{i,t}^{(J)} = m_{i,t}^{(J)}\big|\textbf{Z}_{i,t} = \textbf{z}_{*,i,t}, \textbf{D}_{i,t} = \textbf{d}_{i,t}, \textbf{M}^{(1)}_{i,t-1} =  \textbf{m}^{(1)}_{i,t-1}, \ldots, \textbf{M}^{(J)}_{i,t-1} =\textbf{m}^{(J)}_{i,t-1}, L_{i0} =\ell_{i0}
             \bigr)
            \Bigr\} \times \\&
            \Bigl\{ \prod_{s=2}^{T} f\Bigl(d_{i,s}\big|  \textbf{Z}_{i,s} = \textbf{z}_{i,s}, \textbf{D}_{i,s-1} = \textbf{d}_{i,s-1},\textbf{M}^{(1)}_{i,s-1} =  \textbf{m}^{(1)}_{i,s-1}, \ldots, \textbf{M}^{(J)}_{i,s-1} =\textbf{m}^{(J)}_{i,s-1},  L_{i0} = \ell_{i0}\Bigr)\Bigr\} \times \\&
            f_{L_{i0}}(\ell_{i0})
             d(\textbf{m}^{(1)}_{i,T},\ldots, \textbf{m}^{(J)}_{i,T}) d(\textbf{d}_{i,2:T}) d\ell_{i0}.
        \end{aligned}
$}
\end{equation}

For conciseness, let $\textbf{b}_i = \big\{b_I^{D},b_i^{M^{(1)}}, \ldots, b_i^{M^{(J)}}  \big\}$ denote the vector of random effects for subject $i$. Equation (\ref{Eq:PropGformula_NoRE}) can then be re-expressed to incorporate these random effects as follows:
\begin{equation} \label{Eq::non-paramEstimatorWithRE}
\resizebox{\textwidth}{!}{$\displaystyle
        \begin{aligned}
            \theta(\textbf{z},\textbf{z}_{*})
            &=  \int_{\textbf{b}_{i}}  \int_{\ell_{i0}} \int_{\textbf{d}_{i,2:T}} \int_{\textbf{m}^{(1)}_{i,T},\ldots, \textbf{m}^{(J)}_{i,T}}  \mathbb{E}\Bigl[ Y\big|\textbf{Z}_{i,T} = \textbf{z}_{i,T}, \textbf{D}_{i,T} = \textbf{d}_{i,T}, \textbf{M}^{(1)}_{i,T} = \textbf{m}^{(1)}_{i,T}, \ldots, \textbf{M}^{(J)}_{i,T} =  \textbf{m}^{(J)}_{i,T}, L_{i0} = \ell_{i0}\Bigr] \times
            \\&
            \Bigl\{ \prod_{t=1}^{T}  f\bigl( M_{i,t}^{(1)} = m_{i,t}^{(1)},\ldots, M_{i,t}^{(J)} = m_{i,t}^{(J)}\big|\textbf{Z}_{i,t} = \textbf{z}_{*,i,t}, \textbf{D}_{i,t} = \textbf{d}_{i,t}, \textbf{M}^{(1)}_{i,t-1} =  \textbf{m}^{(1)}_{i,t-1}, \ldots, \textbf{M}^{(J)}_{i,t-1} =\textbf{m}^{(J)}_{i,t-1}, L_{i0} =\ell_{i0},
            \\&
            b_{i}^{M^{(1)}}, \ldots, b_{i}^{M^{(J)}} \bigr)
            \Bigr\} \times
            \Bigl\{ \prod_{s=2}^{T} f\Bigl(d_{i,s}\big|  \textbf{Z}_{i,s} = \textbf{z}_{i,s}, \textbf{D}_{i,s-1} = \textbf{d}_{i,s-1},\textbf{M}^{(1)}_{i,s-1} =  \textbf{m}^{(1)}_{i,s-1}, \ldots, \textbf{M}^{(J)}_{i,s-1} =\textbf{m}^{(J)}_{i,s-1},  L_{i0} = \ell_{i0}, b_{i}^{D} \Bigr)\Bigr\} \times \\&
            f_{L_{i0}}(\ell_{i0}) \times f(\textbf{b}_i)
            \\&
            d\textbf{b}_{i} d(\textbf{m}^{(1)}_{i,T},\ldots, \textbf{m}^{(J)}_{i,T}) d(\textbf{d}_{i,2:T}) d\ell_{i0} d\textbf{b}_i.
        \end{aligned}
$}
    \end{equation}

\begin{proof}
\footnotesize
     By the definition of $\theta(\textbf{z},\textbf{z}_{*})$ , we have:
    \begin{equation}
       \scriptsize
       \thinmuskip=0mu\relax\medmuskip=0mu\relax\thickmuskip=0mu\relax
       \begin{aligned}
        \theta(\textbf{z},\textbf{z}_{*})
         &= \mathbb{E}\Bigl[Y\Bigl(\textbf{z}_{T}, \textbf{d}_{T}(z_{T}),\mathbfcal{G}^{\textbf{z}_{*,T},\textbf{d}_{T}(z_{*,T})}_{m_{T}}\Bigr)\big| U_1 = \bigl(d_{1}(1), d_{1}(0)\bigr) \Bigr]
        \\
        \\
        &=\int_{\ell_0} \int_{\textbf{d}_{2:T}}\int_{\bigl(\textbf{m}^{(1)}_T, \ldots,\textbf{m}^{(J)}_T\bigr)}  \mathbb{E}\Bigl[Y\Bigl(\textbf{z}_{T}, \textbf{d}_{T}(z_{T}),\mathbfcal{G}^{\textbf{z}_{*,T},\textbf{d}_{T}(z_{*,T})}_{m_{T}}\Bigr)\big| U_1 = \bigl(d_{1}(1), d_{1}(0)\bigr), \textbf{D}_{2:T}(z_{2:T}) = \textbf{d}_{2:T}, \\&
         \mathbfcal{G}^{\textbf{z}_{*,T},\textbf{d}_{T}(z_{*,T})}_{m^{(1)}} =  \textbf{m}_T^{(1)}, \ldots,   \mathbfcal{G}^{\textbf{z}_{*,T},\textbf{d}_{T}(z_{*,T})}_{m^{(J)}} =  \textbf{m}_T^{(J)}, L_0 = \ell_0 \Bigr]  \times \\& 
            \Bigl\{ \prod_{t=1}^{T}   f\Bigl(\mathcal{G}^{\textbf{z}_{*,t},\textbf{d}_{T}(z_{*,T})}_{m^{(1)}} =  m_t^{(1)}, \ldots, \mathcal{G}^{\textbf{z}_{*,t},\textbf{d}_{T}(z_{*,T})}_{m^{(J)}} =  m_t^{(J)} \big| U_1 = \bigl(d_{1}(1), d_{1}(0)\bigr),  \textbf{D}_{2:t}(z_{*,2:t}) = \textbf{d}_{2:t}, \\& \mathbfcal{G}^{\textbf{z}_{*,t-1},\textbf{d}_{t-1}(z_{*,t-1})}_{m^{(1)}} =  \textbf{m}_{t-1}^{(1)},  \ldots,  
            \mathbfcal{G}^{\textbf{z}_{*,t-1},\textbf{d}_{t-1}(z_{*,t-1})}_{m^{(J)}} =  \textbf{m}_{t-1}^{(J)},  
             L_0 = \ell_0 \Bigr)  \Bigr\} \times 
             \\&
             \Bigl\{ \prod_{s=2}^{T} f\Bigl(D_s(z_s) = d_{s} \big| U_1 = \bigl(d_{1}(1), d_{1}(0)\bigr),  \textbf{D}_{s-1}(z_{s-1}) = \textbf{d}_{s-1}, \\& 
            \mathbfcal{G}^{\textbf{z}_{*,s-1},\textbf{d}_{s-1}(z_{*,s-1})}_{m^{(1)}} =  \textbf{m}_{s-1}^{(1)},  \ldots, \mathbfcal{G}^{\textbf{z}_{*,s-1},\textbf{d}_{s-1}(z_{*,s-1})}_{m^{(J)}} =  \textbf{m}_{s-1}^{(J)},
             L_0 = \ell_0 \Bigr)\Bigr\} \\&
            \times f_{L_0}(\ell_0)  d\bigl(\textbf{m}^{(1)}_T, \ldots,\textbf{m}^{(J)}_T\bigr) d\textbf{d}_{2:T} d\ell_0  
    \end{aligned}
   \end{equation}
where the second equality follows from the law of total probability $\big(f(A) = \int_{B} f(A|B)f(B)\big)$ iteratively for all $t \in \{1, \ldots, T \}$ and $s \in \{2, \ldots, T \}$. Below, we demonstrate the steps for $T = 2$, and the same arguments can be used for any $T>2$ by induction. Observe that when $T=2$, the previous equation becomes:

{\allowdisplaybreaks\scriptsize
\thinmuskip=0mu\relax\medmuskip=0mu\relax\thickmuskip=0mu\relax
    \begin{align*}
        &= \int_{\ell_0} \int_{\bigl(m^{(1)}_1, \ldots, m^{(J)}_1\bigr)} \int_{d_2}\int_{\bigl(m^{(1)}_2, \ldots, m^{(J)}_2\bigr)} \\&
        \mathbb{E}\Bigl[Y\Bigl(\textbf{z}_{2}, \textbf{d}_{2}(z_{2}),\mathbfcal{G}^{\textbf{z}_{*,2},\textbf{d}_{2}(z_{*,2})}_{m_{2}}\Bigr)\big| U_1 = \bigl(d_{1}(1), d_{1}(0)\bigr),  D_{2}(z_{2}) = d_{2},  \mathbfcal{G}^{\textbf{z}_{*,2},\textbf{d}_{2}(z_{*,2})}_{m^{(1)}} =  \textbf{m}_2^{(1)}, \ldots, \mathbfcal{G}^{\textbf{z}_{*,2},\textbf{d}_{2}(z_{*,2})}_{m^{(J)}} =  \textbf{m}_2^{(J)}, L_0 = \ell_0 \Bigr]  \times 
        \\& 
       f\Bigl(\mathcal{G}^{\textbf{z}_{*,2},\textbf{d}_{2}(z_{*,2})}_{m^{(1)}} =  m_2^{(1)}, \ldots, \mathcal{G}^{\textbf{z}_{*,2},\textbf{d}_{2}(z_{*,2})}_{m^{(J)}} =  m_2^{(J)} \big|U_1 = \bigl(d_{1}(1), d_{1}(0)\bigr),  D_{2}(z_{*,2}) = d_{2},  
        \mathcal{G}^{z_{*,1},d_{1}(z_{*,1})}_{m^{(1)}} =  m_{1}^{(1)}, \ldots, \mathcal{G}^{z_{*,1},d_{1}(z_{*,1})}_{m^{(J)}} =  m_{1}^{(J)},  
        \\& 
        L_0 = \ell_0 \Bigr)  \times
        f\Bigl(D_2(z_2) = d_{2}\big| U_1 = \bigl(d_{1}(1), d_{1}(0)\bigr),  \mathbfcal{G}^{z_{*,1},d_{1}(z_{*,1})}_{m^{(1)}} =  m_{1}^{(1)}, \ldots, \mathbfcal{G}^{z_{*,1},d_{1}(z_{*,1})}_{m^{(J)}} =  m_{1}^{(J)},  L_0 = \ell_0 \Bigr) \times 
        \\& 
        f\Bigl(\mathcal{G}^{z_{*,1},\textbf{d}_{1}(z_{*,1})}_{m^{(1)}} =  m_1^{(1)}, \ldots, \mathcal{G}^{z_{*,1},\textbf{d}_{1}(z_{*,1})}_{m^{(J)}} =  m_1^{(J)} \big|U_1 = \bigl(d_{1}(1), d_{1}(0)\bigr),  L_0 = \ell_0 \Bigr) \times  
        f_{L_0}(\ell_0) 
        \\&
        d\bigl(m^{(1)}_2, \ldots, m^{(J)}_2\bigr) d(d_2) d\bigl(m^{(1)}_1, \ldots, m^{(J)}_1\bigr)  d\ell_0
        \\&
        \{\text{ Settting } \mathcal{G}^{z_{*,0},d_{0}(z_{*,0})}_{m^{(j)}} =  m_{0}^{(j)} = \emptyset \}
        \\
        \\
        &= \int_{\ell_0} \int_{\bigl(m^{(1)}_1, \ldots, m^{(J)}_1\bigr)} \int_{d_2}\int_{\bigl(m^{(1)}_2, \ldots, m^{(J)}_2\bigr)} \\&
        \mathbb{E}\Bigl[Y\Bigl(\textbf{z}_{2}, \textbf{d}_{2}(z_{2}),\mathbfcal{G}^{\textbf{z}_{*,2},\textbf{d}_{2}(z_{*,2})}_{m_{2}}\Bigr)\big| D_{1}(z_{1}), D_{1}(1-z_{1}),  D_{2}(z_{2}) = d_{2},  \mathbfcal{G}^{\textbf{z}_{*,2},\textbf{d}_{2}(z_{*,2})}_{m^{(1)}} =  \textbf{m}_2^{(1)}, \ldots, \mathbfcal{G}^{\textbf{z}_{*,2},\textbf{d}_{2}(z_{*,2})}_{m^{(J)}} =  \textbf{m}_2^{(J)}, L_0 = \ell_0 \Bigr]  \times 
        \\& 
       f\Bigl(\mathcal{G}^{\textbf{z}_{*,2},\textbf{d}_{2}(z_{*,2})}_{m^{(1)}} =  m_2^{(1)}, \ldots, \mathcal{G}^{\textbf{z}_{*,2},\textbf{d}_{2}(z_{*,2})}_{m^{(J)}} =  m_2^{(J)} \big|D_{1}(z_{*,1}), D_{1}(1-z_{*,1}),  D_{2}(z_{*,2}) = d_{2},  
        \mathcal{G}^{z_{*,1},d_{1}(z_{*,1})}_{m^{(1)}} =  m_{1}^{(1)}, \ldots, \mathcal{G}^{z_{*,1},d_{1}(z_{*,1})}_{m^{(J)}} =  m_{1}^{(J)},  
        \\& 
        L_0 = \ell_0 \Bigr)  \times
        f\Bigl(D_2(z_2) = d_{2}\big| D_{1}(z_{1}), D_{1}(1-z_{1}),  \mathbfcal{G}^{z_{*,1},d_{1}(z_{*,1})}_{m^{(1)}} =  m_{1}^{(1)}, \ldots, \mathbfcal{G}^{z_{*,1},d_{1}(z_{*,1})}_{m^{(J)}} =  m_{1}^{(J)},  L_0 = \ell_0 \Bigr) \times 
        \\& 
        f\Bigl(\mathcal{G}^{z_{*,1},\textbf{d}_{1}(z_{*,1})}_{m^{(1)}} =  m_1^{(1)}, \ldots, \mathcal{G}^{z_{*,1},\textbf{d}_{1}(z_{*,1})}_{m^{(J)}} =  m_1^{(J)} \big|D_{1}(z_{*,1}), D_{1}(1-z_{*,1}),  L_0 = \ell_0 \Bigr) \times  
        f_{L_0}(\ell_0) 
        \\&
        d\bigl(m^{(1)}_2, \ldots, m^{(J)}_2\bigr) d(d_2) d\bigl(m^{(1)}_1, \ldots, m^{(J)}_1\bigr)  d\ell_0  \\&
        \Bigl\{ \bigl\{ D_1(1), D_1(0) \bigr\} \text{ is equivalent to either } \bigl\{ D_1(z_{1}), D_1(1-z_{1}) \bigr\}
        \\&
        \text{ or } \bigl\{ D_1(z_{*,1}), D_t(1-z_{*,1}) \bigr\}  \Bigr\}
        \\
        \\
        &= \int_{\ell_0} \int_{\bigl(m^{(1)}_1, \ldots, m^{(J)}_1\bigr)} \int_{d_2}\int_{\bigl(m^{(1)}_2, \ldots, m^{(J)}_2\bigr)} \\&
        \mathbb{E}\Bigl[Y\Bigl(\textbf{z}_{2}, \textbf{d}_{2}(z_{2}),\mathbfcal{G}^{\textbf{z}_{*,2},\textbf{d}_{2}(z_{*,2})}_{m_{2}}\Bigr)\big| D_{1}(z_{1}),   D_{2}(z_{2}) = d_{2},  \mathbfcal{G}^{\textbf{z}_{*,2},\textbf{d}_{2}(z_{*,2})}_{m^{(1)}} =  \textbf{m}_2^{(1)}, \ldots, \mathbfcal{G}^{\textbf{z}_{*,2},\textbf{d}_{2}(z_{*,2})}_{m^{(J)}} =  \textbf{m}_2^{(J)}, L_0 = \ell_0 \Bigr]  \times 
        \\& 
       f\Bigl(\mathcal{G}^{\textbf{z}_{*,2},\textbf{d}_{2}(z_{*,2})}_{m^{(1)}} =  m_2^{(1)}, \ldots, \mathcal{G}^{\textbf{z}_{*,2},\textbf{d}_{2}(z_{*,2})}_{m^{(J)}} =  m_2^{(J)} \big|D_{1}(z_{*,1}),   D_{2}(z_{*,2}) = d_{2},  
        \mathcal{G}^{z_{*,1},d_{1}(z_{*,1})}_{m^{(1)}} =  m_{1}^{(1)}, \ldots, \mathcal{G}^{z_{*,1},d_{1}(z_{*,1})}_{m^{(J)}} =  m_{1}^{(J)},  
        \\& 
        L_0 = \ell_0 \Bigr)  \times
        f\Bigl(D_2(z_2) = d_{2}\big| D_{1}(z_{1}),    \mathbfcal{G}^{z_{*,1},d_{1}(z_{*,1})}_{m^{(1)}} =  m_{1}^{(1)}, \ldots, \mathbfcal{G}^{z_{*,1},d_{1}(z_{*,1})}_{m^{(J)}} =  m_{1}^{(J)},  L_0 = \ell_0 \Bigr) \times 
        \\& 
        f\Bigl(\mathcal{G}^{z_{*,1},\textbf{d}_{1}(z_{*,1})}_{m^{(1)}} =  m_1^{(1)}, \ldots, \mathcal{G}^{z_{*,1},\textbf{d}_{1}(z_{*,1})}_{m^{(J)}} =  m_1^{(J)} \big|D_{1}(z_{*,1}),   L_0 = \ell_0 \Bigr) \times  
        f_{L_0}(\ell_0) 
        \\&
        d\bigl(m^{(1)}_2, \ldots, m^{(J)}_2\bigr) d(d_2) d\bigl(m^{(1)}_1, \ldots, m^{(J)}_1\bigr)  d\ell_0  \quad \quad \quad 
        \Bigl\{  \text{ by \textbf{Assumption~\ref{assump::3} in the main text} }   \Bigr\}
        \\
        \\
        & =  \int_{\ell_0} \int_{\bigl(m^{(1)}_1, \ldots, m^{(J)}_1\bigr)} \int_{d_2}\int_{\bigl(m^{(1)}_2, \ldots, m^{(J)}_2\bigr)} \\& 
        \mathbb{E}\Bigl[Y\Bigl(\textbf{z}_{2}, \textbf{d}_{2}(z_{2}),\textbf{m}^{(1)}_{2}, \ldots, \textbf{m}^{(J)}_{2}\Bigr)\big| D_{1}(z_{1}),  D_{2}(z_{2}) = d_{2},   \mathbfcal{G}^{\textbf{z}_{*,2},\textbf{d}_{2}(z_{*,2})}_{m^{(1)}} =  \textbf{m}_2^{(1)}, \ldots, \mathbfcal{G}^{\textbf{z}_{*,2},\textbf{d}_{2}(z_{*,2})}_{m^{(J)}} =  \textbf{m}_2^{(J)}, L_0 = \ell_0 \Bigr]  \times 
        \\& 
       f\Bigl(M_{2}^{(1)}\bigl(\textbf{z}_{*,2}, \textbf{d}_2(z_{*,2})\bigr) = m_{2}^{(1)}, \ldots, M_{2}^{(J)}\bigl(\textbf{z}_{*,2}, \textbf{d}_2(z_{*,2})\bigr) = m_{2}^{(J)} \big|  D_{1}(z_{*,1}),  D_{2}(z_{*,2}) = d_{2}, \\&  
        M_{1}^{(1)}\bigl(z_{*,1}, d_{1}(z_{*,1})\bigr) = m_{1}^{(1)}, \ldots,  M_{1}^{(J)}\bigl(z_{*,1}, d_{1}(z_{*,1})\bigr) = m_{1}^{(J)},  
        L_0 = \ell_0 \Bigr)  \times
        \\& 
        f\Bigl(D_2(z_2) = d_{2} \big|  D_{1}(z_{1}),  \mathbfcal{G}^{z_{*,1},d_{1}(z_{*,1})}_{m^{(1)}} =  m_{1}^{(1)}, \ldots, \mathbfcal{G}^{z_{*,1},d_{1}(z_{*,1})}_{m^{(J)}} =  m_{1}^{(J)},  L_0 = \ell_0 \Bigr) \times 
        \\& 
        f\Bigl(M_{1}^{(1)}\bigl(z_{*,1}, d_{1}(z_{*,1})\bigr) = m_{1}^{(1)}, \ldots, M_{1}^{(J)}\bigl(z_{*,1}, d_{1}(z_{*,1})\bigr) = m_{1}^{(J)} \big| D_{1}(z_{*,1}),     L_0 = \ell_0 \Bigr) \times 
        f_{L_0}(\ell_0)  
        \\&
        d\bigl(m^{(1)}_2, \ldots, m^{(J)}_2\bigr) d(d_2) d\bigl(m^{(1)}_1, \ldots, m^{(J)}_1\bigr)  d\ell_0    
        \quad \quad \quad \Big\{\text{ by the definition of  } \mathbfcal{G}^{\textbf{z}_{*,t},\textbf{d}_{t}(\textbf{z}_{*,t})}_{m}  \Big\}
        \\
        \\
        & =  \int_{\ell_0} \int_{\bigl(m^{(1)}_1, \ldots, m^{(J)}_1\bigr)} \int_{d_2}\int_{\bigl(m^{(1)}_2, \ldots, m^{(J)}_2\bigr)} 
        \mathbb{E}\Bigl[Y\Bigl(\textbf{z}_{2}, \textbf{d}_{2}(z_{2}),\textbf{m}^{(1)}_{2}, \ldots, \textbf{m}^{(J)}_{2}\Bigr)\big| D_{1}(z_{1}),  D_{2}(z_{2}) = d_{2}, L_0 = \ell_0 \Bigr]  \times 
        \\& 
       f\Bigl(M_{2}^{(1)}\bigl(\textbf{z}_{*,2}, \textbf{d}_2(z_{*,2})\bigr) = m_{2}^{(1)}, \ldots, M_{2}^{(J)}\bigl(\textbf{z}_{*,2}, \textbf{d}_2(z_{*,2})\bigr) = m_{2}^{(J)} \big|  D_{1}(z_{*,1}),  D_{2}(z_{*,2}) = d_{2}, \\&  
        M_{1}^{(1)}\bigl(z_{*,1}, d_{1}(z_{*,1})\bigr) = m_{1}^{(1)}, \ldots,  M_{1}^{(J)}\bigl(z_{*,1}, d_{1}(z_{*,1})\bigr) = m_{1}^{(J)},  
        L_0 = \ell_0 \Bigr)  \times
        \\& 
        f\Bigl(D_2(z_2) = d_{2} \big|  D_{1}(z_{1}),   L_0 = \ell_0 \Bigr) \times 
        \\& 
        f\Bigl(M_{1}^{(1)}\bigl(z_{*,1}, d_{1}(z_{*,1})\bigr) = m_{1}^{(1)}, \ldots, M_{1}^{(J)}\bigl(z_{*,1}, d_{1}(z_{*,1})\bigr) = m_{1}^{(J)} \big| D_{1}(z_{*,1}),     L_0 = \ell_0 \Bigr) \times 
        f_{L_0}(\ell_0) 
        \\&
        d\bigl(m^{(1)}_2, \ldots, m^{(J)}_2\bigr) d(d_2) d\bigl(m^{(1)}_1, \ldots, m^{(J)}_1\bigr)  d\ell_0
        \\&
        \Bigl\{ \text{conditional on the past covariates },  \textbf{m}_{t}\text{ is a random draw }  \text{from } \mathcal{G}^{\textbf{z}_{*,t},\textbf{d}_{t}(z_{*,t})}_{m}
        \text{(randomized intervention) and does not affect }  \\& \text{potential treatment receipt } D_{t}(z_{t}) \text{ and potential outcome }Y\Bigl(\textbf{z}_{T}, \textbf{d}_{T}(z_{T}),\textbf{m}^{(1)}_{T}\bigl(\textbf{z}_{*,T},\textbf{d}_{T}(z_{*,T})\bigr), \ldots, \textbf{m}^{(J)}_{T}\bigl(\textbf{z}_{*,T},\textbf{d}_{T}(z_{*,T})\bigr)\Bigr)\Bigr\}
        \\
        \\
        & =  \int_{\ell_0} \int_{\bigl(m^{(1)}_1, \ldots, m^{(J)}_1\bigr)} \int_{d_2}\int_{\bigl(m^{(1)}_2, \ldots, m^{(J)}_2\bigr)} 
        \mathbb{E}\Bigl[Y\Bigl(\textbf{z}_{2}, \textbf{d}_{2}(z_{2}),\textbf{m}^{(1)}_{2}, \ldots, \textbf{m}^{(J)}_{2}\Bigr)\big| Z_1 = z_{1}, D_{1}(z_{1}),  D_{2}(z_{2}) = d_{2}, L_0 = \ell_0 \Bigr]  \times 
        \\& 
       f\Bigl(M_{2}^{(1)}\bigl(\textbf{z}_{*,2}, \textbf{d}_2(z_{*,2})\bigr) = m_{2}^{(1)}, \ldots, M_{2}^{(J)}\bigl(\textbf{z}_{*,2}, \textbf{d}_2(z_{*,2})\bigr) = m_{2}^{(J)} \big|  Z_1 = z_{*,1}, D_{1}(z_{*,1}),  D_{2}(z_{*,2}) = d_{2}, \\&  
        M_{1}^{(1)}\bigl(z_{*,1}, d_{1}(z_{*,1})\bigr) = m_{1}^{(1)}, \ldots,  M_{1}^{(J)}\bigl(z_{*,1}, d_{1}(z_{*,1})\bigr) = m_{1}^{(J)},  
        L_0 = \ell_0 \Bigr)  \times
        \\& 
        f\Bigl(D_2(z_2) = d_{2} \big| Z_1 = z_{1}, D_{1}(z_{1}),   L_0 = \ell_0 \Bigr) \times 
        \\& 
        f\Bigl(M_{1}^{(1)}\bigl(z_{*,1}, d_{1}(z_{*,1})\bigr) = m_{1}^{(1)}, \ldots, M_{1}^{(J)}\bigl(z_{*,1}, d_{1}(z_{*,1})\bigr) = m_{1}^{(J)} \big| Z_1 = z_{*,1},D_{1}(z_{*,1}),     L_0 = \ell_0 \Bigr) \times 
        f_{L_0}(\ell_0)   
        \\&
        d\bigl(m^{(1)}_2, \ldots, m^{(J)}_2\bigr) d(d_2) d\bigl(m^{(1)}_1, \ldots, m^{(J)}_1\bigr)  d\ell_0 \quad \quad \quad
        \Bigl\{ \text{by \textbf{Assumption 2}}\Bigr\}
        \\
        \\
        & =  \int_{\ell_0} \int_{\bigl(m^{(1)}_1, \ldots, m^{(J)}_1\bigr)} \int_{d_2}\int_{\bigl(m^{(1)}_2, \ldots, m^{(J)}_2\bigr)} \\& 
        \mathbb{E}\Bigl[Y\Bigl(\textbf{z}_{2}, \textbf{d}_{2}(z_{2}),\textbf{m}^{(1)}_{2}, \ldots, \textbf{m}^{(J)}_{2}\Bigr)\big|Z_1 = z_{1}, D_{1}= d_{1},  D_{2}(z_{2}) = d_{2},   L_0 = \ell_0 \Bigr]  \times 
        \\& 
       f\Bigl(M_{2}^{(1)}\bigl(\textbf{z}_{*,2}, \textbf{d}_2(z_{*,2})\bigr) = m_{2}^{(1)}, \ldots, M_{2}^{(J)}\bigl(\textbf{z}_{*,2}, \textbf{d}_2(z_{*,2})\bigr) = m_{2}^{(J)} \big| Z_1 = z_{*,1}, D_{1} = d_{1},  D_{2}(z_{*,2}) = d_{2}, \\&  
        M_{1}^{(1)} = m_{1}^{(1)}, \ldots,  M_{1}^{(J)} = m_{1}^{(J)},  
        L_0 = \ell_0 \Bigr)  \times
        \\& 
        f\Bigl(D_2(z_2) = d_{2} \big|Z_1 = z_{1},  D_{1}= d_{1},  L_0 = \ell_0 \Bigr) \times  
        f\Bigl(M_{1}^{(1)} = m_{1}^{(1)}, \ldots, M_{1}^{(J)} = m_{1}^{(J)} \big|Z_1 = z_{*,1}, D_{1}= d_{1},     L_0 = \ell_0 \Bigr) \times 
        f_{L_0}(\ell_0)  
        \\&
        d\bigl(m^{(1)}_2, \ldots, m^{(J)}_2\bigr) d(d_2) d\bigl(m^{(1)}_1, \ldots, m^{(J)}_1\bigr)  d\ell_0  
        \quad \quad
        \Bigl\{  \text{ by \textbf{Consistency (i),(ii)} }   \Bigr\}
        \\
        \\
        & =  \int_{\ell_0} \int_{\bigl(m^{(1)}_1, \ldots, m^{(J)}_1\bigr)} \int_{d_2}\int_{\bigl(m^{(1)}_2, \ldots, m^{(J)}_2\bigr)} \\& 
        \mathbb{E}\Bigl[Y\Bigl(\textbf{z}_{2}, \textbf{d}_{2}(z_{2}),\textbf{m}^{(1)}_{2}, \ldots, \textbf{m}^{(J)}_{2}\Bigr)\big|Z_1 = z_{1}, D_{1}= d_{1},  D_{2}(z_{2}) = d_{2}, M_{1}^{(1)} = m_{1}^{(1)}, \ldots, M_{1}^{(J)} = m_{1}^{(J)},  L_0 = \ell_0 \Bigr]  \times 
        \\& 
       f\Bigl(M_{2}^{(1)}\bigl(\textbf{z}_{*,2}, \textbf{d}_2(z_{*,2})\bigr) = m_{2}^{(1)}, \ldots, M_{2}^{(J)}\bigl(\textbf{z}_{*,2}, \textbf{d}_2(z_{*,2})\bigr) = m_{2}^{(J)} \big| Z_1 = z_{*,1}, D_{1} = d_{1},  D_{2}(z_{*,2}) = d_{2}, \\&  
        M_{1}^{(1)} = m_{1}^{(1)}, \ldots,  M_{1}^{(J)} = m_{1}^{(J)},  
        L_0 = \ell_0 \Bigr)  \times
        f\Bigl(D_2(z_2) = d_{2} \big|Z_1 = z_{1},  D_{1}= d_{1},  M_{1}^{(1)} = m_{1}^{(1)}, \ldots, M_{1}^{(J)} = m_{1}^{(J)},L_0 = \ell_0 \Bigr) \times  
        \\& 
        f\Bigl(M_{1}^{(1)} = m_{1}^{(1)}, \ldots, M_{1}^{(J)} = m_{1}^{(J)} \big|Z_1 = z_{*,1}, D_{1}= d_{1},    L_0 = \ell_0 \Bigr) \times 
        f_{L_0}(\ell_0)  
        \\&
        d\bigl(m^{(1)}_2, \ldots, m^{(J)}_2\bigr) d(d_2) d\bigl(m^{(1)}_1, \ldots, m^{(J)}_1\bigr)  d\ell_0  
        \quad \quad
        \Bigl\{ \text{by \textbf{Assumption 4}  }\Bigr\}
        \\
        \\
        & =  \int_{\ell_0} \int_{\bigl(m^{(1)}_1, \ldots, m^{(J)}_1\bigr)} \int_{d_2}\int_{\bigl(m^{(1)}_2, \ldots, m^{(J)}_2\bigr)} \\& 
        \mathbb{E}\Bigl[Y\Bigl(\textbf{z}_{2}, \textbf{d}_{2}(z_{2}),\textbf{m}^{(1)}_{2}, \ldots, \textbf{m}^{(J)}_{2}\Bigr)\big|\textbf{Z}_2 = \textbf{z}_{2}, D_{1}= d_{1},  D_{2}(z_{2}) = d_{2}, M_{1}^{(1)} = m_{1}^{(1)}, \ldots, M_{1}^{(J)} = m_{1}^{(J)},  L_0 = \ell_0 \Bigr]  \times 
        \\& 
       f\Bigl(M_{2}^{(1)}\bigl(\textbf{z}_{*,2}, \textbf{d}_2(z_{*,2})\bigr) = m_{2}^{(1)}, \ldots, M_{2}^{(J)}\bigl(\textbf{z}_{*,2}, \textbf{d}_2(z_{*,2})\bigr) = m_{2}^{(J)} \big| \textbf{Z}_2 = \textbf{z}_{*,2}, D_{1} = d_{1},  D_{2}(z_{*,2}) = d_{2}, \\&  
        M_{1}^{(1)} = m_{1}^{(1)}, \ldots,  M_{1}^{(J)} = m_{1}^{(J)},  
        L_0 = \ell_0 \Bigr)  \times
        f\Bigl(D_2(z_2) = d_{2} \big|\textbf{Z}_2 = \textbf{z}_{2},  D_{1}= d_{1},  M_{1}^{(1)} = m_{1}^{(1)}, \ldots, M_{1}^{(J)} = m_{1}^{(J)},L_0 = \ell_0 \Bigr) \times  
        \\& 
        f\Bigl(M_{1}^{(1)} = m_{1}^{(1)}, \ldots, M_{1}^{(J)} = m_{1}^{(J)} \big|Z_1 = z_{*,1}, D_{1}= d_{1},    L_0 = \ell_0 \Bigr) \times 
        f_{L_0}(\ell_0)  
        \\&
        d\bigl(m^{(1)}_2, \ldots, m^{(J)}_2\bigr) d(d_2) d\bigl(m^{(1)}_1, \ldots, m^{(J)}_1\bigr)  d\ell_0  
        \quad \quad
        \Bigl\{ \text{by \textbf{Assumption 2}  }\Bigr\}
        \\
        \\
        & =  \int_{\ell_0} \int_{\bigl(m^{(1)}_1, \ldots, m^{(J)}_1\bigr)} \int_{d_2}\int_{\bigl(m^{(1)}_2, \ldots, m^{(J)}_2\bigr)} \\& 
        \mathbb{E}\Bigl[Y\Bigl(\textbf{z}_{2}, \textbf{d}_{2}(z_{2}),\textbf{m}^{(1)}_{2}, \ldots, \textbf{m}^{(J)}_{2}\Bigr)\big|\textbf{Z}_2 = \textbf{z}_{2}, \textbf{D}_{2} = \textbf{d}_{2},  M_{1}^{(1)} = m_{1}^{(1)}, \ldots, M_{1}^{(J)} = m_{1}^{(J)},  L_0 = \ell_0 \Bigr]  \times 
        \\& 
       f\Bigl(M_{2}^{(1)} = m_{2}^{(1)}, \ldots, M_{2}^{(J)}= m_{2}^{(J)} \big| \textbf{Z}_2 = \textbf{z}_{*,2}, \textbf{D}_{2} = \textbf{d}_{2},    
        M_{1}^{(1)} = m_{1}^{(1)}, \ldots,  M_{1}^{(J)} = m_{1}^{(J)},  
        L_0 = \ell_0 \Bigr)  \times 
        \\&
        f\Bigl(D_2 = d_{2} \big|\textbf{Z}_2 = \textbf{z}_{2},  D_{1}= d_{1},  M_{1}^{(1)} = m_{1}^{(1)}, \ldots, M_{1}^{(J)} = m_{1}^{(J)},L_0 = \ell_0 \Bigr) \times  
        \\& 
        f\Bigl(M_{1}^{(1)} = m_{1}^{(1)}, \ldots, M_{1}^{(J)} = m_{1}^{(J)} \big|Z_1 = z_{*,1}, D_{1}= d_{1},    L_0 = \ell_0 \Bigr) \times 
        f_{L_0}(\ell_0)  
        \\&
        d\bigl(m^{(1)}_2, \ldots, m^{(J)}_2\bigr) d(d_2) d\bigl(m^{(1)}_1, \ldots, m^{(J)}_1\bigr)  d\ell_0  
        \quad \quad
        \Bigl\{ \text{by \textbf{Consistency (i),(ii)}  } \Bigr\}
        \\
        \\
        & =  \int_{\ell_0} \int_{\bigl(m^{(1)}_1, \ldots, m^{(J)}_1\bigr)} \int_{d_2}\int_{\bigl(m^{(1)}_2, \ldots, m^{(J)}_2\bigr)} \\& 
        \mathbb{E}\Bigl[Y\Bigl(\textbf{z}_{2}, \textbf{d}_{2}(z_{2}),\textbf{m}^{(1)}_{2}, \ldots, \textbf{m}^{(J)}_{2}\Bigr)\big|\textbf{Z}_2 = \textbf{z}_{2}, \textbf{D}_{2} = \textbf{d}_{2},  \textbf{M}_{2}^{(1)} = \textbf{m}_{2}^{(1)}, \ldots, \textbf{M}_{2}^{(J)} = \textbf{m}_{2}^{(J)},  L_0 = \ell_0 \Bigr]  \times 
        \\& 
       f\Bigl(M_{2}^{(1)} = m_{2}^{(1)}, \ldots, M_{2}^{(J)}= m_{2}^{(J)} \big| \textbf{Z}_2 = \textbf{z}_{*,2}, \textbf{D}_{2} = \textbf{d}_{2},    
        M_{1}^{(1)} = m_{1}^{(1)}, \ldots,  M_{1}^{(J)} = m_{1}^{(J)},  
        L_0 = \ell_0 \Bigr)  \times 
        \\&
        f\Bigl(D_2 = d_{2} \big|\textbf{Z}_2 = \textbf{z}_{2},  D_{1}= d_{1},  M_{1}^{(1)} = m_{1}^{(1)}, \ldots, M_{1}^{(J)} = m_{1}^{(J)},L_0 = \ell_0 \Bigr) \times  
        \\& 
        f\Bigl(M_{1}^{(1)} = m_{1}^{(1)}, \ldots, M_{1}^{(J)} = m_{1}^{(J)} \big|Z_1 = z_{*,1}, D_{1}= d_{1},    L_0 = \ell_0 \Bigr) \times 
        f_{L_0}(\ell_0)  
        \\&
        d\bigl(m^{(1)}_2, \ldots, m^{(J)}_2\bigr) d(d_2) d\bigl(m^{(1)}_1, \ldots, m^{(J)}_1\bigr)  d\ell_0  
        \quad \quad
        \Bigl\{ \text{by \textbf{Assumption 4 b)}  } \Bigr\}
        \\
        \\
        & =  \int_{\ell_0} \int_{\bigl(m^{(1)}_1, \ldots, m^{(J)}_1\bigr)} \int_{d_2}\int_{\bigl(m^{(1)}_2, \ldots, m^{(J)}_2\bigr)} \\& 
        \mathbb{E}\Bigl[Y\big|\textbf{Z}_2 = \textbf{z}_{2}, \textbf{D}_{2} = \textbf{d}_{2},  \textbf{M}_{2}^{(1)} = \textbf{m}_{2}^{(1)}, \ldots, \textbf{M}_{2}^{(J)} = \textbf{m}_{2}^{(J)},  L_0 = \ell_0 \Bigr]  \times 
        \\& 
       f\Bigl(M_{2}^{(1)} = m_{2}^{(1)}, \ldots, M_{2}^{(J)}= m_{2}^{(J)} \big| \textbf{Z}_2 = \textbf{z}_{*,2}, \textbf{D}_{2} = \textbf{d}_{2},    
        M_{1}^{(1)} = m_{1}^{(1)}, \ldots,  M_{1}^{(J)} = m_{1}^{(J)},  
        L_0 = \ell_0 \Bigr)  \times 
        \\&
        f\Bigl(D_2 = d_{2} \big|\textbf{Z}_2 = \textbf{z}_{2},  D_{1}= d_{1},  M_{1}^{(1)} = m_{1}^{(1)}, \ldots, M_{1}^{(J)} = m_{1}^{(J)},L_0 = \ell_0 \Bigr) \times  
        \\& 
        f\Bigl(M_{1}^{(1)} = m_{1}^{(1)}, \ldots, M_{1}^{(J)} = m_{1}^{(J)} \big|Z_1 = z_{*,1}, D_{1}= d_{1},    L_0 = \ell_0 \Bigr) \times 
        f_{L_0}(\ell_0)  
        \\&
        d\bigl(m^{(1)}_2, \ldots, m^{(J)}_2\bigr) d(d_2) d\bigl(m^{(1)}_1, \ldots, m^{(J)}_1\bigr)  d\ell_0  
        \quad \quad
        \Bigl\{ \text{by \textbf{Consistency iii)}  } \Bigr\}
\end{align*}}

For $T= 2$, we have shown that:
{\scriptsize
\thinmuskip=0mu\relax\medmuskip=0mu\relax\thickmuskip=0mu\relax
\begin{align*}
    \theta(z,z_{*}) 
        &= \int_{\ell_0} \int_{\bigl(m^{(1)}_1, \ldots, m^{(J)}_1\bigr)} \int_{d_2}\int_{\bigl(m^{(1)}_2, \ldots, m^{(J)}_2\bigr)} \\& 
        \mathbb{E}\Bigl[Y\big|\textbf{Z}_2 = \textbf{z}_{2}, \textbf{D}_{2} = \textbf{d}_{2},  \textbf{M}_{2}^{(1)} = \textbf{m}_{2}^{(1)}, \ldots, \textbf{M}_{2}^{(J)} = \textbf{m}_{2}^{(J)},  L_0 = \ell_0 \Bigr]  \times 
        \\& 
       f\Bigl(M_{2}^{(1)} = m_{2}^{(1)}, \ldots, M_{2}^{(J)}= m_{2}^{(J)} \big| \textbf{Z}_2 = \textbf{z}_{*,2}, \textbf{D}_{2} = \textbf{d}_{2},    
        M_{1}^{(1)} = m_{1}^{(1)}, \ldots,  M_{1}^{(J)} = m_{1}^{(J)},  
        L_0 = \ell_0 \Bigr)  \times 
        \\&
        f\Bigl(D_2 = d_{2} \big|\textbf{Z}_2 = \textbf{z}_{2},  D_{1}= d_{1},  M_{1}^{(1)} = m_{1}^{(1)}, \ldots, M_{1}^{(J)} = m_{1}^{(J)},L_0 = \ell_0 \Bigr) \times  
        \\& 
        f\Bigl(M_{1}^{(1)} = m_{1}^{(1)}, \ldots, M_{1}^{(J)} = m_{1}^{(J)} \big|Z_1 = z_{*,1}, D_{1}= d_{1},    L_0 = \ell_0 \Bigr) \times 
        f_{L_0}(\ell_0)  
        \\&
        d\bigl(m^{(1)}_2, \ldots, m^{(J)}_2\bigr) d(d_2) d\bigl(m^{(1)}_1, \ldots, m^{(J)}_1\bigr)  d\ell_0 .
\end{align*}}
Using the total law of probability again with random intercepts, we get:
{\scriptsize
\thinmuskip=0mu\relax\medmuskip=0mu\relax\thickmuskip=0mu\relax
\begin{align*}
    \theta(z,z_{*}) 
        &= \int_{\textbf{b}} \int_{\ell_0} \int_{\bigl(m^{(1)}_1, \ldots, m^{(J)}_1\bigr)} \int_{d_2}\int_{\bigl(m^{(1)}_2, \ldots, m^{(J)}_2\bigr)}   \\& 
        \mathbb{E}\Bigl[Y\big|\textbf{Z}_2 = \textbf{z}_{2}, \textbf{D}_{2} = \textbf{d}_{2},  \textbf{M}_{2}^{(1)} = \textbf{m}_{2}^{(1)}, \ldots, \textbf{M}_{2}^{(J)} = \textbf{m}_{2}^{(J)},  L_0 = \ell_0 \Bigr]  \times 
        \\& 
       f\Bigl(M_{2}^{(1)} = m_{2}^{(1)}, \ldots, M_{2}^{(J)}= m_{2}^{(J)} \big| \textbf{Z}_2 = \textbf{z}_{*,2}, \textbf{D}_{2} = \textbf{d}_{2},    
        M_{1}^{(1)} = m_{1}^{(1)}, \ldots,  M_{1}^{(J)} = m_{1}^{(J)},  
        L_0 = \ell_0, b^{M^{(1)}}, \ldots, b^{M^{(J)}} \Bigr)  \times 
        \\&
        f\Bigl(D_2 = d_{2} \big|\textbf{Z}_2 = \textbf{z}_{2},  D_{1}= d_{1},  M_{1}^{(1)} = m_{1}^{(1)}, \ldots, M_{1}^{(J)} = m_{1}^{(J)},L_0 = \ell_0, b^{D} \Bigr) \times  
        \\& 
        f\Bigl(M_{1}^{(1)} = m_{1}^{(1)}, \ldots, M_{1}^{(J)} = m_{1}^{(J)} \big|Z_1 = z_{*,1}, D_{1}= d_{1},    L_0 = \ell_0, b^{M^{(1)}}, \ldots, b^{M^{(J)}} \Bigr) \times 
        f_{L_0}(\ell_0)  \times f(\textbf{b})
        \\&
         d\bigl(m^{(1)}_2, \ldots, m^{(J)}_2\bigr) d(d_2) d\bigl(m^{(1)}_1, \ldots, m^{(J)}_1\bigr)  d\ell_0 d\textbf{b}.
\end{align*}}
Hence, the result follows.

\end{proof}

\section{Sensitivity analysis for the violation of ignorability of cross-world treatment receipt (Assumption 3.3)}\label{sec:S3}

In this section, we define $\textbf{m}_{T} = \big\{m^{(1)}_{t}, \ldots, m^{(J)}_{t} \big\}_{t = 1}^{T}$ for notational convenience. Recall that Assumption 3 states the following:
    \begin{enumerate}\label{assump::3sen}
        \item $D_{s}(z_{s}) \indep  D_1(1-z_{1}) \big|  D_1(z_{1}),  \textbf{D}_{s-1}(z_{s-1}), L_0$,
        \item $\mathcal{G}^{z_{*,w},d_{w}(z_{*,w})}_{m_{w}}   \indep   D_1(1-z_{*,1})  \big|  D_t(z_{*,1}),  \textbf{D}_{2:w}(z_{*,2:w}), \mathbfcal{G}^{z_{*,w-1},d_{w-1}(z_{*,w-1})}_{m_{w-1}},  L_0$, and 
        \item $Y(\textbf{z}_{T},\textbf{d}_{T}, \textbf{m}_{T})  \indep  D_1(1-z_{1})  \big|  D_1(z_{1}),\textbf{D}_{2:T}(z_{2:T}),\mathbfcal{G}^{\textbf{z}_{*,T},\textbf{d}_{T}(z_{*,T})}_{m_{T}}, L_0.$
    \end{enumerate}
where $z_{1},z_{*,1} \in \{0,1\}$  and for all $w \geq 1, s \geq 2$.
In this section, we evaluate the robustness of the estimated principal causal effects to violations of Assumption 3.3, assuming all other identification assumptions hold. We note that our proposed sensitivity analysis framework (adapted from \cite{daniels2023bayesian}) is not a comprehensive technique for assessing the full extent of the impact of violating Assumption 3 on the estimated causal effects. Given the nature of the nonparametric estimator of $\theta(z, z_{*})$ in (\ref{Eq::non-paramEstimatorWithRE}) and the dimensionality of the longitudinal variables in our setting, it is challenging to develop a complete sensitivity analysis framework---this is beyond the scope of this paper. Instead, our proposed framework is intended to provide a simple assessment of the potential impact of violating the cross-world ignorability assumption on our causal effect estimates.

\subsection{Sensitivity parameters}
Suppose that all identification assumptions hold except for Assumption 3.3. Our sensitivity analysis is based on the following parameters measuring departure from Assumption 3.3: 
\begin{center}
\resizebox{\textwidth}{!}{$\displaystyle
\begin{aligned}
    \rho_Y\big(\mathbb{1}_{T},\mathbb{1}_{T} \big)
    &= E\Big[Y\Bigl(\mathbb{1}_{T}, \textbf{d}_{T},\textbf{m}_{T}\Bigr) \big|D_1(1) = d_{1}, D_1(0) = d_{*,1}, \textbf{D}_{2:T}(\mathbb{1}_{2:T}) = \textbf{d}_{2:T}, \mathbfcal{G}^{\mathbb{1}_{T},\textbf{d}_{T}(\mathbb{1}_{T})}_{m_{T}} = \textbf{m}_T, \textbf{Z}_{T} =\mathbb{1}_{T},  L_0 = \ell_0 \Big]
    \\& -
    E\Big[Y\Bigl(\mathbb{1}_{T}, \textbf{d}_{T},\textbf{m}_{T}\Bigr) \big|  \textbf{D}_T(\mathbb{1}_{T}) = \textbf{d}_T, \mathbfcal{G}^{\mathbb{1}_{T},\textbf{d}_{T}(\mathbb{1}_{T})}_{m_{T}} = \textbf{m}_T, \textbf{Z}_{T} =\mathbb{1}_{T},  L_0 = \ell_0 \Big],
    \\
    \rho_Y\big(\mathbb{0}_{T},\mathbb{0}_{T} \big)
    &= E\Big[Y\Bigl(\mathbb{0}_{T}, \textbf{d}_{T},\textbf{m}_{T}\Bigr) \big| D_1(1) = d_{*,1}, D_1(0) = d_{1}, \textbf{D}_{2:T}(\mathbb{0}_{2:T}) = \textbf{d}_{2:T}, \mathbfcal{G}^{\mathbb{0}_{T},\textbf{d}_{T}(\mathbb{0}_{T})}_{m_{T}} = \textbf{m}_T, \textbf{Z}_{T} =\mathbb{0}_{T},  L_0 = \ell_0 \Big]
    \\& -
    E\Big[Y\Bigl(\mathbb{0}_{T}, \textbf{d}_{T},\textbf{m}_{T}\Bigr) \big|  \textbf{D}_T(\mathbb{0}_{T}) = \textbf{d}_T, \mathbfcal{G}^{\mathbb{0}_{T},\textbf{d}_{T}(\mathbb{0}_{T})}_{m_{T}} = \textbf{m}_T, \textbf{Z}_{T} =\mathbb{0}_{T},  L_0 = \ell_0 \Big]
    \\
    \rho_Y\big(\mathbb{1}_{T},\mathbb{0}_{T} \big)
    &= E\Big[Y\Bigl(\mathbb{1}_{T}, \textbf{d}_{T},\textbf{m}_{T}\Bigr) \big| D_1(1) = d_{1}, D_1(0) = d_{*,1}, \textbf{D}_{2:T}(\mathbb{1}_{2:T}) = \textbf{d}_{2:T}, \mathbfcal{G}^{\mathbb{0}_{T},\textbf{d}_{T}(\mathbb{0}_{T})}_{m_{T}} = \textbf{m}_T, \textbf{Z}_{T} =\mathbb{1}_{T},  L_0 = \ell_0 \Big]
    \\& -
    E\Big[Y\Bigl(\mathbb{1}_{T}, \textbf{d}_{T},\textbf{m}_{T}\Bigr) \big|  \textbf{D}_T(\mathbb{1}_{T}) = \textbf{d}_T, \mathbfcal{G}^{\mathbb{0}_{T},\textbf{d}_{T}(\mathbb{0}_{T})}_{m_{T}} = \textbf{m}_T, \textbf{Z}_{T} =\mathbb{1}_{T},  L_0 = \ell_0 \Big]
\end{aligned}
$}
\end{center}
Using the ignorability assumptions, Assumption 2 and Assumption 4, the sensitivity parameters $\rho_Y\big(\mathbb{1}_{T},\mathbb{1}_{T} \big)$, $\rho_Y\big(\mathbb{0}_{T},\mathbb{0}_{T} \big)$ and $\rho_Y\big(\mathbb{1}_{T},\mathbb{0}_{T} \big)$ can be simplified to:

\begin{equation}\label{Eq::SAreducedRho11}
\begin{aligned}
            \rho_Y\big(\mathbb{1}_{T},\mathbb{1}_{T} \big) 
            &= E\Big[Y\Bigl(\mathbb{1}_{T}, \textbf{d}_{T},\textbf{m}_{T}\Bigr) \big|D_1(1) = d_{1}, D_1(0) = d_{*,1}, \textbf{D}_{2:T}(\mathbb{1}_{2:T}) = \textbf{d}_{2:T}, L_0 = \ell_0 \Big]
            \\& - 
            E\Big[Y\Bigl(\mathbb{1}_{T}, \textbf{d}_{T},\textbf{m}_{T}\Bigr) \big|  \textbf{D}_T(\mathbb{1}_{T}) = \textbf{d}_T, L_0 = \ell_0 \Big]
\end{aligned}  
\end{equation}
\begin{equation}\label{Eq::SAreducedRho00}
\begin{aligned}
            \rho_Y\big(\mathbb{0}_{T},\mathbb{0}_{T} \big) 
            &=  E\Big[Y\Bigl(\mathbb{0}_{T}, \textbf{d}_{T},\textbf{m}_{T}\Bigr) \big| D_1(1) = d_{*,1}, D_1(0) = d_{1}, \textbf{D}_{2:T}(\mathbb{0}_{2:T}) = \textbf{d}_{2:T},  L_0 = \ell_0 \Big]
            \\& - 
            E\Big[Y\Bigl(\mathbb{0}_{T}, \textbf{d}_{T},\textbf{m}_{T}\Bigr) \big|  \textbf{D}_T(\mathbb{0}_{T}) = \textbf{d}_T,  L_0 = \ell_0 \Big]
\end{aligned}  
\end{equation}
\begin{equation}\label{Eq::SAreducedRho10}
\begin{aligned}
            \rho_Y\big(\mathbb{1}_{T},\mathbb{0}_{T} \big) 
            &= E\Big[Y\Bigl(\mathbb{1}_{T}, \textbf{d}_{T},\textbf{m}_{T}\Bigr) \big| D_1(1) = d_{1}, D_1(0) = d_{*,1}, \textbf{D}_{2:T}(\mathbb{1}_{2:T}) = \textbf{d}_{2:T},   L_0 = \ell_0 \Big]
            \\& - 
            E\Big[Y\Bigl(\mathbb{1}_{T}, \textbf{d}_{T},\textbf{m}_{T}\Bigr) \big|  \textbf{D}_T(\mathbb{1}_{T}) = \textbf{d}_T, L_0 = \ell_0 \Big].
\end{aligned}
\end{equation}
Note that $\rho_Y(\mathbb{1}_{T},\mathbb{1}_{T})$ and $\rho_Y(\mathbb{1}_{T},\mathbb{0}_{T})$ reduce to the same expression after invoking Assumptions~2 and~4, as shown in \eqref{Eq::SAreducedRho11} and \eqref{Eq::SAreducedRho10}.

\subsection{Calibrating sensitivity parameters}
To calibrate $\rho_Y\big(\mathbb{1}_{T},\mathbb{1}_{T} \big)$ \big(as well as $\rho_Y\big(\mathbb{1}_{T},\mathbb{0}_{T} \big)$\big) using the observed data, we begin by computing the total amount of variability in the outcomes $Y$ under $\textbf{Z}_T = \mathbb{1}_{T}$ that is explained by $\textbf{d}_T$ and $\ell_0$. Specifically, we compute the coefficient of determination from the regression of $Y$ on $\textbf{d}_T$ and $\ell_0$ for the treated subjects, denoted by $R_1$. We let $k_1$ represent the percentage of total variance not explained by $\textbf{d}_T$ and $\ell_0$. Using $R_1$ and $k_1$, we assume a bound on $\rho_Y\big(\mathbb{1}_{T},\mathbb{1}_{T} \big)$ \big(as well as on $\rho_Y\big(\mathbb{1}_{T},\mathbb{0}_{T} \big)$\big) of the form: $$\big|\rho_Y\big(\mathbb{1}_{T},\mathbb{1}_{T} \big)\big| \leq \sqrt{\text{Var}(Y\big|\textbf{Z}_T = \mathbb{1}_{T})\times (1-R_1) \times k_1}.$$
Intuitively, this implies that any unmeasured confounding that violates Assumption 3.3 is expected to have an effect bounded by the variance of $Y\big|\textbf{Z}_T = \mathbb{1}_{T}$ scaled by $100 \times k_1\%$ of the variability not explained by the measured confounders. Similarly, to calibrate $\rho_Y\big(\mathbb{0}_{T},\mathbb{0}_{T} \big)$ using observed data, we assume the bound:
$$\big|\rho_Y\big(\mathbb{0}_{T},\mathbb{0}_{T} \big)\big| \leq \sqrt{\text{Var}(Y\big|\textbf{Z}_T = \mathbb{0}_{T})\times (1-R_1) \times k_0}.$$
Thus, we define two sensitivity parameters $\big(k_0, k_1\big)$ bounded in the region of the unit square.

\subsection{Estimation for the violation of Assumption 3.3. assuming all other assumptions hold}
To conduct posterior inference for the causal effect estimates, we begin by taking posterior samples from the usual difference in means to compute the effect estimates, assuming the ignorability condition in Assumption~\ref{assump::3}.3 in the main text holds. Next, we generate random draws, one for each MCMC iteration, from the distribution of the sensitivity parameters $\big(k_0, k_1\big)$ and use these to compute the bounds on $\rho_Y\big(\mathbb{0}_{T},\mathbb{0}_{T} \big)$, $\rho_Y\big(\mathbb{1}_{T},\mathbb{1}_{T} \big)$, and $\rho_Y\big(\mathbb{1}_{T},\mathbb{0}_{T} \big)$. Finally, we adjust the causal effect estimates obtained under the ignorability assumption by subtracting the corresponding values of $\rho_Y(\mathbb{0}_{T},\mathbb{0}_{T})$, $\rho_Y(\mathbb{1}_{T},\mathbb{1}_{T})$, or $\rho_Y(\mathbb{1}_{T},\mathbb{0}_{T})$ from the unadjusted estimates. Formally, the causal parameters $\theta(\mathbb{1}_{T},\mathbb{1}_{T})$, $\theta(\mathbb{0}_{T},\mathbb{0}_{T})$, and $\theta(\mathbb{1}_{T},\mathbb{0}_{T})$ can be expressed in terms of the corresponding $\rho_Y$ parameters described above as:

\[\resizebox{\textwidth}{!}{$\displaystyle
\begin{aligned}
    \theta(\mathbb{1}_{T},\mathbb{1}_{T})
         &= \mathbb{E}\Bigl[Y\Bigl(\mathbb{1}_{T}, \textbf{d}_{T}(\mathbb{1}_{T}),\mathbfcal{G}^{\mathbb{1}_{T},\textbf{d}_{T}(\mathbb{1}_{T})}_{m_{T}}\Bigr)\big| U_1 = \bigl(d_{1}(1), d_{1}(0)\bigr) \Bigr]
        \\
        &= \int_{\textbf{b}} \int_{\ell_0} \int_{\textbf{d}_{2:T}}\int_{\textbf{m}_T}  
        \\&
        \mathbb{E}\Big[Y\Bigl(\mathbb{1}_{T}, \textbf{d}_{T},\textbf{m}_{T}\Bigr) \big| U_1 = \bigl(d_{1}(1), d_{1}(0)\bigr), \textbf{D}_{2:T}(\mathbb{1}_{2:T}) = \textbf{d}_{2:T}, \mathbfcal{G}^{\mathbb{1}_{T},\textbf{d}_{T}(\mathbb{1}_{T})}_{\textbf{m}_{T}} = \textbf{m}_T, \textbf{Z}_{T} =\mathbb{1}_{T},  L_0 = \ell_0 \Big]  \times 
        \\& 
        \Bigl\{ \prod_{t=1}^{T}   f\Bigl(\mathcal{G}^{\mathbb{1}_{t},\textbf{d}_{t}(1)}_{\textbf{m}_{t}} = \textbf{m}_t \big| U_1 = \bigl(d_{1}(1), d_{1}(0)\bigr),  \textbf{D}_{2:t}(\mathbb{1}_{2:t}) = \textbf{d}_{2:t}, \mathbfcal{G}^{\mathbb{1}_{t-1},\textbf{d}_{t-1}(\mathbb{1}_{t-1})}_{\textbf{m}_{t-1}} = \textbf{m}_{t-1},\textbf{Z}_{t} =\mathbb{1}_{t},    
         L_0 = \ell_0, b^{M} \Bigr)  \Bigr\}
        \times 
        \\&
        \Bigl\{ \prod_{s=2}^{T} f\Bigl(D_s(1) = d_{s} \big| U_1 = \bigl(d_{1}(1), d_{1}(0)\bigr),  \textbf{D}_{s-1}(\mathbb{1}_{s-1}) = \textbf{d}_{s-1}, \mathbfcal{G}^{\mathbb{1}_{s-1},\textbf{d}_{s-1}(\mathbb{1}_{s-1})}_{\textbf{m}_{s-1}} = \textbf{m}_{s-1},\textbf{Z}_{s} =\mathbb{1}_{s},
        L_0 = \ell_0, b^{D} \Bigr)\Bigr\} \\&
        \times f_{L_0}(\ell_0) d\textbf{m}_T d\textbf{d}_{2:T} d\ell_0 d\textbf{b}
        \\
        &= \rho_Y\big(\mathbb{1}_{T},\mathbb{1}_{T} \big) 
            + \int_{\textbf{b}}\int_{\ell_0} \int_{\textbf{d}_{2:T}}\int_{\textbf{m}_T}  
            E\Big[Y\Bigl(\mathbb{1}_{T}, \textbf{d}_{T},\textbf{m}_{T}\Bigr) \big|  \textbf{D}_T(\mathbb{1}_{T}) = \textbf{d}_T, \mathbfcal{G}^{\mathbb{1}_{T},\textbf{d}_{T}(\mathbb{1}_{T})}_{m_{T}} = \textbf{m}_T, \textbf{Z}_{T} =\mathbb{1}_{T},  L_0 = \ell_0 \Big]
        \times 
        \\& 
        \Bigl\{ \prod_{t=1}^{T}   f\Bigl(\mathcal{G}^{\mathbb{1}_{t},\textbf{d}_{t}(1)}_{\textbf{m}_{t}} = \textbf{m}_t \big| U_1 = \bigl(d_{1}(1), d_{1}(0)\bigr),  \textbf{D}_{2:t}(\mathbb{1}_{2:t}) = \textbf{d}_{2:t}, \mathbfcal{G}^{\mathbb{1}_{t-1},\textbf{d}_{t-1}(\mathbb{1}_{t-1})}_{\textbf{m}_{t-1}} = \textbf{m}_{t-1},\textbf{Z}_{t} =\mathbb{1}_{t},    
         L_0 = \ell_0, b^{M} \Bigr)  \Bigr\}
        \times 
        \\&
        \Bigl\{ \prod_{s=2}^{T} f\Bigl(D_s(1) = d_{s} \big| U_1 = \bigl(d_{1}(1), d_{1}(0)\bigr),  \textbf{D}_{s-1}(\mathbb{1}_{s-1}) = \textbf{d}_{s-1}, \mathbfcal{G}^{\mathbb{1}_{s-1},\textbf{d}_{s-1}(\mathbb{1}_{s-1})}_{\textbf{m}_{s-1}} = \textbf{m}_{s-1},\textbf{Z}_{s} =\mathbb{1}_{s},
        L_0 = \ell_0, b^{D} \Bigr)\Bigr\} \\&
        \times f_{L_0}(\ell_0) d\textbf{m}_T d\textbf{d}_{2:T} d\ell_0 d\textbf{b},
\end{aligned}
$}\]
\[\resizebox{\textwidth}{!}{$\displaystyle
\begin{aligned}
    \theta(\mathbb{0}_{T},\mathbb{0}_{T})
         &= \mathbb{E}\Bigl[Y\Bigl(\mathbb{0}_{T}, \textbf{d}_{T}(\mathbb{0}_{T}),\mathbfcal{G}^{\mathbb{0}_{T},\textbf{d}_{T}(\mathbb{0}_{T})}_{m_{T}}\Bigr)\big| U_1 = \bigl(d_{1}(1), d_{1}(0)\bigr) \Bigr]
        \\
        &= \int_{\textbf{b}} \int_{\ell_0} \int_{\textbf{d}_{2:T}}\int_{\textbf{m}_T}  
        \\&
        \mathbb{E}\Big[Y\Bigl(\mathbb{0}_{T}, \textbf{d}_{T},\textbf{m}_{T}\Bigr) \big| U_1 = \bigl(d_{1}(1), d_{1}(0)\bigr), \textbf{D}_{2:T}(\mathbb{0}_{2:T}) = \textbf{d}_{2:T}, \mathbfcal{G}^{\mathbb{0}_{T},\textbf{d}_{T}(\mathbb{0}_{T})}_{\textbf{m}_{T}} = \textbf{m}_T, \textbf{Z}_{T} =\mathbb{0}_{T},  L_0 = \ell_0 \Big]  \times 
        \\& 
        \Bigl\{ \prod_{t=1}^{T}   f\Bigl(\mathcal{G}^{\mathbb{0}_{t},\textbf{d}_{t}(0)}_{\textbf{m}_{t}} = \textbf{m}_t \big| U_1 = \bigl(d_{1}(1), d_{1}(0)\bigr),  \textbf{D}_{2:t}(\mathbb{0}_{2:t}) = \textbf{d}_{2:t}, \mathbfcal{G}^{\mathbb{0}_{t-1},\textbf{d}_{t-1}(\mathbb{0}_{t-1})}_{\textbf{m}_{t-1}} = \textbf{m}_{t-1},\textbf{Z}_{t} =\mathbb{0}_{t},    
         L_0 = \ell_0, b^{M} \Bigr)  \Bigr\}
        \times 
        \\&
        \Bigl\{ \prod_{s=2}^{T} f\Bigl(D_s(0) = d_{s} \big| U_1 = \bigl(d_{1}(1), d_{1}(0)\bigr),  \textbf{D}_{s-1}(\mathbb{0}_{s-1}) = \textbf{d}_{s-1}, \mathbfcal{G}^{\mathbb{0}_{s-1},\textbf{d}_{s-1}(\mathbb{0}_{s-1})}_{\textbf{m}_{s-1}} = \textbf{m}_{s-1},\textbf{Z}_{s} =\mathbb{0}_{s},
        L_0 = \ell_0, b^{D} \Bigr)\Bigr\} \\&
        \times f_{L_0}(\ell_0) d\textbf{m}_T d\textbf{d}_{2:T} d\ell_0 d\textbf{b}
        \\
        &= \rho_Y\big(\mathbb{0}_{T},\mathbb{0}_{T} \big) 
            + \int_{\textbf{b}} \int_{\ell_0} \int_{\textbf{d}_{2:T}}\int_{\textbf{m}_T}   
            E\Big[Y\Bigl(\mathbb{0}_{T}, \textbf{d}_{T},\textbf{m}_{T}\Bigr) \big|  \textbf{D}_T(\mathbb{0}_{T}) = \textbf{d}_T, \mathbfcal{G}^{\mathbb{0}_{T},\textbf{d}_{T}(\mathbb{0}_{T})}_{m_{T}} = \textbf{m}_T, \textbf{Z}_{T} =\mathbb{0}_{T},  L_0 = \ell_0 \Big]
        \times 
        \\& 
        \Bigl\{ \prod_{t=1}^{T}   f\Bigl(\mathcal{G}^{\mathbb{0}_{t},\textbf{d}_{t}(0)}_{\textbf{m}_{t}} = \textbf{m}_t \big| U_1 = \bigl(d_{1}(1), d_{1}(0)\bigr),  \textbf{D}_{2:t}(\mathbb{0}_{2:t}) = \textbf{d}_{2:t}, \mathbfcal{G}^{\mathbb{0}_{t-1},\textbf{d}_{t-1}(\mathbb{0}_{t-1})}_{\textbf{m}_{t-1}} = \textbf{m}_{t-1},\textbf{Z}_{t} =\mathbb{0}_{t},    
         L_0 = \ell_0, b^{M} \Bigr)  \Bigr\}
        \times 
        \\&
        \Bigl\{ \prod_{s=2}^{T} f\Bigl(D_s(0) = d_{s} \big| U_1 = \bigl(d_{1}(1), d_{1}(0)\bigr),  \textbf{D}_{s-1}(\mathbb{0}_{s-1}) = \textbf{d}_{s-1}, \mathbfcal{G}^{\mathbb{0}_{s-1},\textbf{d}_{s-1}(\mathbb{0}_{s-1})}_{\textbf{m}_{s-1}} = \textbf{m}_{s-1},\textbf{Z}_{s} =\mathbb{0}_{s},
        L_0 = \ell_0, b^{D} \Bigr)\Bigr\} \\&
        \times f_{L_0}(\ell_0) d\textbf{m}_T d\textbf{d}_{2:T} d\ell_0 d\textbf{b},
\end{aligned}
$}\]
and
\[\resizebox{\textwidth}{!}{$\displaystyle
\begin{aligned}
    \theta(\mathbb{1}_{T},\mathbb{0}_{T})
         &= \mathbb{E}\Bigl[Y\Bigl(\mathbb{1}_{T}, \textbf{d}_{T}(\mathbb{1}_{T}),\mathbfcal{G}^{\mathbb{0}_{T},\textbf{d}_{T}(\mathbb{0}_{T})}_{m_{T}}\Bigr)\big| U_1 = \bigl(d_{1}(1), d_{1}(0)\bigr) \Bigr]
        \\
        &=\int_{\textbf{b}} \int_{\ell_0} \int_{\textbf{d}_{2:T}}\int_{\textbf{m}_T}  
        \\&
        \mathbb{E}\Big[Y\Bigl(\mathbb{1}_{T}, \textbf{d}_{T},\textbf{m}_{T}\Bigr) \big| U_1 = \bigl(d_{1}(1), d_{1}(0)\bigr), \textbf{D}_{2:T}(\mathbb{1}_{2:T}) = \textbf{d}_{2:T}, \mathbfcal{G}^{\mathbb{0}_{T},\textbf{d}_{T}(\mathbb{0}_{T})}_{\textbf{m}_{T}} = \textbf{m}_T, \textbf{Z}_{T} =\mathbb{1}_{T},  L_0 = \ell_0 \Big]  \times 
        \\& 
        \Bigl\{ \prod_{t=1}^{T}   f\Bigl(\mathcal{G}^{\mathbb{0}_{t},\textbf{d}_{t}(0)}_{\textbf{m}_{t}} = \textbf{m}_t \big| U_1 = \bigl(d_{1}(1), d_{1}(0)\bigr),  \textbf{D}_{2:t}(\mathbb{0}_{2:t}) = \textbf{d}_{2:t}, \mathbfcal{G}^{\mathbb{0}_{t-1},\textbf{d}_{t-1}(\mathbb{0}_{t-1})}_{\textbf{m}_{t-1}} = \textbf{m}_{t-1},\textbf{Z}_{t} =\mathbb{0}_{t},    
         L_0 = \ell_0 , b^{M}\Bigr)  \Bigr\}
        \times 
        \\&
        \Bigl\{ \prod_{s=2}^{T} f\Bigl(D_s(1) = d_{s} \big| U_1 = \bigl(d_{1}(1), d_{1}(0)\bigr),  \textbf{D}_{s-1}(\mathbb{1}_{s-1}) = \textbf{d}_{s-1}, \mathbfcal{G}^{\mathbb{0}_{s-1},\textbf{d}_{s-1}(\mathbb{0}_{s-1})}_{\textbf{m}_{s-1}} = \textbf{m}_{s-1},\textbf{Z}_{s} =\mathbb{1}_{s},
        L_0 = \ell_0, b^{D} \Bigr)\Bigr\} \\&
        \times f_{L_0}(\ell_0) d\textbf{m}_T d\textbf{d}_{2:T} d\ell_0 d\textbf{b}
        \\
        &= \rho_Y\big(\mathbb{1}_{T},\mathbb{0}_{T} \big) 
            + \int_{\textbf{b}} \int_{\ell_0} \int_{\textbf{d}_{2:T}}\int_{\textbf{m}_T}  
            \mathbb{E}\Big[Y\Bigl(\mathbb{1}_{T}, \textbf{d}_{T},\textbf{m}_{T}\Bigr) \big|  \textbf{D}_{T}(\mathbb{1}_{T}) = \textbf{d}_{T}, \mathbfcal{G}^{\mathbb{0}_{T},\textbf{d}_{T}(\mathbb{0}_{T})}_{\textbf{m}_{T}} = \textbf{m}_T, \textbf{Z}_{T} =\mathbb{1}_{T},  L_0 = \ell_0 \Big] 
         \times   
        \\& 
        \Bigl\{ \prod_{t=1}^{T}   f\Bigl(\mathcal{G}^{\mathbb{0}_{t},\textbf{d}_{t}(0)}_{\textbf{m}_{t}} = \textbf{m}_t \big| U_1 = \bigl(d_{1}(1), d_{1}(0)\bigr),  \textbf{D}_{2:t}(\mathbb{0}_{2:t}) = \textbf{d}_{2:t}, \mathbfcal{G}^{\mathbb{0}_{t-1},\textbf{d}_{t-1}(\mathbb{0}_{t-1})}_{\textbf{m}_{t-1}} = \textbf{m}_{t-1},\textbf{Z}_{t} =\mathbb{0}_{t},    
         L_0 = \ell_0, b^{M} \Bigr)  \Bigr\}
        \times 
        \\&
        \Bigl\{ \prod_{s=2}^{T} f\Bigl(D_s(1) = d_{s} \big| U_1 = \bigl(d_{1}(1), d_{1}(0)\bigr),  \textbf{D}_{s-1}(\mathbb{1}_{s-1}) = \textbf{d}_{s-1}, \mathbfcal{G}^{\mathbb{0}_{s-1},\textbf{d}_{s-1}(\mathbb{0}_{s-1})}_{\textbf{m}_{s-1}} = \textbf{m}_{s-1},\textbf{Z}_{s} =\mathbb{1}_{s},
        L_0 = \ell_0, b^{D} \Bigr)\Bigr\} \\&
        \times f_{L_0}(\ell_0) d\textbf{m}_T d\textbf{d}_{2:T} d\ell_0 d\textbf{b},
\end{aligned}
$}\]
respectively.

\subsection{Sensitivity analysis results}
In this section, we present results for the sensitivity analysis. Recall that Table~\ref{Table:EffectEstimatesvsBaselineRegime} in the main text compare three treatment assignment regimes---$\textbf{Z}_{T} =$ \{V,P,P\}, $\textbf{Z}_{T} =$ \{V,V,P\}, and $\textbf{Z}_{T} = $ \{V,V,V\}---against the baseline regime $\textbf{Z}_{*,T} =$ \{P,P,P\}.

  Table~\ref{TableSA:EffectEstimatesvsBaselineRegime} reports the sensitivity-adjusted effect estimates across these comparisons. Consistent with the results in the main text, the adjusted indirect effects remain negative, and statistical significance is observed for only \{V,V,V\} vs. \{P,P,P\} only. Likewise, the posterior means of the direct and total effects are positive and significant, consistent with the findings reported in the case study. Overall, the sign and statistical significance of all sensitivity-adjusted estimates align with those obtained under the unadjusted analysis, assuming Assumption 3.3 holds. Thus, under our proposed sensitivity analysis framework, the estimates are robust to possible violations of Assumption 3.3.

\begin{table}[h]
\caption{Causal effect estimates (posterior mean and $95\%$ Bayesian credible interval) across treatment regime contrasts adjusting for sensitivity parameters}
\label{TableSA:EffectEstimatesvsBaselineRegime}
\begin{tabular}{@{}lcrcrrr@{}}
\hline
Principal strata & Effects &  \{V,P,P\} vs. \{P,P,P\} &  \{V,V,P\} vs. \{P,P,P\}   &  \{V,V,V\} vs. \{P,P,P\} \\
 at $t=1$ &   &  &  & 
\\
\hline
{Active}  & {Direct} &  2.13(0.96, 3.37) & 3.49(2.19, 5.00) & 4.17(2.55, 5.72) \\
{customers}  & {Indirect}  &  $-$0.03($-$0.58, 0.80) & $-$0.55($-$1.25, 0.24) & $-$1.15($-$2.22, $-$0.33) \\
          & {Total}   &  2.09(0.78, 3.66) & 2.94(1.51, 4.56) & 3.02(1.29, 4.38) \\[6pt]
{Non-price}  & {Direct} &  2.10(0.92, 3.26) & 3.47(2.13, 4.78) & 4.25(2.61, 5.79) \\
 {value-attentive}  & {Indirect}  &  $-$0.27($-$0.85, 0.67) & $-$0.76($-$1.45, 0.20) & $-$1.36($-$2.45, $-$0.68) \\
  {customers}  & {Total}   &  1.84(0.54, 3.32) & 2.71(1.25, 4.23) & 2.82(1.18, 4.12) \\[6pt]
{Price-attentive} & {Direct} &  2.14(0.89, 3.67) & 3.50(2.18, 5.20) & 4.14(2.55, 5.78) \\
{customers}  & {Indirect}  &  $-$0.16($-$0.77, 0.50) & $-$0.67($-$1.40, 0.03) & $-$1.22($-$2.31, $-$0.45) \\
          & {Total}   &  1.98(0.68, 3.58) & 2.83(1.45, 4.56) & 2.92(1.32, 4.26) \\[6pt]
{Non-active}  & {Direct} &  2.12(0.95, 3.30) & 3.48(2.19, 4.91) & 4.20(2.68, 5.71) \\
{customers}   & {Indirect}  &  $-$0.39($-$0.93, 0.36) & $-$0.88($-$1.58, $-$0.09) & $-$1.48($-$2.53, $-$0.80) \\
          & {Total}   &  1.73(0.50, 3.16) & 2.60(1.25, 4.12) & 2.72(1.14, 3.93) \\[6pt]
\hline
\end{tabular}
\end{table}

\newpage

\section{Details on the EDPM model and g-computation}\label{sec:S4}
The notation $EDP(\alpha^{\beta}, \alpha^{\theta|\beta}, H_0 )$ means that $H_{\beta} \sim DP(\alpha^{\beta}, H_{0\beta})$ and $H_{\theta|\beta} \sim DP(\alpha^{\theta|\beta}, H_{0\theta|\beta})$, where $\alpha^{\beta}$ and $\alpha^{\theta|\beta}$ are positive valued parameters and $H_0 = H_{0\beta} \times H_{0\theta|\beta}$ is the base distribution. 

\subsection{Priors for the $\beta$-level parameters}
For the coefficient of the $p$th covariate, $\beta_{i,p}$, in the local outcome regression model, we assume the following prior:
\begin{align*}
    \beta_{i,p}\stackrel{ind}{\sim}\text{N}\big(\beta_{0,p}, c\,\sigma^{2,Y}_{0,p} \big).
\end{align*}
We set $\beta_{0,p}$ and $\sigma^{2,Y}_{0,p}$ to the maximum likelihood estimates of the $p$th coefficient and its variance, respectively, obtained from fitting a linear regression of the outcome on the covariates using the full dataset. In other words, our prior belief for the cluster-specific coefficients in the local outcome regression model corresponds to the coefficients from a linear model fitted to the entire dataset, with the associated uncertainty represented by a constant $c > 1$ multiplied by the variance estimates. Following the arguments of \cite{roy2018bayesian}, we set $c = n/5$. Similarly, for the variance parameter $\sigma_{i}^{2,Y}$, we specify the prior:
\begin{align*}
    \sigma_{i}^{2,Y} \sim \text{Inverse-Gamma}(a^{Y}, b^{Y}),
\end{align*}
where we set the shape parameter \( a^{Y} = 3 \) and define the scale parameter \( b^{Y} \) as $2 \times$ variance from the linear regression of the outcome on the covariates using the full dataset. Since the mean of an Inverse-Gamma$(a,b)$ distribution is $b/(a-1)$ for $a>1$, the choice $a^{Y}=3$ yields a prior mean centering the prior at the full-data variance estimate. Finally, for the hurdle model probability parameter \( \pi_{i}^{Y} \), we assign an uninformative \(\text{Beta}(1,1)\) prior.

 In summary, we assume:
\begin{align*}
            H_{0\beta} &\sim \underbrace{\text{Beta}(1,1 )}_{\pi_{i}^{Y}} \times  \underbrace{\text{Inverse-Gamma}(a^{Y},b^{Y})}_{\sigma_{i}^{2,Y}} \times \prod_{p=1}^{P}
          \underbrace{\text{N}\big(\beta^Y_{0,p}, c\,\sigma_{0,p}^{2,Y}\big)}_{\beta_{i,p}^Y, p = 1, \ldots, P},
        \end{align*}
where $P$ denotes the number of covariates, including both baseline and time-varying variables, in the local outcome regression model.

\subsection{Priors for the $\theta$-level parameters}
At the $\theta-$level, we have regression parameters $ \big(\boldsymbol{\theta}^{M^{(1)}}_{i,t}, \boldsymbol{\theta}^{M^{(2)}}_{i,t}, \boldsymbol{\theta}^D_{i,t}, \boldsymbol{\theta}^Z_{i,t}\big)$, $t = 1, \ldots, T$, for modeling the time-varying variables and regression parameters $\big(\boldsymbol{\theta}_{i}^{L_0}\big)$ for modeling the baseline confounders. Without loss of generality, let us assume that we have continuous time-varying mediators. We assume the following base measures for these parameters:
\begin{equation}\label{Eq:thetaLevelPriors}
        \begin{aligned}
             H_{0\theta|\beta} &\sim \prod_{t=1}^{T}\bigg\{ \prod_{q=1}^{Q}\Big\{ \underbrace{ \text{N} \big(\theta^{Z_t}_{0,q}, c\sigma^{2,\theta^{Z_t}}_{0,q}\big)}_{\theta_{i,q}^{Z_t}}  \times \underbrace{\text{N} \big(\theta^{D_t}_{0,q}, c\sigma^{2,\theta^{D_t}}_{0,q}\big)}_{\theta_{i,q}^{D_t}}
             \times \underbrace{\text{N} \big(\theta^{M^{(1)}_t}_{0,q}, c\sigma^{2,\theta^{M^{(1)}_t}}_{0,q}\big)}_{\theta_{i,q}^{M^{(1)}_t}} \times \underbrace{\text{N} \big(\theta^{M^{(2)}_t}_{0,q}, c\sigma^{2,\theta^{M^{(2)}_t}}_{0,q}\big)}_{\theta_{i,q}^{M^{(2)}_t}} \Big\} \times 
             \\& \underbrace{\text{Inverse-Gamma}\big(a^{M^{(1)}_t},b^{M^{(1)}_t} \big)}_{\sigma_{i}^{2,M^{(1)}_t}} \times \underbrace{\text{Inverse-Gamma}\big(a^{M^{(2)}_t},b^{M^{(2)}_t} \big)}_{\sigma_{i}^{2,M^{(2)}_t}} \times \underbrace{\text{Beta}(1,1 )}_{\pi_{i}^{M^{(1)}_t}} \times \underbrace{\text{Beta}(1,1 )}_{\pi_{i}^{M^{(2)}_t}} \bigg\}  \times 
                f_0\big(\theta_i^{\textbf{L}_0}\big),
        \end{aligned}
\end{equation}
where $Q$ denotes the number of regression coefficients in each local longitudinal regression (the covariates plus an intercept). We center and scale the base measures using maximum likelihood estimates from ordinary linear or logistic regressions applied to all of the data, again setting $c = n/5$. In the specification above, $f_0(\theta_i^{\textbf{L}_0}) = \prod_{s = 1}^{K}f_{0,s}(\theta_{i,s}^{\textbf{L}_0})$ for $K$ baseline covariates with 
            \begin{align*}
                f_{0,s}\big( \theta_{i,s}^{\textbf{L}_0}\big) = \begin{cases}
                    \underbrace{\text{Inverse-Gamma}\big(a_{\boldsymbol{\ell}_{0}},b_{\boldsymbol{\ell}_{0}} \big)}_{\sigma_{i,s}^{2,\textbf{L}_{0}}} \times \underbrace{\text{N}\big(\mu_{\boldsymbol{\ell}_{0}},\sigma_{\boldsymbol{\ell}_{0}}^2\big)}_{\theta_{i,s}^{\textbf{L}_0}}  \quad\quad\quad \text{for continuous baseline covariates}\\
                 \underbrace{\text{Beta} \big(a_{\boldsymbol{\ell}_{0}},b_{\boldsymbol{\ell}_{0}}\big)}_{p_{i,s}^{\textbf{L}_0}} \quad\quad\quad \quad\quad\quad \quad\quad\quad \quad\quad\quad \text{     for binary baseline covariates}.
                \end{cases}
            \end{align*}
 We assume uninformative conjugate priors for the baseline confounder parameters.

 \subsection{Priors for the EDP mass parameters}
The number of clusters in the EDP model is influenced by the concentration parameters $\alpha^{\beta}$ and $\alpha^{\theta|\beta}$, with smaller values of these parameters corresponding to fewer clusters. Thus, careful selection of these concentration parameters is essential for determining appropriate truncation levels $N$ and $M$ in EDP mixtures. Since the square-breaking weights decay exponentially, $N$ and $M$ can typically be chosen to be relatively small. Chapter 6 of \cite{daniels2023bayesian} provides a simple calculation showing that, on average, when the concentration parameter $\alpha = 1$, the first 20 stick-breaking weights in the case of Dirichlet process priors sum to approximately 1. Among these first 20 weights, the latter 10 are, in expectation, approximately 0 when $\alpha = 1$. Utilizing this observation, we set $\alpha^{\beta} = 1$ in our work and choose $N = 10$.

 The choice of the inner-level truncation value $M$ in our work is data-driven in the sense that we evaluate several candidate values of $M$ ranging from $1$ to $10$ when fitting the observed data models. Empirically, we find that only $2$--$3$ clusters typically have non-negligible membership probabilities across the four MCMC chains used in the analysis. Accordingly, we fix the inner-level EDP concentration parameter at $\alpha^{\theta|\beta} = 0.5$ and set the number of inner clusters to $M = 4$ as a conservative choice. Consequently, the mass parameters $\alpha^{\beta}$ and $\alpha^{\theta|\beta}$ are treated as fixed in our analyses--the corresponding Gibbs updates in Appendix~E (Steps 6 and 7) are presented for completeness and are omitted when these parameters are held fixed.

\subsection{Random effects}
The random effects $b_i^{M^{(1)}}, b_i^{M^{(2)}}, b_i^{D},$ and $b_i^{Z}$, which are not included in the EDP prior, are assumed to follow mean-zero normal distributions:
$$b_i^{M^{(1)}} \sim N\big(0, \sigma_{b}^{2,M^{(1)}}\big),
b_i^{M^{(2)}} \sim N\big(0, \sigma_{b}^{2,M^{(2)}}\big), b_i^{D} \sim N\big(0, \sigma_{b}^{2,D}\big), \text{ and } b_i^{Z} \sim N\big(0, \sigma_{b}^{2,Z}\big).$$ 
The corresponding variances of these random intercepts, $ \sigma_{b}^{2,M^{(1)}}, \sigma_{b}^{2,M^{(2)}}, \sigma_{b}^{2,D}$, and  $\sigma_{b}^{2,Z}$, are assigned conjugate inverse-gamma priors:
\begin{align*}
    \sigma_{b}^{2,M^{(1)}} &\sim IG\big(\alpha_{\sigma_{b}^{2,M^{(1)}}}, \beta_{\sigma_{b}^{2,M^{(1)}}} \big),
    \\
    \sigma_{b}^{2,M^{(2)}} &\sim IG\big(\alpha_{\sigma_{b}^{2,M^{(2)}}}, \beta_{\sigma_{b}^{2,M^{(2)}}} \big),
    \\
    \sigma_{b}^{2,D} &\sim IG\big(\alpha_{\sigma_{b}^{2,D}}, \beta_{\sigma_{b}^{2,D}} \big), \text{ and }
    \\
    \sigma_{b}^{2,Z} &\sim IG\big(\alpha_{\sigma_{b}^{2,Z}}, \beta_{\sigma_{b}^{2,Z}} \big).
\end{align*}

\subsection{Details on the EDP mixture models}

 For conciseness, let $\textbf{b}_i = \big\{ b_i^{M^{(2)}}, b_i^{M^{(1)}}, b_i^{D}, b_i^{Z}\big\}$ denote a vector of random effects for subject $i$. We can express the joint density of all random variables under the EDPM specification in Section~\ref{sec:4.1} of the main text as an infinite mixture:
    \begin{equation}\label{Eq:EDPJointMixture}
\resizebox{\textwidth}{!}{$\displaystyle
        \begin{aligned}
            & f_{H}\bigl(y_i, \textbf{m}^{(1)}_{i,T},\textbf{m}^{(2)}_{i,T},\textbf{d}_{i,T},\textbf{z}_{i,T},\boldsymbol{\ell}_{i,0},\textbf{b}_i\bigr)
            \\
            & =f_{H}\bigl(y_i, \textbf{m}^{(1)}_{i,T},\textbf{m}^{(2)}_{i,T},\textbf{d}_{i,T},\textbf{z}_{i,T},\boldsymbol{\ell}_{i,0}\big|\textbf{b}_i\bigr) \times f_{H} \big(\textbf{b}_i\big)
            \\
            & = \sum_{r=1}^{\infty}\Bigl\{\gamma_r p\bigl(y_i\big|\textbf{m}^{(1)}_{i,T},\textbf{m}^{(2)}_{i,T}, \textbf{d}_{i,T},\textbf{z}_{i,T}, \boldsymbol{\ell}_{i,0};\beta_r\bigr) \times \sum_{s=1}^{\infty} \gamma_{s|r} p\bigl(\textbf{m}^{(1)}_{i,T},\textbf{m}^{(2)}_{i,T}, \textbf{d}_{i,T},\textbf{z}_{i,T},\boldsymbol{\ell}_{i,0} \big| \textbf{b}_i;\theta_{s|r}\bigr)\Bigr\} \times f_{H}\big(\textbf{b}_i\big)
            \\&
            = \sum_{r=1}^{\infty}\Bigl\{\gamma_r p\bigl(y_i\big| \textbf{m}^{(1)}_{i,T},\textbf{m}^{(2)}_{i,T}, \textbf{d}_{i,T},\textbf{z}_{i,T}, \boldsymbol{\ell}_{i,0};\beta_r\bigr) \times  \sum_{s=1}^{\infty} \gamma_{s|r}
            \prod_{t = 1}^{T} \big\{
            p\bigl(m^{(1)}_{i,t},m^{(2)}_{i,t}\big| \textbf{d}_{i,t},\textbf{z}_{i,t}, \boldsymbol{\ell}_{i,0}, b_i^{M^{(2)}}, b_i^{M^{(1)}};\theta_{s|r}\bigr)   \times  \\&
             p\bigl(d_{i,t}\big| \textbf{z}_{i,t}, \boldsymbol{\ell}_{i,0}, b_i^{D};\theta_{s|r}\bigr)
             \times   p\bigl(z_{i,t}\big| \boldsymbol{\ell}_{i,0}, b_i^{Z};\theta_{s|r}\bigr) \big\} \times p\bigl( \boldsymbol{\ell}_{i,0};\theta_{s|r}\bigr)\Bigr\} \times f_{H}\big(\textbf{b}_i\big)
        \end{aligned}
$}
    \end{equation}
where $p(.)$ denotes the corresponding local density associated with distributions in the EDPM model specified in the main text. The third equality in Equation (\ref{Eq:EDPJointMixture}) follows from assuming local independence within clusters.

  Using (\ref{Eq:EDPJointMixture}), we can write:
\begin{equation}
\resizebox{\textwidth}{!}{$\displaystyle
        \begin{aligned}
            f_{H}\bigl( \textbf{m}^{(1)}_{i,T},\textbf{m}^{(2)}_{i,T},\textbf{d}_{i,T},\textbf{z}_{i,T},\boldsymbol{\ell}_{i,0}, \textbf{b}_i\bigr)
            & = \sum_{r=1}^{\infty}\Bigl\{\gamma_r  \sum_{s=1}^{\infty} \gamma_{s|r}
            \prod_{t = 1}^{T} \big\{
            p\bigl(m^{(1)}_{i,t},m^{(2)}_{i,t}\big| \textbf{d}_{i,t},\textbf{z}_{i,t}, \boldsymbol{\ell}_{i,0}, b_i^{M^{(2)}}, b_i^{M^{(1)}};\theta_{s|r}\bigr)   \times  \\&
             p\bigl(d_{i,t}\big|\textbf{z}_{i,t}, \boldsymbol{\ell}_{i,0}, b_i^{D};\theta_{s|r}\bigr)
             \times  p\bigl(z_{i,t}\big| \boldsymbol{\ell}_{i,0}, b_i^{Z};\theta_{s|r}\bigr) \big\} \times p\bigl( \boldsymbol{\ell}_{i,0};\theta_{s|r}\bigr)\Bigr\}\times f_{H}\big(\textbf{b}_i\big),
        \end{aligned}
$}
    \end{equation}
\begin{equation}
\resizebox{\textwidth}{!}{$\displaystyle
        \begin{aligned}
             f_{H}\bigl(\textbf{d}_{i,T}, \textbf{z}_{i,T}, \textbf{m}^{(1)}_{i,T-1},\textbf{m}^{(2)}_{i,T-1},\boldsymbol{\ell}_{i,0}, \textbf{b}_i\bigr)
             & = \sum_{r=1}^{\infty}\Bigl\{\gamma_r  \sum_{s=1}^{\infty} \gamma_{s|r}
             \prod_{t = 1}^{T} \big\{ p\bigl(d_{i,t}\big| \textbf{z}_{i,t},\boldsymbol{\ell}_{i,0}, b_i^{D};\theta_{s|r}\bigr)
              \times p\bigl(z_{i,t}\big| \boldsymbol{\ell}_{i,0}, b_i^{Z};\theta_{s|r}\bigr) \\& \times p\bigl(m^{(1)}_{i,t-1},m^{(2)}_{i,t-1}\big| \textbf{d}_{i,t-1},\textbf{z}_{i,t-1}, \boldsymbol{\ell}_{i,0}, b_i^{M^{(2)}}, b_i^{M^{(1)}};\theta_{s|r}\bigr) \big\} \times p\bigl( \boldsymbol{\ell}_{i,0};\theta_{s|r}\bigr)\Bigr\}\times f_{H}\big(\textbf{b}_i\big).
        \end{aligned}
$}
    \end{equation}
The EDPM induces the following conditional density for $f_{H}\bigl(y_i\big| \textbf{m}^{(1)}_{i,T},\textbf{m}^{(2)}_{i,T},\textbf{d}_{i,T},\textbf{z}_{i,T},\boldsymbol{\ell}_{i,0}, \textbf{b}_i\bigr)$:
    \begin{equation}
\resizebox{\textwidth}{!}{$\displaystyle
        \begin{aligned}
            & f_{H}\bigl(y_i\big| \textbf{m}^{(1)}_{i,T},\textbf{m}^{(2)}_{i,T},\textbf{d}_{i,T},\textbf{z}_{i,T},\boldsymbol{\ell}_{i,0}, \textbf{b}_i\bigr)
            \\
            &= \frac{f_{H}\bigl(y_i, \textbf{m}^{(1)}_{i,T},\textbf{m}^{(2)}_{i,T},\textbf{d}_{i,T},\textbf{z}_{i,T},\boldsymbol{\ell}_{i,0}, \textbf{b}_i\bigr)}{f_{H}\bigl( \textbf{m}^{(1)}_{i,T},\textbf{m}^{(2)}_{i,T},\textbf{d}_{i,T},\textbf{z}_{i,T},\boldsymbol{\ell}_{i,0}, \textbf{b}_i\bigr)}
            \\
            &= \frac{\splitfrac{\sum_{u=1}^{\infty}\Bigl\{\gamma_u p\bigl(y_i\big|\textbf{m}^{(1)}_{i,T},\textbf{m}^{(2)}_{i,T},\textbf{d}_{i,T},\textbf{z}_{i,T},   \boldsymbol{\ell}_{i,0};\beta_u\bigr)  \sum_{v=1}^{\infty} \gamma_{v|u} \prod_{t=1}^{T}\big\{ p\bigl(m^{(1)}_{i,t},m^{(2)}_{i,t}\big|\textbf{d}_{i,t},\textbf{z}_{i,t}, \boldsymbol{\ell}_{i,0}, b_i^{M^{(2)}}, b_i^{M^{(1)}};\theta_{v|u}\bigr)}{ p\bigl(d_{i,t}\big| \textbf{z}_{i,t}, \boldsymbol{\ell}_{i,0}, b_i^{D};\theta_{v|u}\bigr) p\bigl(z_{i,t}\big| \boldsymbol{\ell}_{i,0}, b_i^{Z};\theta_{v|u}\bigr) \bigr\} p\bigl(\boldsymbol{\ell}_{i,0};\theta_{v|u}\bigr)\Bigr\} \times f_{H}\big(\textbf{b}_i\big)}}{\splitfrac{\sum_{h=1}^{\infty}\Bigl\{\gamma_h  \sum_{v=1}^{\infty} \gamma_{v|h} \prod_{t=1}^{T} \big\{p\bigl(m^{(1)}_{i,t},m^{(2)}_{i,t}\big| \textbf{d}_{i,t},\textbf{z}_{i,t}, \boldsymbol{\ell}_{i,0}, b_i^{M^{(2)}}, b_i^{M^{(1)}};\theta_{v|h}\bigr) }{ p\bigl(d_{i,t}\big| \textbf{z}_{i,t}, \boldsymbol{\ell}_{i,0}, b_i^{D};\theta_{v|h}\bigr)  p\bigl(z_{i,t}\big| \boldsymbol{\ell}_{i,0}, b_i^{Z};\theta_{v|h}\bigr)\bigr\} p\bigl(\boldsymbol{\ell}_{i,0};\theta_{v|h}\bigr)\Bigr\}\times f_{H}\big(\textbf{b}_i\big)}}
            \\
            &= \sum_{u=1}^{\infty}\frac{\splitfrac{\Bigl\{\gamma_u   \sum_{v=1}^{\infty} \gamma_{v|u}
            \prod_{t=1}^{T} \big\{ p\bigl(m^{(1)}_{i,t},m^{(2)}_{i,t}\big| \textbf{d}_{i,t},\textbf{z}_{i,t}, \boldsymbol{\ell}_{i,0}, b_i^{M^{(2)}}, b_i^{M^{(1)}};\theta_{v|u}\bigr)}{ p\bigl(d_{i,t}\big| \textbf{z}_{i,t}, \boldsymbol{\ell}_{i,0}, b_i^{D};\theta_{v|u}\bigr)p\bigl(z_{i,t}\big| \boldsymbol{\ell}_{i,0}, b_i^{Z};\theta_{v|u}\bigr)\bigr\} p\bigl(\boldsymbol{\ell}_{i,0};\theta_{v|u}\bigr)\Bigr\}}}{\splitfrac{\sum_{h=1}^{\infty}\Bigl\{\gamma_h  \sum_{v=1}^{\infty} \gamma_{v|h}
             \prod_{t=1}^{T} \big\{ p\bigl(m^{(1)}_{i,t},m^{(2)}_{i,t}\big|\textbf{d}_{i,t},\textbf{z}_{i,t},  \boldsymbol{\ell}_{i,0}, b_i^{M^{(2)}}, b_i^{M^{(1)}};\theta_{v|h}\bigr)  } {p\bigl(d_{i,t}\big| \textbf{z}_{i,t},\boldsymbol{\ell}_{i,0}, b_i^{D};\theta_{v|h}\bigr) p\bigl(z_{i,t}\big| \boldsymbol{\ell}_{i,0}, b_i^{Z};\theta_{v|h}\bigr) \bigr\} p\bigl(\boldsymbol{\ell}_{i,0};\theta_{v|h}\bigr)\Bigr\}}}  \times   p\bigl(y_i\big|\textbf{m}^{(1)}_{i,T},\textbf{m}^{(2)}_{i,T}, \textbf{d}_{i,T},\textbf{z}_{i,T}, \boldsymbol{\ell}_{i,0};\beta_u\bigr)
            \\
            &= \sum_{u = 1}^{\infty} w_u\bigl( \textbf{m}^{(1)}_{i,T},\textbf{m}^{(2)}_{i,T},\textbf{d}_{i,T},\textbf{z}_{i,T},\boldsymbol{\ell}_{i,0}\bigr) \times p\bigl(y_i\big| \textbf{m}^{(1)}_{i,T},\textbf{m}^{(2)}_{i,T}, \textbf{d}_{i,T},\textbf{z}_{i,T}, \boldsymbol{\ell}_{i,0};\beta_u\bigr),
        \end{aligned}
$}
    \end{equation}
    where
        \[\resizebox{\textwidth}{!}{$\displaystyle
    \begin{aligned}
        & w_u\bigl( \textbf{m}^{(1)}_{i,T},\textbf{m}^{(2)}_{i,T},\textbf{d}_{i,T},\textbf{z}_{i,T},\boldsymbol{\ell}_{i,0}\bigr)
        \\&
        = \frac{\Bigl\{\gamma_u   \sum_{v=1}^{\infty} \gamma_{v|u}
            \prod_{t=1}^{T} \big\{ p\bigl(m^{(1)}_{i,t},m^{(2)}_{i,t}\big| \textbf{d}_{i,t},\textbf{z}_{i,t}, \boldsymbol{\ell}_{i,0}, b_i^{M^{(2)}}, b_i^{M^{(1)}};\theta_{v|u}\bigr) p\bigl(d_{i,t}\big| \textbf{z}_{i,t},\boldsymbol{\ell}_{i,0}, b_i^{D};\theta_{v|u}\bigr)p\bigl(z_{i,t}\big| \boldsymbol{\ell}_{i,0}, b_i^{Z};\theta_{v|u}\bigr)\bigr\} p\bigl(\boldsymbol{\ell}_{i,0};\theta_{v|u}\bigr)\Bigr\}}{\sum_{h=1}^{\infty}\Bigl\{\gamma_h  \sum_{v=1}^{\infty} \gamma_{v|h}
             \prod_{t=1}^{T} \big\{ p\bigl(m^{(1)}_{i,t},m^{(2)}_{i,t}\big| \textbf{d}_{i,t},\textbf{z}_{i,t},\boldsymbol{\ell}_{i,0}, b_i^{M^{(2)}}, b_i^{M^{(1)}};\theta_{v|h}\bigr) p\bigl(d_{i,t}\big| \textbf{z}_{i,t},\boldsymbol{\ell}_{i,0}, b_i^{D};\theta_{v|h}\bigr)  p\bigl(z_{i,t}\big| \boldsymbol{\ell}_{i,0}, b_i^{Z};\theta_{v|h}\bigr) \bigr\} p\bigl(\boldsymbol{\ell}_{i,0};\theta_{v|h}\bigr)\Bigr\}}.
    \end{aligned}
    $}\]
Next, we derive the conditional density $f\bigl( m^{(1)}_{i,T}, m^{(2)}_{i,T} \big|\textbf{m}^{(1)}_{i,T-1}, \textbf{m}^{(2)}_{i,T-1},\textbf{d}_{i,T},\textbf{z}_{i,T},\boldsymbol{\ell}_{i,0}, \textbf{b}_i\bigr)$: 
        \begin{equation}
    \resizebox{\textwidth}{!}{$\displaystyle
        \begin{aligned}
            & f_{H}\bigl( m^{(1)}_{i,T}, m^{(2)}_{i,T} \big|\textbf{m}^{(1)}_{i,T-1}, \textbf{m}^{(2)}_{i,T-1},\textbf{d}_{i,T},\textbf{z}_{i,T},\boldsymbol{\ell}_{i,0}, \textbf{b}_i\bigr) \\
            &= \frac{f_{H}\bigl(\textbf{m}^{(1)}_{i,T},\textbf{m}^{(2)}_{i,T},\textbf{d}_{i,T},\textbf{z}_{i,T},\boldsymbol{\ell}_{i,0}, \textbf{b}_i\bigr)}{f_{H}\bigl( \textbf{m}^{(1)}_{i,T-1}, \textbf{m}^{(2)}_{i,T-1},\textbf{d}_{i,T},\textbf{z}_{i,T},\boldsymbol{\ell}_{i,0}, \textbf{b}_i\bigr)}
            \\
            &= \frac{\splitfrac{\sum_{u=1}^{\infty}\Bigl\{\gamma_u   \sum_{v=1}^{\infty} \gamma_{v|u}
            \prod_{t=1}^{T} \big\{ p\bigl(m^{(1)}_{i,t},m^{(2)}_{i,t}\big| \textbf{d}_{i,t},\textbf{z}_{i,t}, \boldsymbol{\ell}_{i,0}, b_i^{M^{(2)}}, b_i^{M^{(1)}};\theta_{v|u}\bigr)p\bigl(d_{i,t}\big| \textbf{z}_{i,t}, \boldsymbol{\ell}_{i,0}, b_i^{D};\theta_{v|u}\bigr)}{   p\bigl(z_{i,t}\big| \boldsymbol{\ell}_{i,0}, b_i^{Z};\theta_{v|u}\bigr) \bigr\} p\bigl(\boldsymbol{\ell}_{i,0};\theta_{v|u}\bigr)\Bigr\} \times f_{H}\big(\textbf{b}_i\big)}}{\splitfrac{\sum_{h=1}^{\infty}\Bigl\{\gamma_h  \sum_{v=1}^{\infty} \gamma_{v|h} \prod_{s=1}^{T-1} \big\{ p\bigl(m^{(1)}_{i,s},m^{(2)}_{i,s}\big| \textbf{d}_{i,s},\textbf{z}_{i,s}, \boldsymbol{\ell}_{i,0}, b_i^{M^{(2)}}, b_i^{M^{(1)}};\theta_{v|h}\bigr) \big\} \prod_{t=1}^{T}\big\{ p\bigl(d_{i,t}\big| \textbf{z}_{i,t},\boldsymbol{\ell}_{i,0}, b_i^{D};\theta_{v|h}\bigr)  p\bigl(z_{i,t}\big| \boldsymbol{\ell}_{i,0}, b_i^{Z};\theta_{v|h}\bigr) \bigr\}}{ p\bigl(\boldsymbol{\ell}_{i,0};\theta_{v|h}\bigr)\Bigr\}\times f_{H}\big(\textbf{b}_i\big)}}
            \\
            &= \sum_{u=1}^{\infty}\frac{\splitfrac{\Bigl\{\gamma_u \sum_{v=1}^{\infty}\gamma_{v|u}\prod_{s=1}^{T-1} \big\{ p\bigl(m^{(1)}_{i,s},m^{(2)}_{i,s}\big| \textbf{d}_{i,s},\textbf{z}_{i,s}, \boldsymbol{\ell}_{i,0}, b_i^{M^{(2)}}, b_i^{M^{(1)}};\theta_{v|u}\bigr)\big\}}{\prod_{t=1}^{T} \big\{ p\bigl(d_{i,t}\big| \textbf{z}_{i,t}, \boldsymbol{\ell}_{i,0}, b_i^{D};\theta_{v|u}\bigr)p\bigl(z_{i,t}\big| \boldsymbol{\ell}_{i,0}, b_i^{Z};\theta_{v|u}\bigr) \bigr\} p\bigl(\boldsymbol{\ell}_{i,0};\theta_{v|u}\bigr)\Bigr\}}}{\splitfrac{\sum_{h=1}^{\infty}\Bigl\{\gamma_h \sum_{g=1}^{\infty} \gamma_{g|h}\prod_{s=1}^{T-1} \big\{ p\bigl(m^{(1)}_{i,s},m^{(2)}_{i,s}\big| \textbf{d}_{i,s},\textbf{z}_{i,s}, \boldsymbol{\ell}_{i,0}, b_i^{M^{(2)}}, b_i^{M^{(1)}};\theta_{g|h}\bigr) \big\}  }{\prod_{t=1}^{T} \big\{p\bigl(d_{i,t}\big| \textbf{z}_{i,t},\boldsymbol{\ell}_{i,0}, b_i^{D};\theta_{g|h}\bigr) p\bigl(z_{i,t}\big| \boldsymbol{\ell}_{i,0}, b_i^{Z};\theta_{g|h}\bigr) \bigr\} p\bigl(\boldsymbol{\ell}_{i,0};\theta_{g|h}\bigr)\Bigr\}}}
            \times p\bigl(m^{(1)}_{i,T},m^{(2)}_{i,T}\big|\textbf{d}_{i,T},\textbf{z}_{i,T}, \boldsymbol{\ell}_{i,0},b_i^{M^{(2)}}, b_i^{M^{(1)}};\theta_{v|u}\bigr)
            \\
            &= \sum_{u = 1}^{\infty} w_{u,v}\bigl( \textbf{m}^{(1)}_{i,T-1}, \textbf{m}^{(2)}_{i,T-1},\textbf{d}_{i,T},\textbf{z}_{i,T},\boldsymbol{\ell}_{i,0}\bigr) \times p\bigl(m^{(1)}_{i,T},m^{(2)}_{i,T}\big|\textbf{d}_{i,T},\textbf{z}_{i,T},\boldsymbol{\ell}_{i,0}, b_i^{M^{(2)}}, b_i^{M^{(1)}};\theta_{v|u}\bigr)
        \end{aligned}
    $}
    \end{equation}
    where
        \[\resizebox{\textwidth}{!}{$\displaystyle
    \begin{aligned}
        & w_{u,v}\bigl( \textbf{m}^{(1)}_{i,T-1}, \textbf{m}^{(2)}_{i,T-1},\textbf{d}_{i,T},\textbf{z}_{i,T},\boldsymbol{\ell}_{i,0}\bigr)
        \\&
        = \frac{\Bigl\{\gamma_u \sum_{v=1}^{\infty} \gamma_{v|u}\prod_{s=1}^{T-1} \big\{ p\bigl(m^{(1)}_{i,s},m^{(2)}_{i,s}\big| \textbf{d}_{i,s},\textbf{z}_{i,s}, \boldsymbol{\ell}_{i,0}, b_i^{M^{(2)}}, b_i^{M^{(1)}};\theta_{v|u}\bigr)\big\}\prod_{t=1}^{T} \big\{ p\bigl(d_{i,t}\big| \textbf{z}_{i,t}, \boldsymbol{\ell}_{i,0}, b_i^{D};\theta_{v|u}\bigr)p\bigl(z_{i,t}\big| \boldsymbol{\ell}_{i,0}, b_i^{Z};\theta_{v|u}\bigr) \bigr\} p\bigl(\boldsymbol{\ell}_{i,0};\theta_{v|u}\bigr)\Bigr\}}{\sum_{h=1}^{\infty}\Bigl\{\gamma_h \sum_{g=1}^{\infty} \gamma_{g|h}\prod_{s=1}^{T-1} \big\{ p\bigl(m^{(1)}_{i,s},m^{(2)}_{i,s}\big| \textbf{d}_{i,s},\textbf{z}_{i,s}, \boldsymbol{\ell}_{i,0}, b_i^{M^{(2)}}, b_i^{M^{(1)}};\theta_{g|h}\bigr) \big\}  \prod_{t=1}^{T} \big\{p\bigl(d_{i,t}\big| \textbf{z}_{i,t},\boldsymbol{\ell}_{i,0}, b_i^{D};\theta_{g|h}\bigr) p\bigl(z_{i,t}\big| \boldsymbol{\ell}_{i,0}, b_i^{Z};\theta_{g|h}\bigr) \bigr\} p\bigl(\boldsymbol{\ell}_{i,0};\theta_{g|h}\bigr)\Bigr\}}.
    \end{aligned}
    $}\]
Similarly, we derive the conditional densities:
\begin{itemize}
    \item $f_{H}\bigl(d_{i,T} \big| \mathbf{m}^{(1)}_{i,T-1}, \mathbf{m}^{(2)}_{i,T-1}, \mathbf{d}_{i,T-1}, \mathbf{z}_{i,T}, \boldsymbol{\ell}_{i,0}, \mathbf{b}_i\bigr)$,
    \item $f_{H}\bigl(m^{(1)}_{i,T-1}, m^{(2)}_{i,T-1} \big| \mathbf{m}^{(1)}_{i,T-2}, \mathbf{m}^{(2)}_{i,T-2}, \mathbf{d}_{i,T-1}, \mathbf{z}_{i,T-1}, \boldsymbol{\ell}_{i,0}, \mathbf{b}_i\bigr)$,
    \item $f_{H}\bigl(d_{i,T-1} \big| \mathbf{m}^{(1)}_{i,T-2}, \mathbf{m}^{(2)}_{i,T-2}, \mathbf{d}_{i,T-2}, \mathbf{z}_{i,T-1}, \boldsymbol{\ell}_{i,0}, \mathbf{b}_i\bigr)$, and so on.
\end{itemize}
These conditional densities appear in the nonparametric identification of the causal parameter $\theta(\mathbf{z}, \mathbf{z}_*)$ in Proposition~\ref{prop1} of the main text.

\section{EDPM truncation approximation}\label{sec:S5}

Let
\begin{equation}
\resizebox{\textwidth}{!}{$\displaystyle
    \begin{aligned}
       Y_i \big| \textbf{M}^{(2)}_{i,T}, \textbf{M}^{(1)}_{i,T}, \textbf{D}_{i,T}, \textbf{Z}_{i,T}, \textbf{L}_{i,0}; \boldsymbol{\beta}_i & \sim F_y\big(. \big| \textbf{m}^{(2)}_{i,T}, \textbf{m}^{(1)}_{i,T}, \textbf{d}_{i,T}, \textbf{z}_{i,T}, \boldsymbol{\ell}_{i,0}; \boldsymbol{\beta}_i \big) 
       \\
        \big(M^{(2)}_{i,t}, M^{(1)}_{i,t}\big)\big|\textbf{M}^{(2)}_{i,t-1}, \textbf{M}^{(1)}_{i,t-1}, \textbf{D}_{i,t}, \textbf{Z}_{i,t}, \textbf{L}_{i,0},b_i^{M^{(2)}}, b_i^{M^{(1)}}; \theta^{M^{(2)}}_{i,t}, \boldsymbol{\theta}^{M^{(1)}}_{i,t}
        & \sim F_{(m^{(2)}_{t}, m^{(1)}_{t})} \big(.\big|\textbf{m}^{(2)}_{i,t-1}, \textbf{m}^{(1)}_{i,t-1}, \textbf{d}_{i,t}, \textbf{z}_{i,t}, \boldsymbol{\ell}_{i,0}, b_i^{M^{(2)}}, b_i^{M^{(1)}}; \boldsymbol{\theta}^{M^{(2)}}_{i,t}, \theta^{M^{(1)}}_{i,t} \big) 
        \\
        D_{i,t}\big|\textbf{M}^{(2)}_{i,t-1}, \textbf{M}^{(1)}_{i,t-1}, \textbf{D}_{i,t-1}, \textbf{Z}_{i,t}, \textbf{L}_{i,0}, b_i^{D} ; \boldsymbol{\theta}^{D}_{i,t} &\sim F_{d_{t}}\big(. \big|\textbf{m}^{(2)}_{i,t-1}, \textbf{m}^{(1)}_{i,t-1}, \textbf{d}_{i,t-1}, \textbf{z}_{i,t}, \boldsymbol{\ell}_{i,0}, b_i^{D}; \boldsymbol{\theta}^{D}_{i,t} \big) 
        \\
        Z_{i,t}\big|\textbf{M}^{(2)}_{i,t-1}, \textbf{M}^{(1)}_{i,t-1}, \textbf{D}_{i,t-1}, \textbf{Z}_{i,t-1}, \textbf{L}_{i,0}, b_i^{Z}; \boldsymbol{\theta}^{Z}_{i,t} &\sim F_{z_{t}}\big(. \big|\textbf{m}^{(2)}_{i,t-1}, \textbf{m}^{(1)}_{i,t-1}, \textbf{d}_{i,t-1}, \textbf{z}_{i,t-1}, \boldsymbol{\ell}_{i,0}, b_i^{Z}; \boldsymbol{\theta}^{Z}_{i,t} \big) 
        \\
        \textbf{L}_{i,0}^{(k)};\boldsymbol{\theta}^{L_0^{(k)}}_{i} &\sim F_{\ell_{0}^{(k)}}\big(.\big| \boldsymbol{\theta}^{L_0^{(k)}}_{i} \big), \quad k = 1, \ldots, K \\
        (\boldsymbol{\beta}_i, \boldsymbol{\theta}_{i})|H &\sim H  \\
        H &\sim \mathcal{H}_{NM}
    \end{aligned}
$}
\end{equation}
In the specification above, $H \sim \mathcal{H}_{NM}$ implies
    $$ H = \sum_{r=1}^{N} \sum_{s=1}^{M}\xi_{r} \xi_{s|r} \delta_{\beta_r^{*}\theta_{s|r}^{*}}$$
    where:
    \begin{align*}
        \xi_r &= \xi_r^{'}\prod_{t <r} (1- \xi_t^{'}) \quad \quad \quad \xi_t^{'} \sim \text{Beta}(1, \alpha^{\beta}) \quad \quad \quad \beta^{*}_r \mathop{\sim}\limits^{iid} H_{0\beta} \\
        \xi_{s|r} &= \xi_{s|r}^{'}\prod_{t <s} (1- \xi_{t|r}^{'}) \quad \quad \quad \xi_{t|r}^{'} \sim \text{Beta}(1, \alpha_r^{\theta|\beta}) \quad \quad \quad \theta^{*}_{s|r} \mathop{\sim}\limits^{iid} H_{0\theta|\beta}.
    \end{align*}
    Like before, we can write the EDPM model specified above as an infinite mixture:
    \begin{equation}
\resizebox{\textwidth}{!}{$\displaystyle
        \begin{aligned}
            & f_{H}\bigl(y_i, \textbf{m}^{(1)}_{i,T},\textbf{m}^{(2)}_{i,T},\textbf{d}_{i,T},\textbf{z}_{i,T},\boldsymbol{\ell}_{i,0}, \textbf{b}_i\bigr)
            \\& = \sum_{r=1}^{N}\Bigl\{\xi_r p\bigl(y_i\big|\textbf{m}^{(1)}_{i,T},\textbf{m}^{(2)}_{i,T},  \textbf{d}_{i,T},\textbf{z}_{i,T}, \boldsymbol{\ell}_{i,0};\beta_r\bigr) \times \sum_{s=1}^{M} \xi_{s|r}
            \prod_{t = 1}^{T} \big\{
            p\bigl(m^{(1)}_{i,t},m^{(2)}_{i,t}\big| \textbf{z}_{i,t}, \textbf{d}_{i,t}, \boldsymbol{\ell}_{i,0}, b_i^{M^{(2)}}, b_i^{M^{(1)}};\theta_{s|r}\bigr)  \times
             p\bigl(d_{i,t}\big| \textbf{z}_{i,t}, \boldsymbol{\ell}_{i,0}, b_i^{D};\theta_{s|r}\bigr)
             \times  \\& p\bigl(z_{i,t}\big| \boldsymbol{\ell}_{i,0}, b_i^{Z};\theta_{s|r}\bigr) \big\} \times p\bigl( \boldsymbol{\ell}_{i,0};\theta_{s|r}\bigr)\Bigr\}\times f_{H}\big(\textbf{b}_i\big).
        \end{aligned}
$}
    \end{equation}
\cite{burns2023truncation} show that $\mathcal{H}_{NM}$ converges almost surely to an enriched Dirichlet process with base distribution $H_{0\beta} \times H_{0\theta|\beta}$ and precision parameters  $\alpha^{\beta}$ and $\{\alpha_r^{\theta|\beta}\}_{r=1}^{N}$, respectively. The term  $\beta$-cluster indicates top-level clusters based on the parameters of the outcome model. Similarly, the term  $\theta$-cluster denotes a subcluster (based on parameters other than the ones in the outcome model) nested within a $\beta$-cluster.

\subsection{Blocked Gibbs sampler}
The blocked Gibbs sampler described in this section is implemented by fitting the proposed models in \texttt{Nimble}. We define $V_{i} = (V_{i}^{\beta}, V_{i}^{\theta})$ to be the cluster membership assignment vector for subject $i$. Let $\textbf{V}^{\beta} = \big\{V_{i}^{\beta} \big\}_{i=1}^{n}$ and $\textbf{V}^{\theta}= \big\{V_{i}^{\theta} \big\}_{i=1}^{n}$ denote cluster membership assignment vectors across subjects at $\beta-$level and $\theta-$level, respectively.  At each iteration, we perform the following updates iteratively:
\begin{enumerate}
            \item Update cluster membership, i.e., sample from the conditional distribution of \\ $\textbf{V}^{\beta}, \textbf{V}^{\theta}\big|\boldsymbol{\xi}_{r}, \boldsymbol{\xi}_{s|r}, \boldsymbol{\beta}^{*}, \boldsymbol{\theta}^{*}, \textbf{b}^{M^{(1)}}, \textbf{b}^{M^{(2)}}, \textbf{b}^{D}, \textbf{b}^{Z}, \textbf{Y}, \textbf{M}^{(1)}, \textbf{M}^{(2)}, \textbf{D}, \textbf{Z}, \textbf{L}_0$: \\
            For each subject $i$, $V^{\beta}_i, V^{\theta}_i\big|\boldsymbol{\xi}_{r}, \boldsymbol{\xi}_{s|r}, \boldsymbol{\beta}^{*}, \boldsymbol{\theta}^{*}, b_i^{M^{(1)}}, b_i^{M^{(2)}}, b_i^{D}, b_i^{Z}, y_i, \textbf{m}_{i,T}^{(1)}, \textbf{m}_{i,T}^{(2)}, \textbf{d}_{i,T}, \textbf{z}_{i,T}, \boldsymbol{\ell}_{i0}$ is sampled from a multinomial  distribution with the  probability that subject $i$ is assigned to $\beta$-cluster $r$ (of the $N$ $\beta$-clusters) and $\theta$-cluster $s$ (of the $M$ $\theta$-clusters) is proportional to
            \begin{align*}
                p_{i,r, s|r}   &\propto \xi_r \xi_{s|r} p\bigl(y_i\big|\textbf{m}^{(1)}_{i,T},\textbf{m}^{(2)}_{i,T}, \textbf{d}_{i,T},\textbf{z}_{i,T}, \boldsymbol{\ell}_{i,0}; \boldsymbol{\beta}_r^{*}\bigr) \times \\& \Big\{\prod_{t=1}^{T} p\bigl(m^{(1)}_{i,t}, m^{(2)}_{i,t} \big|\textbf{z}_{i,t}, \textbf{d}_{i,t},\boldsymbol{\ell}_{i,0},b_i^{M^{(1)}}, b_i^{M^{(2)}}; \boldsymbol{\theta}_{s|r}^{M^{(1)},*}, \sigma^{2,M^{(1)},*}_{s|r}, \boldsymbol{\theta}_{s|r}^{M^{(2)},*},
                    \sigma^{2,M^{(2)},*}_{s|r}\bigr)
                    \\& p\bigl(d_{i,t} \big|\textbf{z}_{i,t},\boldsymbol{\ell}_{i,0},b_i^{D}; \boldsymbol{\theta}_{s|r}^{D,*}\bigr) p\bigl(z_{i,t} \big|\boldsymbol{\ell}_{i,0},b_i^{Z}; \boldsymbol{\theta}_{s|r}^{Z,*}\bigr)\Big\}
                \times p(\boldsymbol{\ell}_{i,0};  \theta_{s|r}^{L_0,*})
            \end{align*}
            \item Update regression parameters at the $\beta$-level cluster, i.e., sample from the conditional distribution of $\boldsymbol{\beta}^{*}\big|\textbf{V}^{\beta}, \alpha^{\beta},\textbf{Y}$: \\
            For each $r = 1, \ldots, N$,  draw an observation from the conditional distribution 
            \[\resizebox{\textwidth}{!}{$\displaystyle
            \begin{aligned}
                f\bigl(\boldsymbol{\beta}_{r}^{Y,*}, \sigma_{r}^{Y,2*}, \pi_{r}^{Y,*} \big| \textbf{V}^{\beta},\alpha^{\beta},\textbf{Y} \bigr) & \propto \underbrace{H_{0\beta}(d\beta_{1,r}^{Y,*}) \times \ldots \times H_{0\beta}(d\beta_{\text{num\_Y\_cov},r}^{Y,*}) \times H_{0\beta}(d\sigma_{r}^{Y,2*})}_{Y \text{ regression parameter priors}} \times \underbrace{H_{0\beta}(d\pi_{r}^{Y,*})}_{ Y \text{ hurdle probability prior}} \\&
                \underbrace{\prod_{i: V_i^{\beta} = r}f\big(y_i\big|\beta_{r}^{Y,*},\sigma_{r}^{Y,2*}, \pi_{r}^{Y,*}\big)}_{\text{likelihood contribution}},
            \end{aligned}
            $}\]
            where $\text{num\_Y\_cov}$ is the number of covariates in the outcome model within clusters. The parameters of the outcome model are updated using conjugate distributions: specifically, the mean, variance, and hurdle probability parameters are updated from their respective conjugate normal, inverse-gamma, and beta distributions.

            \item Update regression parameters at the $\theta$-level cluster, i.e., sample from the conditional distribution of  \\ $\boldsymbol{\theta}^{*}\big|\textbf{V}^{\beta}, \textbf{V}^{\theta}, \{\alpha_r^{\theta|\beta}\}_{r=1}^{N}, \textbf{b}^{M^{(1)}}, \textbf{b}^{M^{(2)}}, \textbf{b}^{D}, \textbf{b}^{Z}, \textbf{M}^{(1)}, \textbf{M}^{(2)}, \textbf{D}, \textbf{Z}, \textbf{L}_0$:
              \[\resizebox{\textwidth}{!}{$\displaystyle
              \begin{aligned}
                    & f\bigl( \boldsymbol{\theta}_{s|r,t}^{M^{(1)},*}, \sigma^{2,M^{(1)},*}_{s|r,t}, \pi_{s|r,t}^{M^{(1)},*}, \boldsymbol{\theta}_{s|r,t}^{M^{(2)},*},
                    \sigma^{2,M^{(2)},*}_{s|r,t},
                    \pi_{s|r,t}^{M^{(2)},*}, \boldsymbol{\theta}_{s|r,t}^{D,*},\boldsymbol{\theta}_{s|r,t}^{Z,*}, \boldsymbol{\theta}_{s|r}^{L_0,*} \big|   \textbf{V}^{\beta}, \textbf{V}^{\theta}, \alpha_r^{\theta|\beta},\textbf{b}^{M^{(1)}}, \textbf{b}^{M^{(2)}}, \textbf{b}^{D}, \textbf{b}^{Z},
                    \textbf{M}^{(1)}, \textbf{M}^{(2)}, \textbf{D}, \textbf{Z},\textbf{L}_0 \bigr)
                     \\&  \propto \underbrace{H_{0\theta|\beta}(d\theta_{s|r,t,1}^{M^{(1)},*}) \times \ldots \times H_{0\theta|\beta}(d\theta_{s|r,t,Q}^{M^{(1)},*}) \times H_{0\theta|\beta}(d\sigma_{s|r,t}^{2,M^{(1)},*})}_{M^{(1)}_t \text{ regression parameter priors}} \times
                     \underbrace{H_{0\theta|\beta}(d\pi_{s|r,t}^{M^{(1)},*})}_{M^{(1)}_t \text{ hurdle probability priors}} \times
                     \\& \underbrace{H_{0\theta|\beta}(d\theta_{s|r,t,1}^{M^{(2)},*}) \times \ldots \times H_{0\theta|\beta}(d\theta_{s|r,t,Q}^{M^{(2)},*}) \times H_{0\theta|\beta}(d\sigma_{s|r,t}^{2,M^{(2)},*})}_{M^{(2)}_t \text{ regression parameter priors}} \times
                     \underbrace{ H_{0\theta|\beta}(d\pi_{s|r,t}^{M^{(2)},*})}_{M^{(2)}_t \text{ hurdle probability priors}} \times
                     \\&
                     \underbrace{H_{0\theta|\beta}(d\theta_{s|r,t,1}^{D,*}) \times \ldots \times H_{0\theta|\beta}(d\theta_{s|r,t,Q}^{D,*}) }_{D_t \text{ regression parameter priors}} \times \underbrace{H_{0\theta|\beta}(d\theta_{s|r,t,1}^{Z,*}) \times \ldots \times H_{0\theta|\beta}(d\theta_{s|r,t,Q}^{Z,*}) }_{Z_t \text{ regression parameter priors}}  \times \underbrace{H_{0\theta|\beta}\big(d\theta_{s|r,1}^{L_{0},*}\big)  \times \ldots \times H_{0\theta|\beta}\big(d\theta_{s|r,K}^{L_{0},*}\big)}_{L_0 \text{ model parameter priors}}
                     \\&
                     \underbrace{\prod_{i: V_i^{\beta} = r,V_i^{\theta} = s} f\bigl(\textbf{m}^{(1)}_{i,T}, \textbf{m}^{(2)}_{i,T}, \textbf{d}_{i,T}, \textbf{z}_{i,T},\boldsymbol{\ell}_{i,0},b_i^{M^{(1)}}, b_i^{M^{(2)}},b_i^{D}, b_i^{Z}; \boldsymbol{\theta}_{s|r,t}^{M^{(1)},*}, \sigma^{2,M^{(1)},*}_{s|r,t},
                     \pi_{s|r,t}^{M^{(1)},*}, \boldsymbol{\theta}_{s|r,t}^{M^{(2)},*},
                    \sigma^{2,M^{(2)},*}_{ s|r,t},
                    \pi_{s|r,t}^{M^{(2)},*}, \boldsymbol{\theta}_{s|r,t}^{D,*},\boldsymbol{\theta}_{s|r,t}^{Z,*},\boldsymbol{\theta}_{s|r}^{L_0,*}\bigr)}_{\text{likelihood contribution}},
                \end{aligned}
                $}\]
                For each $r = 1, \ldots, N$ (indexing $\beta$-level parameters), $s = 1, \ldots, M$ (indexing $\theta$-level parameters within $r$), $t = 1, \ldots, T$ (indexing measurement times for subject $i$), the hurdle probabilities for both mediators, $\pi_{s|r,t}^{M^{(1)},*}$ and $\pi_{s|r,t}^{M^{(2)},*}$,  are updated from their respective conjugate beta distributions.
                For $k = 1, \ldots, K $ (indexing number of baseline covariates), the parameters in the regression models for the longitudinal variables are updated using a Gibbs sampler (conjugate update) if the corresponding data is continuous, or a Metropolis-Hastings step if the data is binary.

                For example, assuming, without loss of generality, that we have continuous time-varying mediators, the regression parameters $\big(\boldsymbol{\theta}_{s|r}^{M^{(1)}}, \boldsymbol{\theta}_{s|r}^{M^{(2)}}\big)$,  along with the variance parameters $\big(\boldsymbol{\sigma}_{s|r}^{2,M^{(1)}}, \boldsymbol{\sigma}_{s|r}^{2,M^{(2)}}\big)$  from the mediator models are updated from conjugate normal and inverse-gamma distributions, respectively. For binary time-varying treatment assignment and receipt statuses, the parameters from the corresponding probit regressions are updated using a random-walk Metropolis-Hastings algorithm with normal proposal distributions.

                The baseline covariate model parameters are updated using conjugate normal distributions for continuous baseline covariate model parameters and beta distributions for binary baseline covariate model parameters, with the posterior distribution proportional to:
                \begin{align*}
                    & f\bigl( \boldsymbol{\theta}_{s|r}^{L_{0},*}  \big|  \textbf{V}^{\beta}, \textbf{V}^{\theta}, \alpha_r^{\theta|\beta}, \textbf{L}_0 \bigr)
                      \propto    \underbrace{H_{0\theta|\beta}(d\boldsymbol{\theta}_{s|r}^{L_{0},*} )}_{\text{prior}}  \times
                      \underbrace{\prod_{i: V_i^{\beta} = r,V_i^{\theta} = s}
                     f\big(\boldsymbol{\ell}_{i,0}\big|  \theta_{s|r}^{L_0,*}\big)}_{\text{likelihood contribution}},
                \end{align*}
                where the corresponding likelihood contributions are from all subjects $i$ in outer cluster $r$ and inner cluster $s$.

            \item Update the weights at the $\beta$-level, i.e., sample from the conditional distribution of $\boldsymbol{\xi}_{r}|\textbf{V}^{\beta}, \alpha^{\beta}$  using $\xi_1= \xi_1^{'}, \xi_r= \xi^{'}_r\prod_{t =1}^{r-1} (1- \xi_t^{'}), r = 2, \ldots, N$ where 
            $$ \xi_{\tilde{r}}^{'}|\textbf{V}^{\beta}, \alpha^{\beta} \sim \text{Beta} \Bigl(n_{\tilde{r}}+1, \alpha^{\beta} + \sum_{w = \tilde{r}+1}^{N}n_w\Bigr) $$
             with $n_{\tilde{r}}$ denoting the number of subjects currently in the $\tilde{r}$th $\beta-$cluster for $\tilde{r} = 1, \ldots, N-1$ and $\xi_N^{'} = 1$.
             
            \item Update the weights at the $\theta$-level, i.e., sample from the conditional distribution of $\boldsymbol{\xi}_{s|r}|\textbf{V}^{\beta}, \textbf{V}^{\theta}, \alpha_r^{\theta|\beta}$ , using  $\xi_{1|r} = \xi_{1|r}^{'}, \xi_{s|r} = \xi_{s|r}^{'}\prod_{t =1}^{s-1} (1- \xi_{t|r}^{'}) ,s = 2,\ldots, M$  for every $r = 1,\ldots, N$ where
            $$\xi_{\tilde{s}|r}^{'}|\textbf{V}^{\theta}, \alpha_{r}^{\theta|\beta} \sim \text{Beta} \Bigl(n_{r\tilde{s}} +1, \alpha_{r}^{\theta|\beta} + \sum_{w = \tilde{s}+1}^{M} n_{rw} \Bigr) $$
            with $n_{r\tilde{s}}$ denoting the number of subjects currently in the $\tilde{s}$th $\theta-$cluster within the $r$th $\beta-$cluster for $\tilde{s} = 1, \ldots, M-1$ and $\xi_{M|r}^{'} = 1$.
            
            \item Update the concentration parameter at the $\beta$-level, i.e., for a $\text{Gamma}(a_{\beta},b_{\beta})$ prior on $\alpha^{\beta}$ (parameterized so that $b_{\beta}$ is a rate parameter), sample from the conditional distribution
            $$\alpha^{\beta}|\boldsymbol{\xi}_{r} \sim \text{Gamma}\Bigl(N + a_{\beta} -1, b_{\beta} - \sum_{r =1}^{N-1} \log(1-\xi_{r}^{'})\Bigr)$$
            
            \item Update the concentration parameter at the $\theta$-level, i.e., for a $\text{Gamma}(a_{\theta},b_{\theta})$ prior on $\alpha_r^{\theta|\beta}$ for each $r = 1,\ldots, N$, sample from the conditional distribution $$\alpha_r^{\theta|\beta}|\boldsymbol{\xi}_{s|r} \sim \text{Gamma}\Bigl(M + a_{\theta} -1, b_{\theta} - \sum_{s =1}^{M-1} \log(1-\xi_{s|r}^{'}) \Bigr).$$

            \item Update the random effects ($ b_i^{M^{(1)}}, b_i^{M^{(2)}}, b_i^{D},b_i^{Z}$): \\
             At each MCMC iteration, conditional on the data, the cluster memberships $(V_i^{\beta}, V_i^{\theta})$, and the regression parameters $\big(\boldsymbol{\theta}_{s|r}^{M^{(1)},*},  \boldsymbol{\sigma}_{s|r}^{2, M^{(1)},*}, \boldsymbol{\theta}_{s|r}^{M^{(2)},*},  \boldsymbol{\sigma}_{s|r}^{2, M^{(2)},*}\big)$,  the new random intercepts $\big(b_i^{M^{(1)},*}, b_i^{M^{(2)},*}\big)$ from the mediator (continuous data) models for subject $i$ are updated from the corresponding conjugate Normal distributions after taking the residuals from the current fit. Note that if we do not condition on the cluster memberships $(V_i^{\beta}, V_i^{\theta})$, the random intercepts $\big(b_i^{M^{(1)},*}, b_i^{M^{(2)},*}\big)$  must instead be updated using a Metropolis-Hastings step, since the mediators (data) are modeled as a (finite) mixture of normal distributions at the global level. The random intercept variance parameters $\sigma_{b}^{2, M^{(1)},*}$ and $\sigma_{b}^{2, M^{(2)},*}$ are updated from conjugate Inverse-Gamma distributions:
             \begin{align*}
                 \sigma_{b}^{2, M^{(1)},*} &\sim \text{Inverse-Gamma} \big(\text{shape} = \alpha_{\sigma_{b}^{2,M^{(1)}}} + \frac{n}{2}, \text{rate} = \beta_{\sigma_{b}^{2,M^{(1)}}} + \frac{1}{2} \sum_{i=1}^{n}b_i^{2,M^{(1)},*}\big),
                 \\
                 \sigma_{b}^{2, M^{(2)},*} &\sim \text{Inverse-Gamma} \big(\text{shape} = \alpha_{\sigma_{b}^{2,M^{(2)}}} + \frac{n}{2}, \text{rate} = \beta_{\sigma_{b}^{2,M^{(2)}}} + \frac{1}{2} \sum_{i=1}^{n}b_i^{2,M^{(2)},*}\big),
             \end{align*} 
              where, for $j = 1,2$, $\alpha_{\sigma_{b}^{2,M^{(j)}}}$ and $\beta_{\sigma_{b}^{2,M^{(j)}}}$ are the shape and rate parameters, respectively, from the prior distribution of $\sigma_{b}^{2,M^{(j)}}$.

            Similarly, for subject $i$, the new random intercepts $ b_i^{Z,*}$ and $b_i^{D,*}$ from the treatment assignment and receipt status (binary data) models, respectively,  are updated using a Metropolis-Hastings step, conditional on the data, the cluster memberships $(V_i^{\beta}, V_i^{\theta})$, and the regression parameters $(\boldsymbol{\theta}_{s|r}^{D}, \boldsymbol{\theta}_{s|r}^{Z})$. This is because the binary data are modeled locally using probit regressions. Given these random intercepts, 
            the variances $ \sigma_{b}^{2,D}$, and  $\sigma_{b}^{2,Z}$ are updated using a Metropolis-Hastings step as well.
 \end{enumerate}

\subsection{G-computation algorithm}\label{sec:Salg}

Algorithm~\ref{Alg1:Gcomp} describes the steps to draw a posterior sample for the parameter $\theta(\textbf{z},\textbf{z}_{*})$ using Monte Carlo integration. Given $Q$ posterior samples of EDPM parameters, the algorithm is repeated $Q$ times to obtain $Q$ posterior samples of $\theta(\textbf{z},\textbf{z}_{*})$. 

\clearpage

\begin{breakablealgorithm}
\caption{G-computation algorithm: compute one posterior sample for $\theta(\mathbf{z},\mathbf{z}_{*})$}
\label{Alg1:Gcomp}

{\singlespacing\small
\setlength{\abovedisplayskip}{3pt}
\setlength{\belowdisplayskip}{3pt}
\setlength{\abovedisplayshortskip}{2pt}
\setlength{\belowdisplayshortskip}{2pt}
\resizebox{\textwidth}{!}{%
\begin{minipage}{\textwidth}
    \begin{enumerate}\setlength{\itemsep}{2pt}\setlength{\parskip}{0pt}
            \item At the  $(B+q)^{\text{th}}$ iteration, randomly draw $C^{*}$ (row) vectors as samples for $\textbf{L}_{0}$, say $\boldsymbol{\ell}^{c}_0$, $c \in \{1,\ldots, C^{*} \}$, from
             $f_{L_0}\big(\boldsymbol{\ell}_0 \big|\beta^{(q)}, \theta^{(q)}\big) =  \sum_{r=1}^{N}\xi^{(q)}_r \sum_{s=1}^{M} \xi^{(q)}_{s|r} \times p\bigl( \boldsymbol{\ell}_0;\theta^{(q)}_{s|r}\bigr).$
             \item Given the $(B+q)^{\text{th}}$ posterior samples of the random effect variances,  $\sigma_{b}^{2,M^{(2)},(q)}$, $\sigma_{b}^{2,M^{(1)},(q)}$, $\sigma_{b}^{2,D,(q)}$, and $\sigma_{b}^{2,Z,(q)}$, randomly draw $C^{*}$ sets of random effects---corresponding to $C^{*}$ Monte Carlo samples---from $N\big(0, \sigma_{b}^{2,M^{(2)},(q)}\big), N\big(0, \sigma_{b}^{2,M^{(1)},(q)}\big),  N\big(0, \sigma_{b}^{2,D,(q)}\big),  \text{ and }  N\big(0, \sigma_{b}^{2,Z,(q)}\big)$ for $\big\{b^{M^{(2)},(q),c}\big\}_{c =1}^{C^{*}}, \big\{b^{M^{(1)},(q),c}\big\}_{c =1}^{C^{*}},  \big\{b^{D,(q),c}\big\}_{c =1}^{C^{*}},  \text{ and }  \big\{b^{Z,(q),c}\big\}_{c =1}^{C^{*}},$ respectively.
             \item Fix $U_1$ at $u_1 = (d_1,d_{*,1}) \in \big\{(1,1), (0,0), (1,0),(0,1)\big\}$.
            \item Repeat the following $C^{*}$ times:
                \begin{enumerate}\setlength{\itemsep}{2pt}\setlength{\parskip}{0pt}
                    \item  At $t = 1$,  for a fixed $Z_1 = z_{*,1} $, conditional on $\boldsymbol{\ell}_{0}^{c}$, $d_{*,1}$, $b^{M^{(2)},(q),c}$ and $b^{M^{(1)},(q),c}$, sample one observation of the mediators $\bigl(M^{(1)}_1, M^{(2)}_1\bigr)$, say $\bigl(m^{(1),c}_1,m^{(2),c}_1\bigr)$, jointly from
                    \begin{align*}
                     & f_M\big(m^{(1)}_1, m^{(2)}_1 \big|z_{*,1},d_{*,1},\boldsymbol{\ell}_{0}^{c},b^{M^{(2)},(q),c},b^{M^{(1)},(q),c};\beta^{(q)}, \theta^{(q)}\big)
                     \\& = \sum_{r=1}^{N} w_r\bigl(z_{*,1},d_{*,1},\boldsymbol{\ell}_{0}^{c}\bigr)\times p\bigl(m^{(1)}_1, m^{(2)}_1\big| z_{*,1},d_{*,1},\boldsymbol{\ell}_{0}^{c}, b^{M^{(2)},(q),c},b^{M^{(1)},(q),c};\theta^{(q)}_{s|r}\bigr).
                    \end{align*}
                    \item  For $t \in \{2, \ldots, T\}$:
                    \begin{enumerate}\setlength{\itemsep}{2pt}\setlength{\parskip}{0pt}
                        \item For a fixed regime $\textbf{Z}_{t} = \textbf{z}_t $, conditional on $\boldsymbol{\ell}_{0}^{c}$, $d_1$, $\textbf{d}^{c}_{2:t-1}$,  $\textbf{m}^{(1),c}_{t-1}, \textbf{m}^{(2),c}_{t-1}$, and $b^{D,(q),c}$, sample one observation of the treatment receipt status $D_t$, say $d^{c}_t$, from:
                        \begin{align*}
                            & f_D\big(d_t\big|\textbf{z}_t,d_{1},     \textbf{d}^{c}_{2:t-1}, \textbf{m}^{(1),c}_{t-1}, \textbf{m}^{(2),c}_{t-1},\boldsymbol{\ell}_{0}^{c},b^{D,(q),c};\beta^{(q)}, \theta^{(q)}\big)\\
                            & = \sum_{r=1}^{N} w_r\bigl(\textbf{z}_t,d_{1},\textbf{d}^{c}_{2:t-1}, \textbf{m}^{(1),c}_{t-1}, \textbf{m}^{(2),c}_{t-1},\boldsymbol{\ell}_{0}^{c}\bigr)\times p\bigl(d_t\big| \textbf{z}_t, \boldsymbol{\ell}_{0}^{c},b^{D,(q),c};\theta^{(q)}_{s|r}\bigr).
                        \end{align*}
                        Similarly, for a fixed regime $\textbf{Z}_{t} = \textbf{z}_{*,t} $, conditional on $\boldsymbol{\ell}_{0}^{c}$, $d_{*,1}$, $\textbf{d}^{c}_{*,2:t-1},$ $\textbf{m}^{(1),c}_{t-1}, \textbf{m}^{(2),c}_{t-1}$, and $b^{D,(q),c}$,   sample one observation of the treatment receipt status $D_t$, say $d^{c}_{*,t}$, from:
                        \begin{align*}
                            & f_D\big(d_t\big|\textbf{z}_{*,t},d_{*,1}, \textbf{d}^{c}_{*,2:t-1}, \textbf{m}^{(1),c}_{t-1},  \textbf{m}^{(2),c}_{t-1},\boldsymbol{\ell}_{0}^{c},b^{D,(q),c};\beta^{(q)}, \theta^{(q)}\big)\\
                            & = \sum_{r=1}^{N} w_r\bigl(\textbf{z}_{*,t},d_{*,1},\textbf{d}^{c}_{*,2:t-1}, \textbf{m}^{(1),c}_{t-1}, \textbf{m}^{(2),c}_{t-1},\boldsymbol{\ell}_{0}^{c}\bigr)\times p\bigl(d_t\big| \textbf{z}_{*,t}, \boldsymbol{\ell}_{0}^{c}, b^{D,(q),c};\theta^{(q)}_{s|r}\bigr).
                        \end{align*}
                        \item For a fixed regime $\textbf{Z}_{t} = \textbf{z}_{*,t} $, conditional on $\boldsymbol{\ell}_{0}^{c}$, $d_{*,1}$, $\textbf{d}^{c}_{*,2:t}$, $\textbf{m}^{(1),c}_{t-1}, \textbf{m}^{(2),c}_{t-1}$,   sample one observation of two mediators $\bigl(M^{(1)}_t, M^{(2)}_t\bigr)$, say $\bigl(m^{(1),c}_t,  m^{(2),c}_t\bigr)$, jointly from:
                        \begin{align*}
                             & f_M\big(m^{(1)}_t, m^{(2)}_t\big|\textbf{z}_{*,t},d_{*,1},\textbf{d}^{c}_{*,2:t}, \textbf{m}^{(1),c}_{t-1},  \textbf{m}^{(2),c}_{t-1},\boldsymbol{\ell}_{0}^{c},b^{M^{(2)},(q),c},b^{M^{(1)},(q),c};\beta^{(q)}, \theta^{(q)}\big)\\
                            & = \sum_{r=1}^{N} w_r\bigl(\textbf{z}_{*,t},d_{*,1},\textbf{d}^{c}_{*,2:t}, \textbf{m}^{(1),c}_{t-1},  \textbf{m}^{(2),c}_{t-1},\boldsymbol{\ell}_{0}^{c}\bigr)\times p\bigl(m^{(1)}_t, m^{(2)}_t\big|\textbf{z}_{*,t},d_{*,1},\textbf{d}^{c}_{*,2:t}, \boldsymbol{\ell}_{0}^{c},b^{M^{(2)},(q),c},b^{M^{(1)},(q),c};\theta^{(q)}_{s|r}\bigr).
                        \end{align*}
                    \end{enumerate}
                    \item At $t = T$, for a fixed regime $\textbf{Z}_{T} = \textbf{z}_{T} $, conditional on $\boldsymbol{\ell}_{0}^{c}$, $d_{1}$, $\textbf{d}^{c}_{2:T}$, $\textbf{m}^{(1),c}_{T}, \textbf{m}^{(2),c}_{T}$, compute the expectation  $E\big[Y_{T}\big|\textbf{z}_{T},d_{1},\textbf{d}^{c}_{2:T}, \textbf{m}^{(1),c}_{T},  \textbf{m}^{(2),c}_{T},\boldsymbol{\ell}_{0}^{c};\beta^{(q)}, \theta^{(q)}\big]$ using:
                    \begin{align*}
                        &  E\big[Y_{T}\big|\textbf{z}_{T},d_{1}, \textbf{d}^{c}_{2:T}, \textbf{m}^{(1),c}_{T}, \textbf{m}^{(2),c}_{T},\boldsymbol{\ell}_{0}^{c};\beta^{(q)}, \theta^{(q)}\big]\\
                        & = \sum_{r=1}^{N} w_r\bigl(\textbf{z}_{T},d_{1},\textbf{d}^{c}_{2:T}, \textbf{m}^{(1),c}_{T},  \textbf{m}^{(2),c}_{T},\boldsymbol{\ell}_{0}^{c}\bigr)\times E\big[Y_{T}\big|\textbf{z}_{T},d_{1}, \textbf{d}^{c}_{2:T}, \textbf{m}^{(1),c}_{T},  \textbf{m}^{(2),c}_{T},\boldsymbol{\ell}_{0}^{c};\beta^{(q)}\big]\\
                        & = \sum_{r=1}^{N} w_r\bigl(\textbf{z}_{T},d_{1},\textbf{d}^{c}_{2:T}, \textbf{m}^{(1),c}_{T},  \textbf{m}^{(2),c}_{T},\boldsymbol{\ell}_{0}^{c}\bigr)\times \big(1 - \pi_{r}^{Y, (q)}\big) \times \underbrace{E\big[Y_{T}\big|Y_{T} > 0, \textbf{z}_{T},d_{1}, \textbf{d}^{c}_{2:T}, \textbf{m}^{(1),c}_{T},  \textbf{m}^{(2),c}_{T}, \boldsymbol{\ell}_{0}^{c};\beta^{(q)}\big]}_{\text{mean of normal distribution truncated at 0 due to local outcome regression in Section~\ref{sec:4.1} of the main text}}.
                    \end{align*}
                \end{enumerate}
            \item Compute  $\theta(\textbf{z},\textbf{z}_{*})^{(q)} \approx  \frac{1}{C^{*}} \sum_{c=1}^{C^{*}} \Bigl\{E\big[Y_{T}\big|\textbf{z}_{T},d_{1},\textbf{d}^{c}_{2:T}, \textbf{m}^{(1),c}_{T}, \textbf{m}^{(2),c}_{T},\boldsymbol{\ell}_{0}^{c};\beta^{(q)}, \theta^{(q)}\big] \Bigr\}.$
        \end{enumerate}
\end{minipage}%
}%
}
\end{breakablealgorithm}

\newpage

\section{Details on the simulation study}\label{sec:S6}

\subsection{Simulation design}\label{sec:71}
\subsubsection{Data-generating mechanism}
To ensure that our simulations reflect realistic data characteristics, we design the data-generating process using parameter estimates obtained from fitting parametric models to the empirical dataset described in Section~\ref{sec:2} in the main text. Specifically, we fit the following parametric model:

\begin{equation}\label{Sim::ParamModel}
\resizebox{\textwidth}{!}{$\displaystyle
    \begin{aligned}
          & Y_{i}\big| M_{i,T}^{(1)},   M_{i,T}^{(2)}, D_{i,T}, Z_{i,T},L_{i0}   
         \begin{cases}
             = 0 \text{ with probability } \pi_Y \\
         \sim N\big(\gamma_{M^{(1)}}m^{(1)}_{i,T}+ \gamma_{M^{(2)}}m^{(J)}_{i,T}+ \gamma_{D}d_{i,T} + \gamma_{Z}z_{i,T}+ \gamma_{L_{0}}\ell_{i0}, \sigma^{2}_{Y}\big) \text{ with probability } 1- \pi_Y
         \end{cases}, 
         \\
        & M_{i,t}^{(2)} \big|M_{i,t}^{(1)}, M_{i,t-1}^{(2)}, D_{i,t}, Z_{i,t},L_{i0} 
         \begin{cases}
            = 0 \text{ with probability } \pi_{M_{t}^{(2)}} \\
         \sim N\big(\eta_{t, M^{(1)}} m^{(1)}_{i,t} + \eta_{t-1,M^{(2)}} m^{(2)}_{i,t-1} + \eta_{t,D}d_{i,t} + \eta_{t,Z}z_{i,t}+ \eta_{t, L_{0}}\ell_{i0}, \sigma^{2}_{M_{t}^{(2)}}\big) \text{ with probability } 1- \pi_{M_{t}^{(2)}}
        \end{cases}, 
        \\
        & M_{i,t}^{(1)} \big|M_{i,t-1}^{(1)},  M_{i,t-1}^{(2)}, D_{i,t}, Z_{i,t},L_{i0} 
         \begin{cases}
            = 0 \text{ with probability } \pi_{M_{t}^{(1)}} \\
         \sim N\big(\xi_{t,M^{(1)}}m^{(1)}_{i,t-1} + \xi_{t, M^{(2)}}m^{(2)}_{i,t-1} + \xi_{t,D}d_{i,t} + \xi_{t,Z}z_{i,t}+ \xi_{t,L_{0}}\ell_{i0}, \sigma^{2}_{M_{t}^{(J)}}\big) \text{ with probability } 1- \pi_{M_{t}^{(1)}}
        \end{cases}, 
        \\
        & D_{i,t}\big| M_{i,t-1}^{(1)}, \ldots,  M_{i,t-1}^{(j)}, D_{i,t-1}, Z_{i,t}, L_{i0}
        \sim Bern\big(\Phi(\beta_{t,M^{(1)}}m^{(1)}_{i,t-1}+  \beta_{t,M^{(2)}}m^{(2)}_{i,t-1}+ \beta_{t,D}d_{i,t-1} + \beta_{t,Z}z_{i,t}+ \beta_{t,L_0}\ell_{i0} )\big),
        \\
        & Z_{i,t}\big| M_{i,t-1}^{(1)}, \ldots,  M_{i,t-1}^{(j)}, D_{i,t-1}, Z_{i,t-1}, L_{i0}
        \sim Bern\big(\Phi(\psi_{t,M^{(1)}}m^{(1)}_{i,t-1}+  \psi_{t,M^{(2)}}m^{(2)}_{i,t-1}+ \psi_{t,D}d_{i,t-1} + \psi_{t,Z}z_{i,t-1}+ \psi_{t,L_0}\ell_{i0} )\big),
    \end{aligned}
$}
\end{equation}
where $t = 1,2,3$. Here, $\boldsymbol{\gamma}$, $\boldsymbol{\eta}_t$, $\boldsymbol{\xi}_t$, $\boldsymbol{\beta}_t$, and $\boldsymbol{\psi}_t$ denote the regression coefficient vectors for the outcome, second mediator, first mediator, treatment receipt status, and treatment assignment status models, respectively. The parameters $\pi_Y$, $\pi_{M_t^{(1)}}$, and $\pi_{M_t^{(2)}}$ represent the zero-inflation probabilities, and $\sigma^2_Y$, $\sigma^2_{M_t^{(1)}}$, and $\sigma^2_{M_t^{(2)}}$ denote the corresponding variance components. We assign weakly informative priors: $N(0, 10)$ for regression coefficients, $\text{Beta}(1, 1)$ for zero-inflation probabilities, and $\text{Inverse-Gamma}(3, 1)$ for variance parameters.

 The resulting data-generating process in  (\ref{Sim::ParamModel}) corresponds to a first-order autoregressive (AR(1)) longitudinal structural model, in which the conditional distribution of each time-varying variable at time $t$ depends on its own immediately preceding value and on contemporaneous longitudinal variables along with baseline covariates. The AR(1) specification is adopted to ensure numerical stability and to avoid multicollinearity. With $T=3$, including the multiple lagged time-varying covariates in each conditional model can induce severe multicollinearity due to the strong temporal dependence among variables. This, in turn, can lead to unstable parameter estimates, inflated posterior uncertainty, and poor mixing in Bayesian computation. The AR(1) gives a relatively complex dependence that provides a ground to evaluate the EDPM model and compare it against standard alternatives.

The parameter estimates from~(\ref{Sim::ParamModel}) serve two purposes. First, we use them to generate data replications for simulation. Second, using these estimates and the G-computation algorithm (Algorithm~\ref{Alg1:Gcomp}), we compute the "ground truth" principal interventional direct, indirect, and total effect estimates.


\subsubsection{Simulation scenarios}
We consider two simulation scenarios. Under each of these two simulation scenarios, we fit (i) a parametric model, which assumes a single-component specification as in~(\ref{Sim::ParamModel}) and therefore ignores the latent mixture structure, and (ii) the proposed EDPM model to each simulated dataset. The parametric model is correctly specified in the first scenario and misspecified in the second scenario. The EDPM model is a flexible nonparametric model. For each scenario, we generate $R = 500$ replicated datasets, each with sample sizes $n = 12{,}000$ and $15{,}000$ subjects observed at $T = 3$ time points, matching the temporal structure of the empirical study.

\begin{enumerate}[label=\textbf{Scenario \arabic*:}, leftmargin=*]
    \item \textbf{Correct Model Specification.} Data are generated from the parametric model in~(\ref{Sim::ParamModel}).

    \item \textbf{Parametric Model Misspecification.} Data are generated from a finite mixture model with $A=10$ mixture components. Specifically, each subject $i$ is first assigned to one of ten latent clusters $c_i \in \{1,\ldots,10\}$ with probabilities $(\pi_1, \ldots, \pi_{10})$, and conditional on $c_i = a$, the data for subject $i$ are generated according to the parametric model in~(\ref{Sim::ParamModel}) with cluster-specific parameters $\big(\boldsymbol{\gamma}^{(a)}, \boldsymbol{\eta}_t^{(a)}, \boldsymbol{\xi}_t^{(a)}, \boldsymbol{\beta}_t^{(a)}, \boldsymbol{\psi}_t^{(a)}, \pi_Y^{(a)}, \pi_{M_t^{(1)}}^{(a)}, \pi_{M_t^{(2)}}^{(a)}, \sigma^{2,(a)}_Y, \sigma^{2,(a)}_{M_t^{(1)}}, \sigma^{2,(a)}_{M_t^{(2)}}\big)$. The cluster-specific parameters are obtained by fitting the ten-component mixture model to the empirical data described in Section~\ref{sec:2} in the main text. The ground truth causal effects under this data-generating mechanism are computed by marginalizing over the mixture components. 
    This scenario evaluates the robustness of the EDPM model when the true data-generating mechanism exhibits substantial population-level heterogeneity that a single-component parametric model cannot capture. A secondary analysis under a three-component mixture data-generating mechanism is reported in Section~S7.
\end{enumerate}

 We note that in both simulation scenarios, neither the parametric model nor the EDPM model includes subject-specific random effects. Accordingly, causal effect estimation proceeds via the G-computation formula in Section~\ref{sec:3} rather than its random-effects counterpart in Section~\ref{sec:4} of the main text. The objective of the simulation study is to compare causal effect estimation under the parametric and EDPM frameworks. Including random effects would introduce additional computational burden, particularly in posterior sampling and numerical integration over subject-specific latent variables, without directly informing this comparison. The role of random effects within the proposed framework is explored in the case-study in Section~\ref{sec:5} of the main text, where the richer data structure and relevance of subject-specific variation justify their inclusion.

\subsubsection{Estimation procedures}

For the parametric models, we run a single Markov chain Monte Carlo (MCMC) chain with a total of $11{,}000$ iterations. We discard the first $10{,}000$ iterations as burn-in and retain the remaining $1{,}000$ posterior samples for inference. In contrast, for the EDPM models, we run $15{,}000$ MCMC iterations, discarding the first $10{,}000$ iterations as burn-in. From the remaining $5{,}000$ iterations, we retain every fifth sample, yielding $1{,}000$ posterior draws for inference.

  At each retained MCMC iteration, we implement the G-computation algorithm (Algorithm~\ref{Alg1:Gcomp}) with $M = 10{,}000$ Monte Carlo samples to compute the principal interventional effects across the four principal strata. This procedure yields $1{,}000$ posterior samples of each causal estimate per simulated dataset, from which we compute posterior means and 95\% Bayesian credible intervals. All model fitting and posterior computation are conducted using \texttt{Nimble} \citep{de2017programming, de2020nimble}.

  In the simulation studies, we run a single MCMC chain for each replicated dataset rather than four parallel chains, as used in Section~\ref{sec:5} of the main text. This choice is motivated by computational considerations: with $R = 500$ replicated datasets, each requiring posterior inference via MCMC followed by G-computation with $10{,}000$ Monte Carlo samples, running multiple chains per replication would substantially increase the already intensive computational burden.

\subsection{Additional simulation results}

\begin{figure}[!h]
\centering
\begin{subfigure}[t]{0.49\textwidth}
    \centering
    \includegraphics[width=\linewidth]{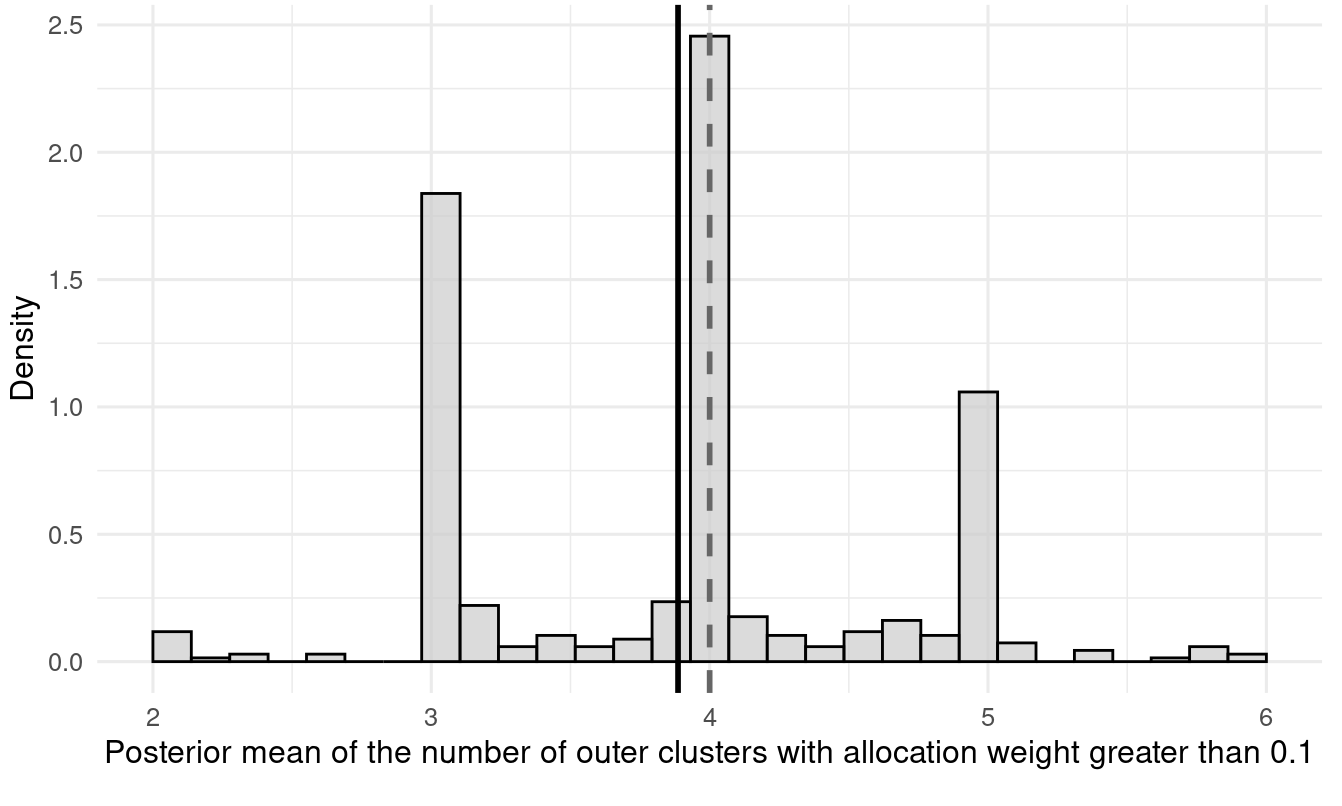}
    \caption{True parametric DGM: EDPM fitted model (n = 12000)}
    \label{Fig:Sim_TRUE_EDPM_num_outer_clusters_n12000}
\end{subfigure}
\hfill
\begin{subfigure}[t]{0.49\textwidth}
    \centering
    \includegraphics[width=\linewidth]{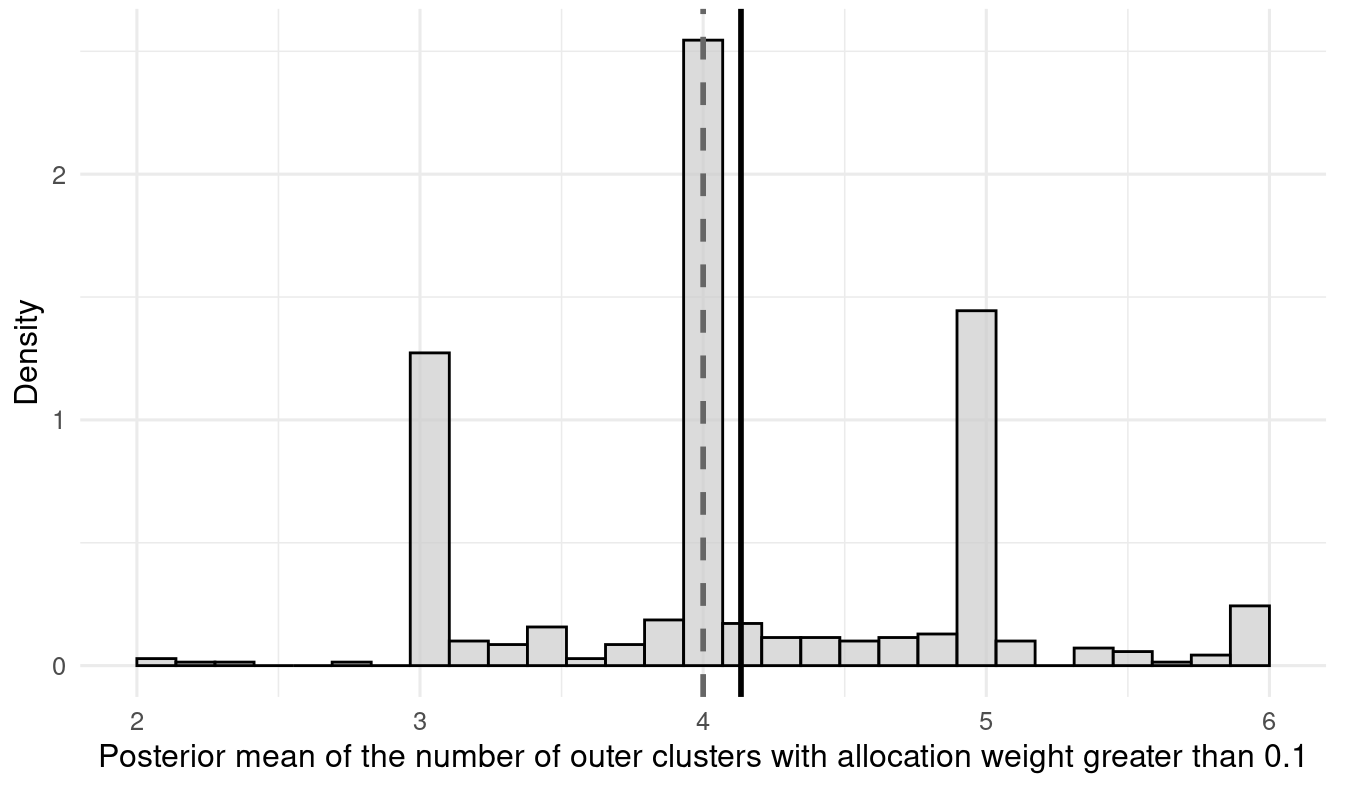}
    \caption{True parametric DGM: EDPM fitted model (n = 15000)}
    \label{Fig:Sim_TRUE_EDPM_num_outer_clusters_n15000}
\end{subfigure}
\caption{Histogram of the posterior mean of the number of significant outer clusters across $R=500$ replicated datasets. Solid black vertical lines denote the empirical mean (3.89 in the left panel and 4.13 in the right panel), while dashed gray vertical lines indicate the corresponding medians (4.00 in both left and right panels).}

\label{Fig:Sim_TRUE_EDPM_num_outer_clusters}
\end{figure}

\begin{figure}[!h]
\centering
\begin{subfigure}[t]{0.49\textwidth}
    \centering
    \includegraphics[width=\linewidth]{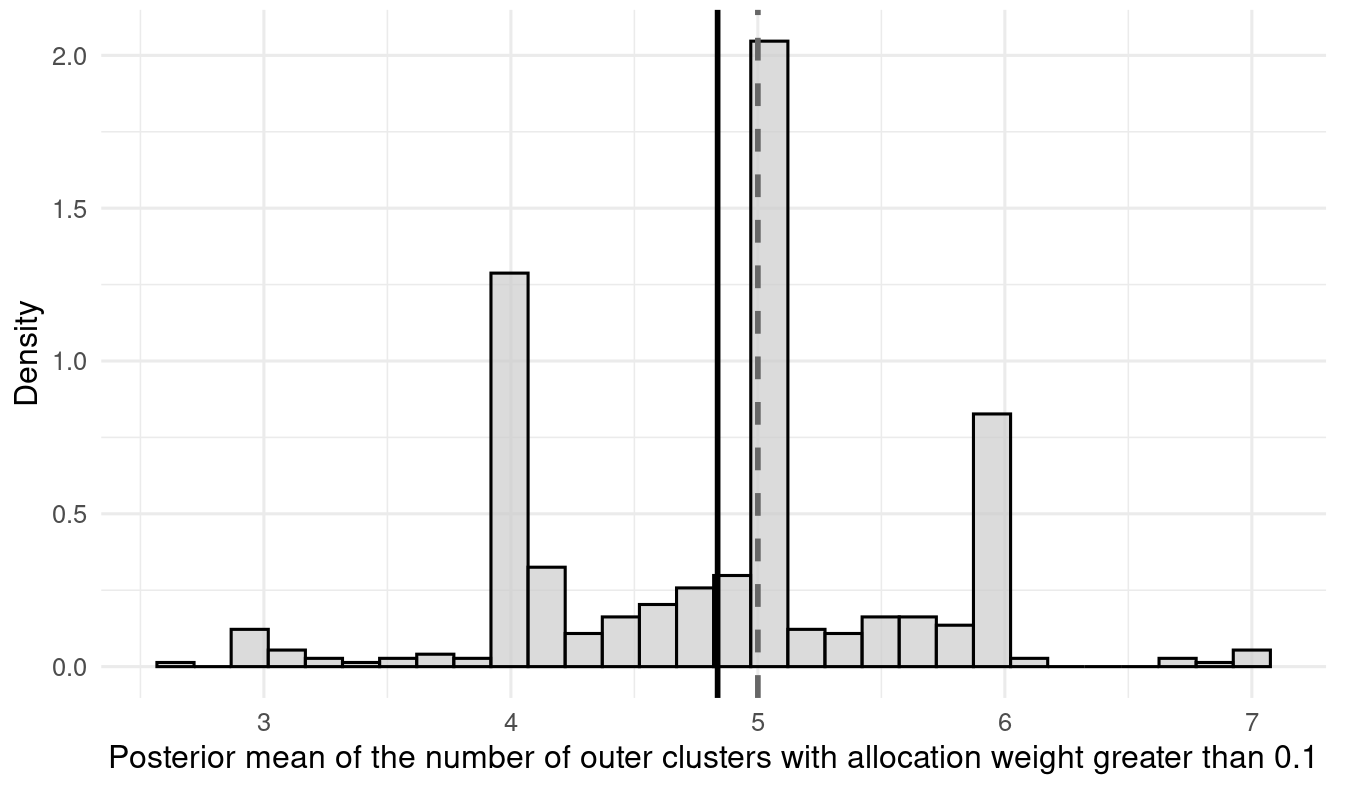}
    \caption{Misspecified parametric DGM: EDPM fitted model (n = 12000)}
    \label{Fig:Sim_MISSPECIFIED_EDPM_num_outer_clusters_n12000}
\end{subfigure}
\hfill
\begin{subfigure}[t]{0.49\textwidth}
    \centering
    \includegraphics[width=\linewidth]{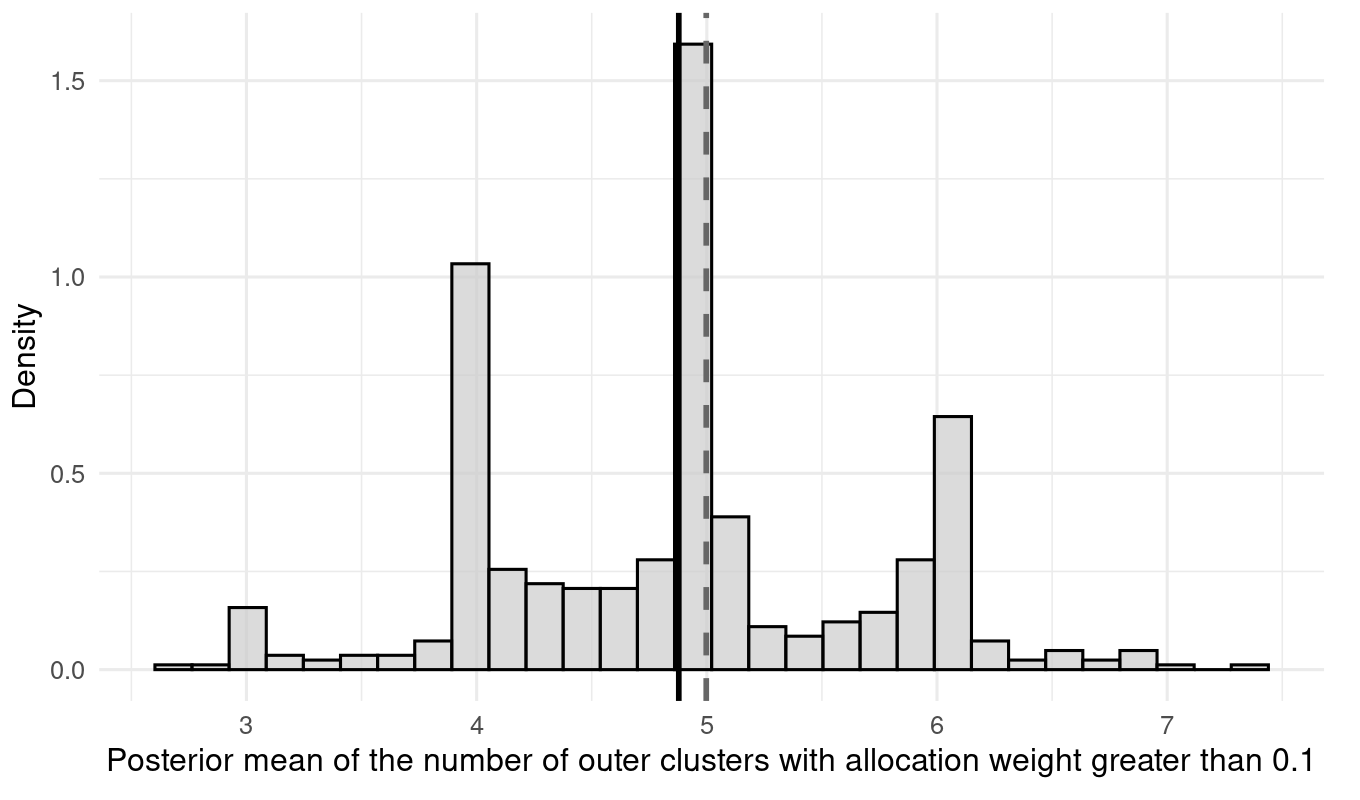}
    \caption{Misspecified parametric DGM: EDPM fitted model (n = 15000)}
    \label{Fig:Sim_MISSPECIFIED_EDPM_num_outer_clusters_n15000}
\end{subfigure}
\caption{Histogram of the posterior mean of the number of significant outer clusters across $R=500$ replicated datasets. Solid black vertical lines denote the empirical mean (4.84 in the left panel and 4.88 in the right panel), while dashed gray vertical lines indicate the corresponding medians (5.00 in both left and right panels).}

\label{Fig:Sim_MISSPECIFIED_EDPM_num_outer_clusters}
\end{figure}

\begin{table}[h]
\caption{Simulation results at sample size $n = 12{,}000$ for comparing the bias, MSE, and coverage of 95\% credible intervals of the \emph{true} parametric model versus EDPM model.}
\label{Table:SimTrueParamModel_n12000}
\resizebox{\textwidth}{!}{
\begin{tabular}{@{}lccccccccccc@{}}
\hline
Principal &  &  &  & \multicolumn{4}{c}{Bayesian parametric model} & \multicolumn{4}{c}{EDPM model} \\
\cline{5-8} \cline{9-12}
strata & n & Effects & Ground truth & Bias & $95\%$ & Average CI & MSE & Bias & $95\%$ & Average CI & MSE \\
at $t=1$ &  &  &  &  & coverage & width &  &  & coverage & width &  \\
\hline
Active & 12000 & Direct & 1.55 & 0.0008 & 0.99 & 0.57 & 0.0112 & 0.0288 &      0.94 &  0.82  &  0.0427  \\
customers &  & Indirect & -0.08 & 0.0443 & 1.00 & 1.04 & 0.0022 & 0.0213  &     0.93 &  0.14 &  0.0014 \\
 &  & Total & 1.47 & 0.0450 & 1.00 & 1.05 & 0.0135 &  0.0501 &      0.95  &  0.82 &   0.0440  \\
\cline{1-12}
Non-price & 12000 & Direct & 1.57 & -0.0003 & 0.99 & 0.58 & 0.0113 &  0.0077 &      0.95 &  0.96  &  0.0613  \\
value-attentive &  & Indirect & -0.08 & 0.0410 & 1.00 & 1.16 & 0.0021 &  0.0225  &     0.94   &  0.19  &   0.0026  \\
customers &  & Total & 1.49 & 0.0407 & 1.00 & 1.15 & 0.0133 & 0.0302  &     0.95  &   0.97 &   0.0610  \\
\cline{1-12}
Price-attentive & 12000 & Direct & 1.53 & 0.0008 & 0.99 & 0.57 & 0.0114 & 0.0483 &      0.95 &   0.98 &  0.0645  \\
customers &  & Indirect & -0.08 & 0.0476 & 1.00 & 1.05 & 0.0024 & 0.0232 &      0.93  &  0.17  &   0.0021  \\
 &  & Total & 1.45 & 0.0484 & 1.00 & 1.06 & 0.0140 &  0.0715 &      0.94  &   0.97  &  0.0674  \\
\cline{1-12}
Non-active & 12000 & Direct & 1.55 & -0.0004 & 0.99 & 0.57 & 0.0112 &  0.0262    &   0.95 &  0.82 &    0.0423  \\
customers &  & Indirect & -0.08 & 0.0448 & 1.00 & 1.03 & 0.0023 &  0.0258 &      0.91  &  0.14  &   0.0016  \\
 &  & Total & 1.47 & 0.0445 & 1.00 & 1.04 & 0.0135 & 0.0519 &   0.95   &    0.82  &  0.0437 \\
\hline
\end{tabular}}
\end{table}

\begin{table}[h]
\caption{Simulation results at sample size $n = 12{,}000$ for comparing the bias, MSE, and coverage of 95\% credible intervals of the \emph{misspecified} parametric model versus EDPM model under an $A=10$ component finite mixture data-generating mechanism.}
\label{Table:SimMisspecifiedParamModel_10comp_n12000}
\resizebox{\textwidth}{!}{
\begin{tabular}{@{}lccccccccccc@{}}
\hline
Principal &  &  &  & \multicolumn{4}{c}{Bayesian parametric model} & \multicolumn{4}{c}{EDPM model} \\
\cline{5-8} \cline{9-12}
strata & n & Effects & Ground truth & Bias & $95\%$ & Average CI & MSE & Bias & $95\%$ & Average CI & MSE \\
at $t=1$ &  &  &  &  & coverage & width &  &  & coverage & width &  \\
\hline
Active & 12000 & Direct & 1.73 & $-0.3418$ & 0.07 & 0.39 & 0.1273 & $0.2823$ & 0.69 & 0.82 & 0.1378 \\
customers &  & Indirect & 0.03 & $-0.1833$ & 0.00 & 0.10 & 0.0341 & $-0.1572$ & 0.64 & 0.48 & 0.0643 \\
 &  & Total & 1.76 & $-0.5251$ & 0.00 & 0.40 & 0.2864 & $0.1251$ & 0.84 & 0.88 & 0.1046 \\
\cline{1-12}
Non-price & 12000 & Direct & 1.74 & $-0.3419$ & 0.07 & 0.40 & 0.1275 & $0.2322$ & 0.81 & 0.91 & 0.1236 \\
value-attentive &  & Indirect & 0.04 & $-0.1963$ & 0.00 & 0.12 & 0.0394 & $-0.0913$ & 0.74 & 0.63 & 0.0815 \\
customers &  & Total & 1.78 & $-0.5381$ & 0.00 & 0.40 & 0.3005 & $0.1410$ & 0.83 & 1.00 & 0.1519 \\
\cline{1-12}
Price-attentive & 12000 & Direct & 1.71 & $-0.3403$ & 0.07 & 0.39 & 0.1264 & $0.3287$ & 0.71 & 0.93 & 0.1811 \\
customers &  & Indirect & 0.02 & $-0.1726$ & 0.00 & 0.09 & 0.0302 & $-0.1779$ & 0.62 & 0.49 & 0.0749 \\
 &  & Total & 1.74 & $-0.5129$ & 0.00 & 0.40 & 0.2739 & $0.1509$ & 0.85 & 1.00 & 0.1232 \\
\cline{1-12}
Non-active & 12000 & Direct & 1.73 & $-0.3423$ & 0.06 & 0.39 & 0.1276 & $0.2755$ & 0.68 & 0.81 & 0.1354 \\
customers &  & Indirect & 0.03 & $-0.1841$ & 0.00 & 0.10 & 0.0344 & $-0.1091$ & 0.74 & 0.57 & 0.0666 \\
 &  & Total & 1.76 & $-0.5264$ & 0.00 & 0.40 & 0.2878 & $0.1664$ & 0.81 & 0.90 & 0.1306 \\
\hline
\end{tabular}}
\end{table}

\clearpage
\section{Additional figures and tables}\label{sec:S7}

This section contains supporting figures and tables referenced in the main text. Table~S4 presents an additional model misspecification study comparing the proposed EDPM with a misspecified single-component Bayesian parametric model under a three-component finite-mixture data-generating mechanism. Figure~S3 shows the distributions of the longitudinal mediators and outcome, highlighting the substantial zero-inflation and right-skewness that motivate our hurdle-based EDPM specification. Figure~S4 reports the number of customers who opened promotional emails at each time point, illustrating the treatment non-compliance observed in the study. Figure~S5 displays posterior means and $95\%$ credible intervals for the principal interventional indirect effects across principal strata as the mediator values in the $\mathbf{Z}$ arm increase while those in the $\mathbf{Z}_{*}$ arm remain fixed at zero. Table~S5 reports causal effect estimates under a counterfactual intervention that sets treatment receipt to $D_{i2}=D_{i3}=1$ and both mediators to zero at all time points, thus isolating the marginal effect of treatment assignment. Finally, Table~S6 provides the numerical values of $\theta(\mathbf{z}_{T},\mathbf{z}_{*,T})$, $\theta(\mathbf{z}_{T},\mathbf{z}_{T})$, and $\theta(\mathbf{z}_{*,T},\mathbf{z}_{*,T})$ corresponding to Figure~\ref{fig:recipients_vs_openers_sampleSize} in the main text.

\begin{figure}[h]
    \centering
    \includegraphics[width=\linewidth]{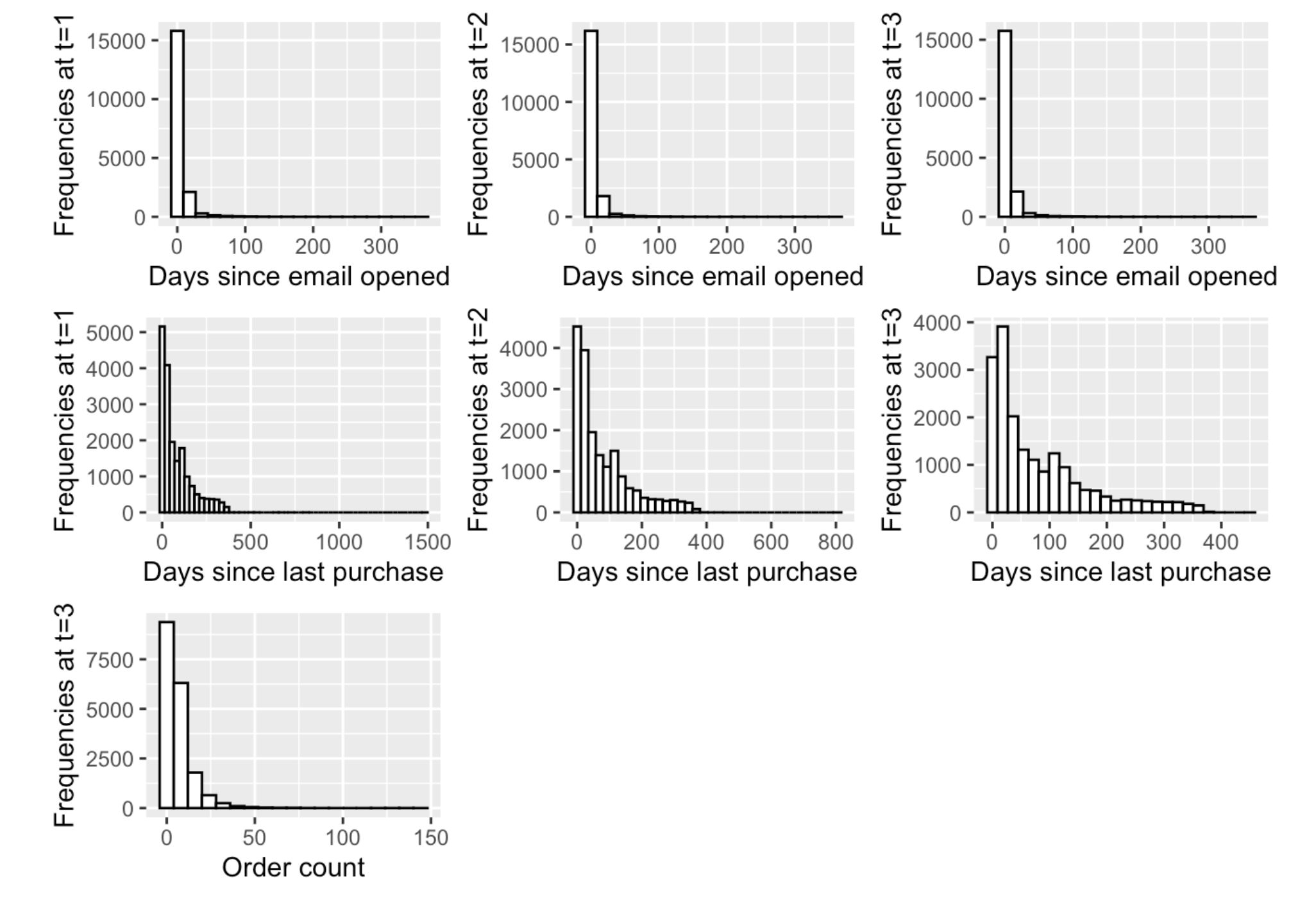}
    \caption{Histogram of observed longitudinal mediators at $t = 1,2,3$ and outcome at $t=3$.}
    \label{Fig1}
\end{figure}

\begin{figure}[h]
\centering
\includegraphics[width=0.8\linewidth]{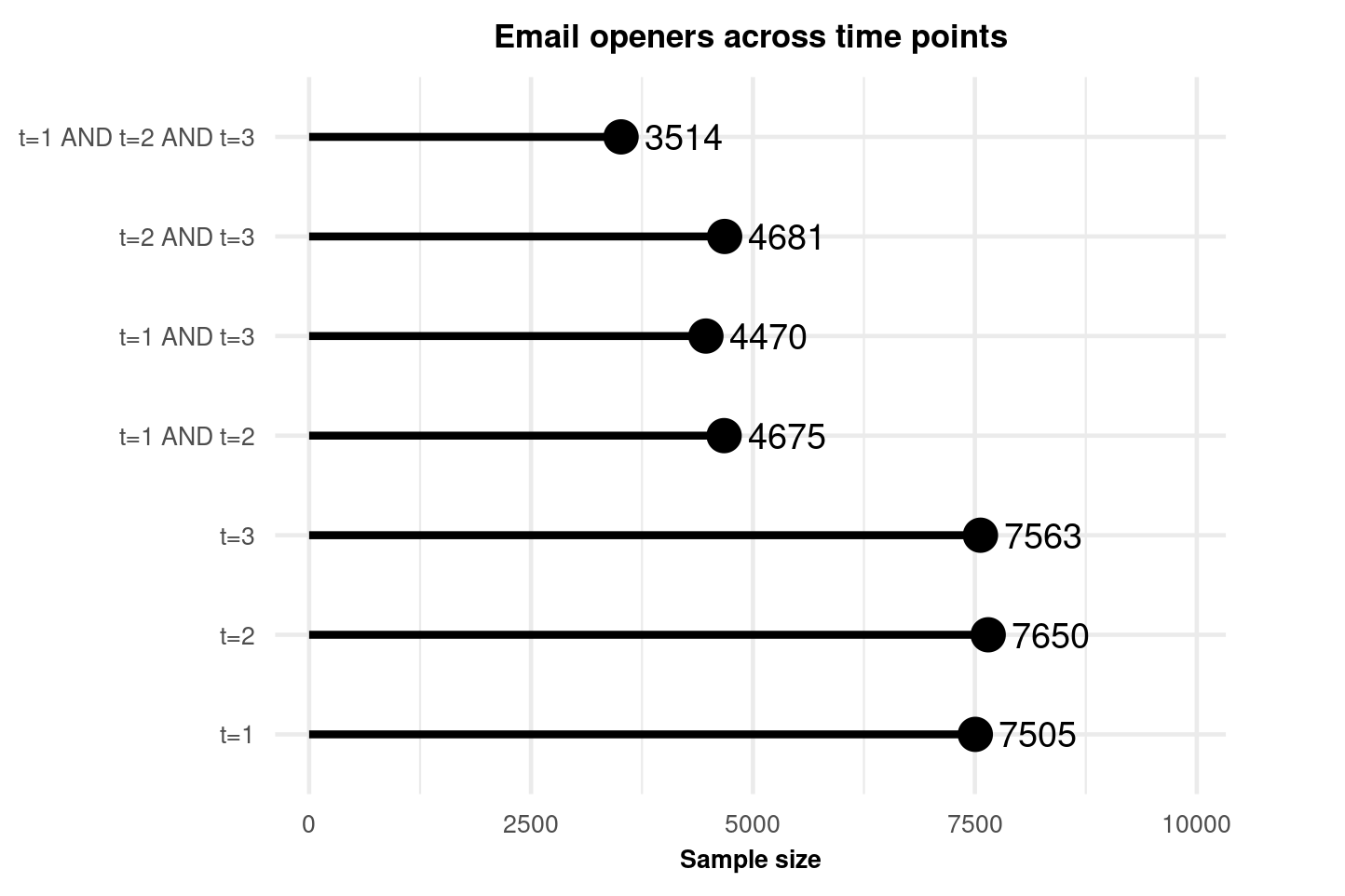}
\caption{Sample sizes of email openers at times $t = 1, 2, 3$ in the observed data.}
\label{fig:OpenerSubPopnAcrossTime}
\end{figure}

\begin{figure}[h]
    \centering
    \includegraphics[width=\linewidth]{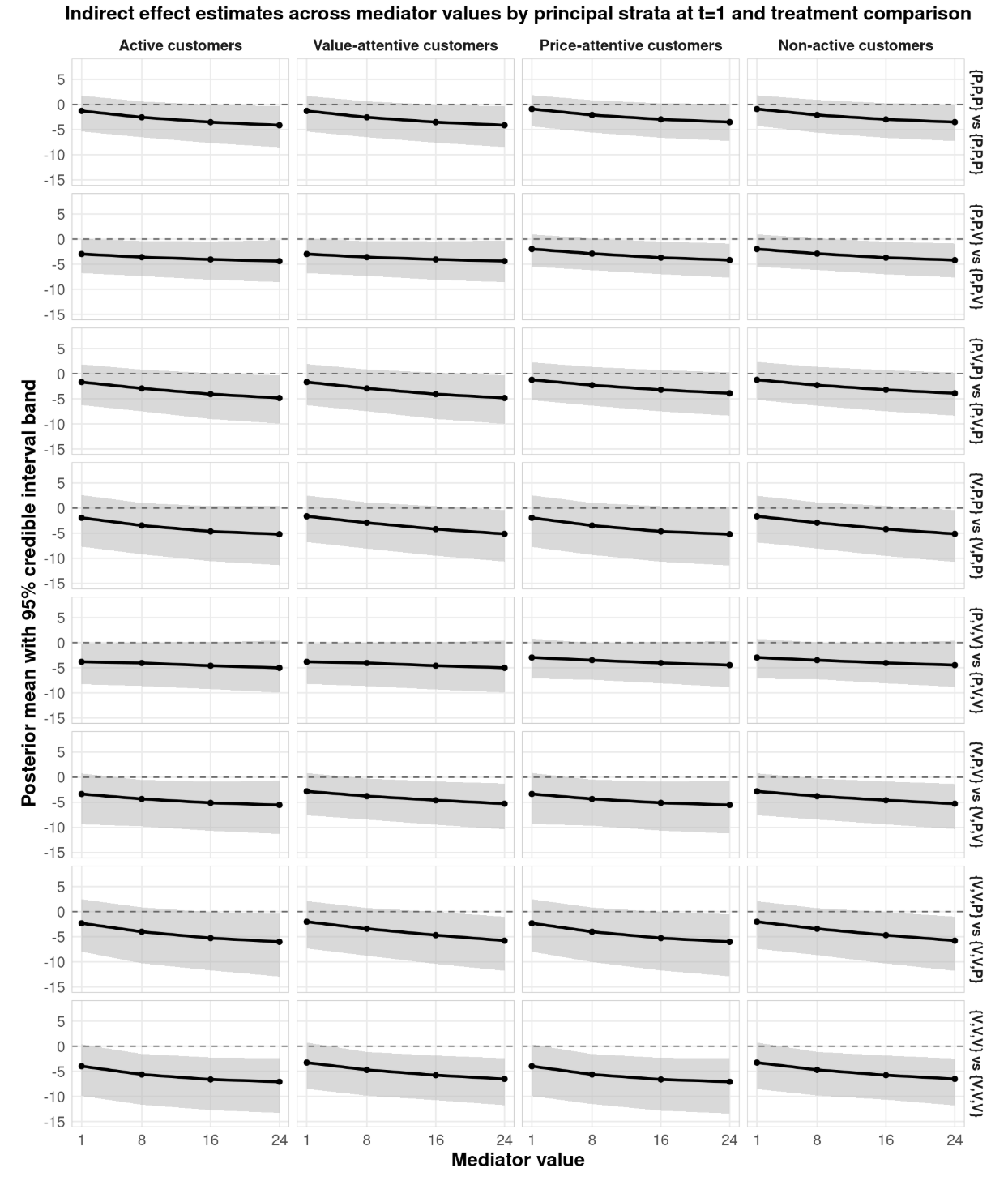}
    \caption{\small Comparison of indirect effects across principal strata at $t=1$ for different mediator values in the $\mathbf{Z}$ arm, with both mediators in the $\mathbf{Z}_{*}$ arm fixed at zero. Each plot displays the posterior mean (solid black curve) along with the corresponding $95\%$ Bayesian credible interval (grey band) for the effect estimates.}
    \label{fig:IndirectEffectEstimatesAcrossM181624.png}
\end{figure}

\begin{table}[h]
\caption{Simulation results for comparing the bias, MSE, and coverage of 95\% credible intervals of the \emph{misspecified} parametric model versus EDPM model under a three-component finite mixture data-generating mechanism (secondary model misspecification analysis).}
\label{Table:SimMisspecifiedParamModel}
\resizebox{\textwidth}{!}{
\begin{tabular}{@{}lccccccccccc@{}}
\hline
Principal &  &  &  & \multicolumn{4}{c}{Bayesian parametric model} & \multicolumn{4}{c}{EDPM model} \\
\cline{5-8} \cline{9-12}
strata & n & Effects & Ground truth & Bias & $95\%$ & Average CI & MSE & Bias & $95\%$ & Average CI & MSE \\
at $t=1$ &  &  &  &  & coverage & width &  &  & coverage & width &  \\
\hline
 & 12000 & Direct & 1.61 & -0.1153   &         0.83   &            0.45  &      0.0262 & 0.0376  &   0.91 &   0.86 &  0.0606 \\
 &  & Indirect & -0.04 & -0.0430    &        1.00 &              0.33  &      0.0020 & -0.1221  &     0.63 &                0.33   &   0.0227 \\
Active &  & Total & 1.58 & -0.1582   &         0.83  &             0.54    &    0.0376 & -0.0845 &      0.92 &                0.90   &   0.0708 \\
\cline{2-12}
customers & 15000 & Direct & 1.61 & -0.1204   &         0.81      &         0.41   &     0.0245 &  0.0579    &      0.91   &             0.78   &         0.0506  \\
 &  & Indirect & -0.04 & -0.0423   &         1.00 &              0.33 &       0.0019 &   -0.1202   &       0.57  &              0.30    &        0.0215  \\
 &  & Total & 1.58 & -0.1627   &         0.81  &             0.50  &      0.0363 &  -0.0623    &      0.91  &              0.82   &         0.0527  \\
\cline{1-12}
 & 12000 & Direct & 1.63 & -0.1132   &         0.84   &            0.46   &     0.0260 & -0.0220   &    0.93 &                0.97   &  0.0739 \\
 &  & Indirect & -0.03 & -0.0583  &          1.00   &            0.36  &      0.0037 & -0.1094   &    0.70 &                0.34   &  0.0215 \\
Non-price &  & Total & 1.60 & -0.1714  &          0.82 &              0.56   &     0.0421 & -0.1314 &      0.90  &    1.01 &  0.0950 \\
\cline{2-12}
value-attentive & 15000 & Direct & 1.63 & -0.1175 &           0.82   &            0.41  &      0.0240 &  -0.0011  &        0.94   &             0.88    &        0.0545  \\
customers &  & Indirect & -0.03 & -0.0571 &           1.00     &          0.36   &     0.0035 &  -0.1089  &        0.68   &             0.31      &      0.0202  \\
 &  & Total & 1.60 & -0.1747   &         0.81   &            0.52  &      0.0405 &  -0.1100  &        0.92  &              0.91   &         0.0700  \\
\cline{1-12}
 & 12000 & Direct & 1.60 & -0.1158   &         0.83  &             0.45   &     0.0265 & 0.0940  &     0.91  &               1.01 &   0.0886 \\
 &  & Indirect & -0.04 & -0.0297   &         1.00 &              0.34   &     0.0010 & -0.1271   &    0.61 &                0.33   &   0.0244 \\
Price-attentive &  & Total & 1.55 & -0.1455  &          0.86     &          0.55   &     0.0340 & -0.0331  &     0.92   &  1.05 & 0.0844 \\
\cline{2-12}
customers & 15000 & Direct & 1.60 & -0.1214  &          0.80     &          0.41     &   0.0248 &  0.1143  &     0.90    &            0.92   &         0.0806  \\
 &  & Indirect & -0.04 & -0.0294   &         1.00  &             0.34  &      0.0010 &  -0.1244  &     0.59   &             0.30   &         0.0226  \\
 &  & Total & 1.55 & -0.1508  &          0.85  &             0.51   &     0.0326 &  -0.0081   &    0.92  &              0.94   &         0.0673  \\
\cline{1-12}
 & 12000 & Direct & 1.62 & -0.1157   &         0.83  &             0.45   &     0.0263 & 0.0322  &     0.92 &      0.85   &    0.0586 \\
 &  & Indirect & -0.04 & -0.0436  &          1.00   &            0.32  &      0.0021 & -0.1129   &    0.65 &                0.32    &        0.0210 \\
Non-active &  & Total & 1.58 & -0.1593   &         0.84            &   0.54  &      0.0380 & -0.0807  &     0.91  &    0.89  &  0.0673 \\
\cline{2-12}
customers & 15000 & Direct & 1.62 & -0.1206  &          0.81    &           0.41  &      0.0245 &  0.0525  &     0.91    &               0.77      &         0.0488  \\
 &  & Indirect & -0.04 & -0.0429   &         1.00   &            0.32   &     0.0020 &  -0.1091  &     0.64     &              0.29     &          0.0190  \\
 &  & Total & 1.58 & -0.1635  &          0.81   &            0.50   &     0.0366 &  -0.0566    &   0.92  &                 0.81      &         0.0507 \\
\hline
\end{tabular}}
\end{table}

\begin{sidewaystable}[h]
\caption{Causal effect estimates (posterior means and $95\%$ Bayesian credible intervals) setting $D_{i2} = D_{i3} = 1$ and fixing \\ both mediators $M_{i,t}^{(1)} = M_{i,t}^{(2)} = 0$  for all $t$, with various treatment assignment regimes $\textbf{Z}_{T}$ against the reference arm $\textbf{Z}_{*,T} = \{0,0,0\}$}
\label{Table:AppendixDt2=Dt3=1fixedMt=0}
\resizebox{\textheight}{!}{%
\begin{tabular}{@{}lcrcrrrrr@{}}
\hline
 Principal strata & Effects & $\textbf{Z}_{T} = \{1,0,0\}$ & $\textbf{Z}_{T} = \{0,1,0\}$ & $\textbf{Z}_{T} = \{0,0,1\}$ & $\textbf{Z}_{T} = \{1,1,0\}$ & $\textbf{Z}_{T} = \{1,0,1\}$ &  $\textbf{Z}_{T} = \{0,1,1\}$ & $\textbf{Z}_{T} = \{1,1,1\}$  \\
at $t=1$ &  &
 & & & & & \\
\hline
{Active}  & {Direct} & 6.28($-$0.29, 13.12) & 1.79($-$0.77, 4.52) & 4.02($-$1.93, 10.03) & 6.14($-$0.76, 12.47) & 5.06($-$3.19, 13.11) & 4.12($-$2.44, 10.47) & 7.50($-$1.03, 16.12) \\
{customers}        & {Indirect}  & 0.00($-$0.16, 0.17) & 0.00($-$0.11, 0.13) & 0.00($-$0.16, 0.16) & 0.01($-$0.17, 0.18) & 0.00($-$0.15, 0.15) & 0.00($-$0.16, 0.17) & 0.00($-$0.16, 0.17) \\
          & {Total}   & 6.28($-$0.28, 13.11) & 1.79($-$0.69, 4.53) & 4.03($-$1.97, 10.01) & 6.14($-$0.58, 12.45) & 5.06($-$3.25, 13.20) & 4.12($-$2.43, 10.48) & 7.50($-$1.00, 16.20) \\[6pt]
{Non-price}   & {Direct} & 5.32($-$1.52, 11.93) & 1.79($-$0.72, 4.55) & 2.95($-$2.85, 8.27) & 5.02($-$2.00, 10.85) & 4.10($-$3.96, 11.31) & 3.16($-$3.14, 9.35) & 6.53($-$1.70, 14.94) \\
{value-attentive}  & {Indirect}  & 0.00($-$0.16, 0.16) & 0.00($-$0.12, 0.15) & 0.00($-$0.14, 0.17) & 0.00($-$0.15, 0.19) & 0.00($-$0.14, 0.14) & 0.00($-$0.16, 0.16) & 0.00($-$0.16, 0.16) \\
{customers}    & {Total}   & 5.32($-$1.47, 11.98) & 1.79($-$0.77, 4.56) & 2.95($-$2.95, 8.36) & 5.02($-$2.01, 10.76) & 4.10($-$3.93, 11.25) & 3.16($-$3.16, 9.32) & 6.53($-$1.65, 14.96) \\[6pt]
{Price-attentive} & {Direct } & 6.99($-$0.07, 13.68) & 1.73($-$0.62, 4.38) & 4.78($-$1.23, 11.24) & 6.89(0.15, 13.64) & 5.82($-$1.54, 14.00) & 4.87($-$1.70, 11.43) & 8.25(0.37, 16.83) \\
 {customers}      & {Indirect}  & 0.00($-$0.17, 0.17) & 0.00($-$0.11, 0.11) & 0.00($-$0.17, 0.17) & 0.00($-$0.18, 0.19) & $-$0.01($-$0.16, 0.14) & 0.00($-$0.17, 0.17) & 0.00($-$0.17, 0.17) \\
          & {Total}   & 6.98($-$0.12, 13.62) & 1.73($-$0.59, 4.34) & 4.78($-$1.21, 11.21) & 6.89(0.14, 13.62) & 5.81($-$1.58, 13.88) & 4.87($-$1.68, 11.51) & 8.25(0.34, 16.75) \\[6pt]
{Non-active}  & {Direct} & 6.02($-$0.36, 12.56) & 1.73($-$0.56, 4.29) & 3.70($-$1.68, 8.99) & 5.77($-$0.56, 11.56) & 4.85($-$2.08, 11.93) & 3.90($-$2.48, 9.93) & 7.29($-$0.37, 15.19) \\
{customers}    & {Indirect}  & 0.00($-$0.15, 0.15) & 0.00($-$0.11, 0.10) & 0.00($-$0.13, 0.16) & 0.00($-$0.15, 0.15) & 0.00($-$0.14, 0.14) & 0.00($-$0.15, 0.15) & 0.00($-$0.15, 0.15) \\
          & {Total}   & 6.02($-$0.36, 12.52) & 1.73($-$0.58, 4.28) & 3.70($-$1.69, 9.00) & 5.77($-$0.61, 11.64) & 4.86($-$2.07, 12.04) & 3.90($-$2.50, 10.03) & 7.29($-$0.35, 15.26) \\[6pt]
\hline
\end{tabular}%
}%
\end{sidewaystable}

\begin{sidewaystable}[h]
\caption{Posterior means and $95\%$ Bayesian credible intervals for parameters $\theta(\textbf{z}_{T}, \textbf{z}_{*,T})$, $\theta(\textbf{z}_{T}, \textbf{z}_{T})$, and $\theta(\textbf{z}_{*,T}, \textbf{z}_{*,T})$,  with various treatment \\ assignment regimes $\textbf{z}_{T}$ against the reference arm $\textbf{z}_{*,T} = \{0,0,0\}$}
\label{Table:AppendixThetaZZstar}
\resizebox{\textheight}{!}{%
\begin{tabular}{@{}lcrcrrrrr@{}}
\hline
 Principal strata & Parameter & $\textbf{z}_{T} = \{1,0,0\}$ & $\textbf{z}_{T} = \{0,1,0\}$ & $\textbf{z}_{T} = \{0,0,1\}$ & $\textbf{z}_{T} = \{1,1,0\}$ & $\textbf{z}_{T} = \{1,0,1\}$ &  $\textbf{z}_{T} = \{0,1,1\}$ & $\textbf{z}_{T} = \{1,1,1\}$  \\
at $t=1$ &  &  
 & & & & & \\
\hline
{Active}  & {$\theta(\textbf{z}_{T}, \textbf{z}_{*,T})$} & 7.36(6.22, 9.45) & 6.65(5.75, 8.37) & 6.07(5.44, 7.04) & 8.72(7.48, 10.96) & 7.88(7.13, 9.28) & 7.36(6.62, 8.49) & 9.40(8.56, 10.78) \\
{customers} & {$\theta(\textbf{z}_{T}, \textbf{z}_{T})$} & 7.32(6.16, 9.80) & 6.26(5.35, 7.87) & 5.95(5.28, 7.01) & 8.17(6.94, 10.58) & 7.31(6.92, 7.88) & 6.73(6.05, 7.92) & 8.25(7.69, 8.95) \\
        & {$\theta(\textbf{z}_{*,T}, \textbf{z}_{*,T})$}   & 5.50(4.68, 6.87) & 5.50(4.72, 6.89) & 5.50(4.71, 6.91) & 5.50(4.73, 6.90) & 5.50(4.70, 6.89) & 5.50(4.70, 6.88) & 5.50(4.69, 6.91) \\[6pt]
{Non-price} & {$\theta(\textbf{z}_{T}, \textbf{z}_{*,T})$} & 7.33(6.34, 9.03) & 6.65(5.74, 8.33) & 6.07(5.44, 7.07) & 8.70(7.62, 10.56) & 7.94(7.23, 9.14) & 7.36(6.63, 8.46) & 9.47(8.70, 10.69) \\
{value-attentive} &{$\theta(\textbf{z}_{T}, \textbf{z}_{T})$}  & 7.06(6.00, 9.43) & 6.25(5.35, 7.81) & 5.95(5.28, 7.05) & 7.94(6.80, 10.18) & 7.18(6.82, 7.73) & 6.73(6.04, 7.90) & 8.05(7.56, 8.70) \\
{customers}    & {$\theta(\textbf{z}_{*,T}, \textbf{z}_{*,T})$}   & 5.50(4.69, 6.84) & 5.50(4.68, 6.91) & 5.50(4.70, 6.95) & 5.50(4.70, 6.86) & 5.50(4.72, 6.91) & 5.50(4.69, 6.88) & 5.50(4.70, 6.91) \\[6pt]
{Price-attentive} & {$\theta(\textbf{z}_{T}, \textbf{z}_{*,T})$} & 7.48(6.29, 9.93) & 6.76(5.83, 8.34) & 6.20(5.55, 7.22) & 8.84(7.54, 11.34) & 7.96(7.19, 9.29) & 7.50(6.75, 8.60) & 9.48(8.62, 10.83) \\
 {customers}   & {$\theta(\textbf{z}_{T}, \textbf{z}_{T})$}  & 7.32(6.17, 9.79) & 6.35(5.46, 7.79) & 6.05(5.40, 7.00) & 8.17(6.97, 10.65) & 7.31(6.91, 7.91) & 6.82(6.15, 7.81) & 8.25(7.68, 8.93) \\
          & {$\theta(\textbf{z}_{*,T}, \textbf{z}_{*,T})$}   & 5.61(4.76, 7.01) & 5.61(4.78, 6.99) & 5.60(4.76, 6.95) & 5.61(4.78, 6.97) & 5.60(4.77, 6.97) & 5.60(4.76, 6.98) & 5.60(4.77, 6.97) \\[6pt]
{Non-active}  & {$\theta(\textbf{z}_{T}, \textbf{z}_{*,T})$} & 7.45(6.39, 9.53) & 6.76(5.83, 8.34) & 6.20(5.54, 7.18) & 8.82(7.68, 11.11) & 8.02(7.29, 9.12) & 7.50(6.75, 8.60) & 9.54(8.76, 10.74) \\
{customers}    & {$\theta(\textbf{z}_{T}, \textbf{z}_{T})$}  & 7.06(6.02, 9.40) & 6.35(5.48, 7.86) & 6.05(5.40, 7.02) & 7.94(6.81, 10.30) & 7.18(6.82, 7.73) & 6.82(6.14, 7.85) & 8.05(7.57, 8.69) \\
          & {$\theta(\textbf{z}_{*,T}, \textbf{z}_{*,T})$}   & 5.60(4.79, 7.00) & 5.60(4.78, 6.93) & 5.60(4.79, 6.97) & 5.60(4.77, 6.99) & 5.60(4.78, 7.02) & 5.60(4.76, 6.97) & 5.60(4.78, 6.98) \\[6pt]
\hline
\end{tabular}%
}
\end{sidewaystable}

\end{document}